\newcommand{\osum}[2]{\smashoperator[r]{\sum_{#1}} {#2}}
\newtheorem{theorem}{Theorem}
\newtheorem{lemma}{Lemma}
\newtheorem{definition}{Definition}
\newtheorem{corollary}{Corollary}
\newtheorem*{remark}{Remark}
\DeclarePairedDelimiterX{\infdivx}[2]{(}{)}{%
	#1\;\delimsize\|\;#2%
}
\DeclarePairedDelimiterX{\condx}[2]{(}{)}{%
	#1\;\delimsize|\;#2 %
}
\newcommand{\infdiv}{D\infdivx}
\newcommand{\cond}{H\condx}
\DeclarePairedDelimiter\abs{\lvert}{\rvert}%
\DeclarePairedDelimiter{\norm}{\lVert}{\rVert}
\DeclarePairedDelimiter{\para}{(}{)}
\let\oldabs\abs
\def\abs{\@ifstar{\oldabs}{\oldabs*}}
\let\oldnorm\norm
\def\norm{\@ifstar{\oldnorm}{\oldnorm*}}
\newcommand*{\rom}[1]{\expandafter\@slowromancap\romannumeral #1@}
\newcommand{\proofpart}[2]{%
	\par
	\addvspace{\medskipamount}%
	\noindent\emph{Step #1: #2}\par\nobreak
	\addvspace{\smallskipamount}%
	\@afterheading
}
\newcommand{\E}{\mathrm{E}}
\newcommand{\Var}{\mathrm{Var}}
\newcommand{\Cov}{\mathrm{Cov}}
\title{Privacy-aware Distributed Hypothesis Testing in Gray-Wyner Network with Side Information}
\author{\IEEEauthorblockN{Reza Abbasalipour\IEEEauthorrefmark{1},
		Mahtab Mirmohseni\IEEEauthorrefmark{2}}
	\\
	\IEEEauthorblockA{Electrical Engineering Department, Sharif university of Technology\\
		Email: \IEEEauthorrefmark{1}reza.abbasalipour@ee.sharif.edu,
		\IEEEauthorrefmark{2}mirmohseni@sharif.edu}
	}
\begin{document}
	
	\maketitle
	
	\begin{abstract}
	The problem of distributed binary hypothesis testing in the Gray-Wyner network with side information is studied in this paper. An observer has access to a discrete memoryless and stationary source and describes its observation to two detectors via one common and two private channels. The channels are considered error-free but rate-limited. Each detector also has access to its own discrete memoryless and stationary source, i.e., the side information. The goal is to perform two distinct binary hypothesis testings on the joint distribution of observations at detectors. Additionally, the observer aims to keep a correlated latent source private against the detectors. Equivocation is used as the measure of the privacy preserved for the latent source. An achievable inner bound is derived for the general case by introducing a non-asymptotic account of the output statistics of the random binning. 
	\end{abstract}
	
	\section{Introduction}
	The problem of distributed hypothesis testing (HT) in the presence of privacy considerations for the Gray-Wyner network with side-information is investigated in this paper. The model consists of three nodes, one known as the observer and the other two known as detectors, where each has access to a separate discrete memory-less source. The observer describes its own observation to the two detectors via a network comprised of one common and two private noiseless and rate-limited channels, namely the Gray-Wyner network. Each detector, who also has access to local side information, then performs a unique simple hypothesis testing on the joint distribution of their own observation and those of the observer based on the description they have received through the channels. \\
	The observer is also interested in maintaining a level of privacy against the detectors for some latent memory-less sources correlated with the observations. These goals, performing effective hypothesis testing and maintaining privacy, seem to be contradictory and thus form a natural trade-off. If the observer provides no description to the detectors, this purpose of privacy is achieved completely. Yet, the detectors cannot perform distributed hypothesis testing based on the observation of the observer. On the other hand, if the observer can provide a perfect description, i.e., the observation itself, the result is a local hypothesis testing with an optimal solution, but the intended privacy is not preserved. In this paper, we characterize this fundamental trade-off between the communication rate of the channels, the performance achieved for the hypothesis testing, and the privacy of the observer's data. \\ 
	Our approach in addressing the hypothesis testing follows that of the Chernoff-Stein regime\cite[Section 11.8]{Cover2009CommunicationIN}. We introduce a feasible scheme and characterize its errors regarding the HT problem. The first type error is shown to be vanishing, and then the best achievable error exponent for the second type error is calculated. The goal is to acquire an error exponent for the second type error by suggesting an achievable scheme, and optimality results have been remained to be discussed in future works.   \\
	For that very purpose, we first provide a modified version of the output statistics of random binning (OSRB) framework introduced in \cite{yas14} to be used in our proposed method. Using this framework, we craft a dual problem corresponding to the original problem of distributed hypothesis testing for our network. Subsequently, the error probabilities are derived for the dual problem, which is blessed with well-defined probabilistic characteristics, almost effortlessly. Then by exhibiting the proximity of the distribution of dual setting to that of the original problem, the desired results are obtained. \\
	An advantage of such an approach is that it inherently utilizes a stochastic encoder that preserves the sources' privacy to some extent; therefore, there is no need for an additional randomizer block to deal with privacy concerns. We examine the obtained privacy in terms of equivocation measures. To our knowledge, the first use of a stochastic encoder to preserve privacy in distributed hypothesis testing was in \cite{sreekumar20} which used a likelihood encoder introduced by \cite{Song16} to maintain privacy in the Wyner-Ziv network. Prior to that, most attempts were involved adding a block to the encoder to provide an adequate obfuscation of the source observation against the detector. \\
	\subsection{Background}
	The hypothesis testing in statistics and information theory were seemingly two separate problems traditionally until recently, where many studies introduced new approaches in which they probed into statistic inference problems such as hypothesis testing using an information theory framework. Suppose one is trying to observe the data traffic in two different links and decide whether or not their traffic coincide. In the classic statistics, It is only natural that the decision making, a binary hypothesis testing in this case, needs information from both links. This means one has to send the entire traffic from at least one link to a single point for the decision to be made, a costly trivial scheme. The question that arises is that are there any other schemes that achieve the same accurate response, without having to communicate a description of the order of the data? Communication resource is a new bottleneck in this problem, coined as distributed hypothesis testing. \\
	A unified version of this problem was formulated and studied in \cite{Ahl86} where the communication bottleneck postulated as an error-free and rate-limited channel in a network similar to that of Wyner-Ziv with the addition of the side information at the detector. Although \cite{Ahl86} introduced an optimum multi-letter description of the problem, the single letter results were confined to inequalities. \cite{Han87} and \cite{sha94} improved upon these results and proved tighter bounds. \cite{rah12} devised a novel approach, built on the previous results, and proved that binning schemes yield optimum single-letter descriptions for some special cases of distributed hypothesis testing.
	Two significant expansions of this problem are the generalization of the distributed hypothesis testing to more complex networks and the introduction of the concept of privacy to the Wyner-Ziv network with side information. Among them are \cite{esc18, esc19, saleh18}, which analyzes setups with more than two entities. The concept of privacy of one legitimate entity's data against other legitimate parties is introduced in \cite{gilani19} and \cite{sreekumar20} and partly characterized. Also, \cite{liao16, liao17, liao18} investigated different privacy settings in a setup where the communication constraints are lifted.

	\subsection{Main contributions}
	This paper considers both above expansions in a single setup. To the best of our knowledge, privacy concerns have not been studied before in networks with more than two entities. One reason might be that the mathematical complexity of private distributed hypothesis testing, which is already conspicuous in the simple Wyner-Ziv network, tends to grow exponentially when more complex setups are considered. We propose a novel method based on the duality to manage the complex nature of the problem.
	\begin{enumerate}
		\item
		We introduce an approach to deal with distributed hypothesis testing problems based on the concept of duality in binning schemes \cite{yas14}.
		\item
		We establish a non-asymptotic account of output statistics of random binning and prove an achievable rate of decay. The results, which are to be used in our method, concur with \cite{yas14} in the asymptotic regime. 
		\item
		We characterize an inner bound for the general case of distributed hypothesis testing in the Gray-Wyner network with side information in the presence of privacy considerations. 
	\end{enumerate} 
	~\\
	The rest of the paper is as follows. In Section \ref{problemf}, notations and definitions to be used in this paper as well as an extensive description of the system model is introduced. In Section \ref{main}, the main results achieved in this paper are stated and then, in Section \ref{proof}, our method of choice and proof to the main results are investigated.\\
	
	\section{Preliminaries and System Model}\label{problemf}
	\subsection{Notations an Definitions}
	Here, we provide some basic notations as well as some definitions to be used in the sequel.
	We only consider discrete random variables with finite support sets. Random variables are referred to by capital letters, e.g., $X, Y$, their realization by lower case letters, e.g., $x, y$, and their support set by Calligraphic letters, e.g., $\mathcal{X}, \mathcal{Y}$. 
	A sequence of random variables $(X_i,...,X_j)$ is denoted by $X^{j}_{i}$ and its realization by $x^{j}_{i}$. In case when $i=1$ we use an abbreviated form $X^{j}$ and its corresponding realization $x^{j}$ for $(X_1,...,X_j)$. 
	Also we use $X_{\mathcal{S}}$ to denote $\{X_{j}: j \in \mathcal{S}\}$.
	The probability distribution of random variables $X$ and $Y$ is depicted as $p_{X,Y}$, their marginal distributions are denoted by $p_X$ and $p_Y$, and we use $p_{Y|X}$ to show the conditional probability distribution. 
	Sometimes we omit the argument from the notation of random variables when they match the subscription, e.g., $p_{Y|X}(y|x)=p_{Y|X}$, to keep the notation simple.
	The probability simplex of random variables $X$ and $Y$ is manifested by $\mathcal{P}(\mathcal{X} \times \mathcal{Y})$. \\
	We use $p^{U}_{\mathcal{S}}$ to refer to a uniform distribution over $\mathcal{S}$. Also $p(x^{n})$ is used for product distribution, i.e., $\prod_{i=1}^{n}p(x_i)$, unless otherwise stated. The $\mathbbm{1}(\cdot)$ refers to the indicator function. We use $H(X)$ and $\cond{X}{Y}$ to show the entropy and the conditional entropy, respectively, when the distribution of the $(X,Y)$ is clear from the context. Otherwise, we add a subscription to the notation to clarify the distribution of the random variables, e.g., $H_{p_X}(X)$ and $H_{p_{X,Y}}\condx{X}{Y}$ indicate that $(X,Y)$ is distributed according to $p_{X,Y}$ with $p_{X}$ as the marginal distribution.
		We also take advantage of the concept of random probability mass function (pmf) for discrete random variables. Random pmf of a random variable $X$ is denoted by capital letter $P_X$, so one can distinguish between pmfs and random pmfs. $P_X$ is a probability distribution over $\mathcal{P}(\mathcal{X})$.

	We first present some useful definitions. 
	\begin{definition}[Total variation distance]
		Assume $p_X$ and $q_X$ are two probability distributions on $\mathcal{X}$. The total variation distance between $p_X$ and $q_X$ is,
		\begin{equation}
			\norm{p_X - q_X}_{TV} \coloneqq \frac{1}{2}\sum_{x \in \mathcal{X}} \abs{p_X (x)- q_X(x)}.
		\end{equation}
	\end{definition}

	\begin{definition}
		For two probability mass functions $p_X$ and $q_X$ on $\mathcal{X}$, we say 			that $p_X \stackrel{\delta}{\approx} q_X$ if
		\begin{equation}
			\abs{p_X(a)-q_X(a)} < \delta \quad \text{for every} \quad a \in \mathcal{X}.
		\end{equation}
	\end{definition}

	\begin{definition}[$n$-Type]
		For any positive integer $n$, a probability mass function $p_{\bar{X}} \in \mathcal{P}(\mathcal{X})$ is referred to as an $n$-Type if for every $a \in \mathcal{X}$
		\begin{equation}
			p_{\bar{X}}(a) \in \left\{0, \frac{1}{n} , \frac{2}{n}, ..., 1\right\},
		\end{equation}
	and the set of all such $n$-types is denoted by $\mathcal{P}_{n}(\mathcal{X}) \subset \mathcal{P}(\mathcal{X})$.
	\end{definition}
	
	\begin{definition}[Type of a Sequence]
		For any positive integer $n$, the type of a sequence $x^{n} \in \mathcal{X}^{n}$ is an $n$-Type $p_{\bar{X}} \in \mathcal{P}_{n}(\mathcal{X})$, satisfying
		\begin{equation}
			p_{\bar{X}}(a) \coloneqq \frac{1}{n}\sum_{i=1}^{n}\mathbbm{1}(x_{i}=a) \quad \text{for every} \quad a \in \mathcal{X}.
		\end{equation}
	\end{definition}

	\begin{remark}
			If $x^{n}$ is a sample of $n$ observations, the type of $x^{n}$ is also called the empirical distribution of the sample $x^{n}$.
	\end{remark}
	\begin{remark}
		 The  joint type of a pair of sequences $x^{n} \in \mathcal{X}^{n}$ and $y^{n} \in \mathcal{Y}^{n}$ is defined to be the type of $\{(x_i, y_i)\}^{n}_{i=1} \in \mathcal{X}^{n} \times \mathcal{Y}^{n}$.
	\end{remark}
	\begin{remark}
		Since we make use of $n$-Types frequently in this paper, we reserve the bar notation for $n$-types to avoid any ambiguity. For example, $\bar{X} \sim p_{\bar{X}}$ depicts a random variable with the characteristics that $p_{\bar{X}} \in \mathcal{P}_{n}(\mathcal{X})$.
	\end{remark}

	\begin{definition}[Type Class]
		Having fixed an $n$-Type $p_{\bar{X}} \in \mathcal{P}_{n}(\mathcal{X})$, the set of all sequences $x^{n} \in \mathcal{X}^{n}$ whose type is $p_{\bar{X}}$ is called the type class of $p_{\bar{X}}$ and is denoted by $\mathcal{T}^{n}_{p_{\bar{X}}} \subset \mathcal{X}^{n}$.
	\end{definition}

	It's also possible to render a joint type of $\{(x_i, y_i)\}^{n}_{i=1} \in \mathcal{X}^{n} \times \mathcal{Y}^{n}$ by the type of $x^{n}$ and a stochastic matrix $p_{\bar{Y}|\bar{X}}: \mathcal{Y} \rightarrow \mathcal{X}$. The set of all such stochastic matrices is denoted by $\mathcal{P}(\mathcal{Y}|\mathcal{X})$.
	\begin{definition}[Conditional Type]
		Given $x^{n} \in \mathcal{T}^{n}_{p_{\bar{X}}}$, we say that a stochastic matrix $p_{\bar{Y}|\bar{X}}: \mathcal{Y} \rightarrow \mathcal{X} \in \mathcal{P}(\mathcal{Y}|\mathcal{X})$ is the conditional type of $y^{n} \in \mathcal{Y}^{n}$ if for every $(a, b) \in \mathcal{X} \times \mathcal{Y}$
		\begin{equation}
			p_{\bar{X},\bar{Y}}(a,b) = p_{\bar{Y}|\bar{X}}(b|a)p_{\bar{X}}(a),
		\end{equation} 
	where $p_{\bar{X},\bar{Y}}(a,b)$ is the joint type of $(x^{n},y^{n})$. The set of all conditional types, given $x^{n} \in \mathcal{T}^{n}_{p_{\bar{X}}}$, is denoted by $\mathcal{P}_{n}(\mathcal{Y}|p_{\bar{X}})$.
	\end{definition}
	
	\begin{remark}
		Given $x^{n} \in \mathcal{T}^{n}_{p_{\bar{X}}}$, the set of all conditional types, $\mathcal{P}_{n}(\mathcal{Y}|p_{\bar{X}})$, depends on $x^{n}$ only through its type. Thus, $x^{n}$ is omitted from the notation of $\mathcal{P}_{n}(\mathcal{Y}|p_{\bar{X}})$. 
	\end{remark}

	\begin{definition}[Conditional Type Class]
		Given a conditional type $p_{\bar{Y}|\bar{X}} \in \mathcal{P}_{n}(\mathcal{Y}|p_{\bar{X}})$, the set of all sequences $y^{n} \in \mathcal{Y}^{n}$ whose conditional type, given $x^{n} \in \mathcal{T}^{n}_{p_{\bar{X}}}$, is $p_{\bar{Y}|\bar{X}}$ is called the conditional type class of $p_{\bar{Y}|\bar{X}}$ and is depicted by $\mathcal{T}^{n}_{p_{\bar{Y}|\bar{X}}}(x^{n})$.
	\end{definition}
	
	\begin{remark}
		The size of a conditional type class, namely $\mathcal{T}^{n}_{p_{\bar{Y}|\bar{X}}}(x^{n})$, depends on $x^{n}$ only through its type.
	\end{remark}

	\begin{definition}[Constant-Composition Distribution]
		For a fixed integer $n$, suppose we are given an $n$-type $p_{\bar{X}}$. A constant-composition distribution on $\mathcal{X}^{n}$ according to the $p_{\bar{X}}$ is defined as:
		\begin{equation}
			p(x^{n}) = \frac{1}{\abs{\mathcal{T}^{n}_{p_{\bar{X}}}}}\mathbbm{1}\left\{
			x^{n} \in \mathcal{T}^{n}_{p_{\bar{X}}}
			\right\}.
		\end{equation}
	\end{definition}
	
%	We also make use of the concept of typical sequences. 
%	\begin{definition}[P-typical]
%		content...
%	\end{definition}

	\subsection{System Model and Problem Formulation}
	We consider the problem of distributed hypothesis testing in the Gray-Wyner network with side information in the presence of \textit{privacy} considerations, which we refer to as the GWP problem. Assume a tuple of discrete memoryless stationary sources $(X^{n},Z^{n}_{1},Z^{n}_{2},S^{n}_{1},S^{n}_{2})$ distributed on the discrete set $\mathcal{X}^{n} \times \mathcal{Z}^{n}_{1} \times \mathcal{Z}^{n}_{2} \times \mathcal{S}^{n}_{1} \times\mathcal{S}^{n}_{2}$. The observer observes $(X^{n},S^{n}_{1},S^{n}_{2})$, the first detector has access to $Z^{n}_{1}$ and the second detector has access to $Z^{n}_{2}$. The goal is to perform a hypothesis testing on $(X^{n},Z^{n}_{1},Z^{n}_{2})$ while preserving the privacy of $(S^{n}_{1},S^{n}_{2})$ against the detectors. Upon observing $X^{n}$, the observer generates three message indices $(M_0,M_1,M_2)$ using $(M_0,M_1,M_2) = f_{n}(X^{n})$, where $f^{n}: \ \mathcal{X}^{n} \rightarrow \mathcal{M}_0 \times \mathcal{M}_1 \times \mathcal{M}_2$ and $\mathcal{M}_{i} \coloneqq [2^{nR_{i}}]$ for $i \in \{1,2,3\}$. The index $M_{j}$ for $j \in \{1,2\}$ is sent to Detector $j$ through a private channel. Alongside them, the message index $M_0$ is sent to both the detectors through a common channel. All channels are assumed to be error-free. Also note that $f^{n}$ could be a stochastic function.
	Now that the detector $j \in \{1,2\}$ has access to $(M_0,M_j,Z^n_j)$, it can take advantage of a decoding function to perform the desired hypothesis testing. Also the detectors do not have any direct access to $(S^{n}_{1},S^{n}_{2})$, but detector $j$ is interested in obtaining as much information as possible about $S^{n}_{j}$, an goal that the observer deprecates and tries to keep out of reach. 
	\\

	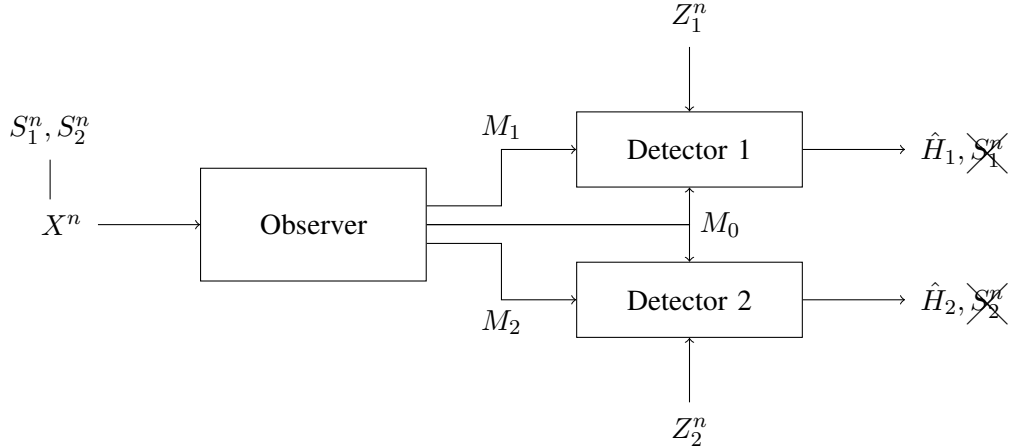
\begin{figure}[!h]
	\centering
	\begin{tikzpicture}

		\node (Tr) at (0,0) [shape=rectangle, draw, minimum width=3cm, minimum height=1.5cm] {Observer};
		
		\node (Rx1) at (5,1) [shape=rectangle, draw, minimum width=3cm, minimum height=1cm] {Detector 1};
		\node (Rx2) at (5,-1) [shape=rectangle, draw, minimum width=3cm, minimum height=1cm] {Detector 2};
		
		\draw [->] (Tr.east) ++(0,0.25) -- ($(Tr)+(2.5,0.25)$) |- node[above] {$M_1$} (Rx1.west);
		\draw [->] (Tr.east) ++(0,-0.25) -- ($(Tr)+(2.5,-0.25)$) |- node[below] {$M_2$} (Rx2.west);
		
		\draw [->] (Tr.east) -| node[right] {$M_0$} (Rx1.south);
		\draw [->] (Tr.east) -| (Rx2.north);
		
		\node (source0) at ($(Tr)+(-3,0)$) [label={[label distance=-2mm]left: {$X^{n}$}}] {};  
		\draw [->] (source0) -- (Tr.west) {};
		
		\node (latent) at ($(source0)+(-0.5,1)$) [label={[label distance=-2mm]above: {$S^{n}_{1},S^{n}_{2}$}}] {};  
		\draw [-] (latent.south) -- ($(source0.north)+(-0.5,0.2)$) {};
		
		\node (res1) at ($(Rx1)+(3,0)$) [label={[label distance=-2mm]right: {$\hat{H}_{1}, \xcancel{S^{n}_{1}}$}}] {};  
		\node (res2) at ($(Rx2)+(3,0)$) [label={[label distance=-2mm]right: {$\hat{H}_{2}, \xcancel{S^{n}_{2}}$}}] {};  
		\draw [->] (Rx1.east) -- (res1);
		\draw [->] (Rx2.east) -- (res2);
		
		\node (source1) at ($(Rx1)+(0,1.5)$) [label={[label distance=-2mm]above: {$Z^{n}_{1}$}}] {};  
		\node (source2) at ($(Rx2)+(0,-1.5)$) [label={[label distance=-2mm]below: {$Z^{n}_{2}$}}] {};  
		\draw [->] (source1.south) -- (Rx1.north);
		\draw [->] (source2.north) -- (Rx2.south);
		
	\end{tikzpicture}
	\caption{Setup of the GWP problem}
\end{figure}
	
	We are considering the binary hypothesis testing in which there are only two hypotheses. The hypothesis test is performed by each of the detectors on the joint distribution of  $(X^{n},Z^{n}_{1},Z^{n}_{2})$ where the null hypothesis is,
	\[
		H_0: \ (X^{n},Z^{n}_{1},Z^{n}_{2}) \sim  \prod_{i=1}^{n}p_{X,Z_1,Z_2},
	\]
	and the alternate hypothesis is,
	\[
	H_1: \ (X^{n},Z^{n}_{1},Z^{n}_{2}) \sim  \prod_{i=1}^{n}q_{X,Z_1,Z_2}.
	\]
	The true hypothesis random variable is denoted by $H$ and the output of the hypothesis testing by each of the detectors is depicted as $\hat{H}_{i}$ for detector $i \in \{1,2\}$. Since the first detector only observes $Z^{n}_{1}$ and a function of the $X^{n}$, it must perform the hypothesis testing on the marginal distribution on $(X^{n},Z^{n}_{1})$ by using $g^{n}_{1}: \ \mathcal{Z}^{n}_1 \times \mathcal{M}_0 \times \mathcal{M}_1 \rightarrow \{0,1\}$ as the decision rule which outputs
	\[
	\hat{H}_{1} = g^{n}_{1}(Z^{n}_{1},M_0,M_1).
	\]
	The second detector performs the same hypothesis testing on $(X^{n},Z^{n}_{2})$ using $g^{n}_{2}: \ \mathcal{Z}^{n}_2 \times \mathcal{M}_0 \times \mathcal{M}_2 \rightarrow \{0,1\}$ as  
	\[
	\hat{H}_{2} = g^{n}_{2}(Z^{n}_{2},M_0,M_2).
	\]
	The type \rom{1} and type \rom{2} type errors are defined as 
	\[
	\alpha_{n,i}(f^{n},g^{n}_{i}) \coloneqq \Pr\left(\hat{H}_{i}=1|H=0\right) \quad \text{for} \quad i \in \{1,2\},
	\]
	and
	\[
	\beta_{n,i}(f^{n},g^{n}_{i}) \coloneqq \Pr\left(\hat{H}_{i}=0|H=1\right) \quad \text{for} \quad i \in \{1,2\},
	\]
	respectively, where $i$ refers to the detector $i \in \{1,2\}$. 
	Notice that should the marginal distributions of the two hypotheses be different, the observer and the two detectors can conveniently and independently perform the hypothesis test on marginal distributions based on their local observations, yielding a vanishing type \rom{1} errors and an exponential type \rom2 errors fading to zero. In this paper, we assume that the two hypotheses distributions have a same marginal distribution i.e., $p_X=q_X$. % and $p_{Z_{1},Z_{2}}=q_{Z_{1},Z_{2}}$%  
	We measure the performance of a hypothesis testing scheme by measuring the achievable exponent for type \rom{2} errors i.e. $-\frac{1}{n}\log(\beta_{n,i}(f^{n},g^{n}_{i}))$, having fixed upper bounds for type \rom{1} errors.  
	Given a constraint set $(\epsilon_{n,1},\epsilon_{n,2})$ on the type \rom{1} errors, we are looking for an scheme with feasible type \rom{1} errors and the best achievable type \rom{2} error exponent pair, namely $(-\frac{1}{n}\log(\beta_{n,1}(f^{n},g^{n}_{1})),-\frac{1}{n}\log(\beta_{n,2}(f^{n},g^{n}_{2})))$.
	\\
	As we mentioned earlier, another aspect to this problem is that the first detector is curious about the latent random variable $S^{n}_{1}$ while the second detector is focused on the information it can obtain about $S^{n}_{2}$. The pair $(S^{n}_{1},S^{n}_{2})$ is constructed in an i.i.d manner whose one-shot marginal distribution $p_{S_{1},S_{2}}$ is consistent regardless of the true hypothesis. As we desire to conceal $S^{n}_{1}$ from the first detector and $S^{n}_{2}$ from the second one, we call $(S^{n}_{1},S^{n}_{2})$ the private part of the observation at the observer or simply the private data. We use \textit{equivocation} defined as $\frac{1}{n}H(S^{n}_{i}|Z^{n}_{i},M_0,M_i)$ for $i \in \{1,2\}$ for the measure of privacy. The perfect privacy is achieved if we have $H(S^{n}_{i}|Z^{n}_{i},M_0,M_i)=H(S^{n}_{i}|Z^{n}_{i})$ i.e.,
	\[
		I \condx{S^{n}_{i};M_0,M_i}{Z^{n}_{i}} = 0 \quad for \quad i \in \{1,2\}.
	\]
	The goal is to achieve the best error-exponent for the type \rom{2} error while preserving the constraints on the type \rom{1} errors and a certain level of the privacy for private data against the detectors. To attain such a goal, first we need to define achievability criteria for the problem.
	
	\begin{definition} \label{achievable}
		Assume a rate vector $\boldsymbol{R}=(R_1,R_2,R_3) \in \mathbbm{R}^{3}_{+}$, a privacy vector $\boldsymbol{\Lambda}=(\Lambda_1,\Lambda_2) \in \mathbbm{R}^{2}_{+}$, and a type \rom{2} error exponent vector $\boldsymbol{\theta}=(\theta_1,\theta_2)  \in \mathbbm{R}^{2}_{+}$.
		For a specified type \rom{1} error constraint, $\boldsymbol{\epsilon}=(\epsilon_1,\epsilon_2) \in [0, 1)^2$, the tuple $(\boldsymbol{\theta}, \boldsymbol{R},\boldsymbol{\Lambda})$ is achievable if there exists a sequence of encoder and decoder functions $(f^{n},g^{n}_{1},g^{n}_{2})$ such that,
		\begin{align}
			&\limsup_{n \rightarrow \infty} \alpha_{n,i}(f^{n},g^{n}_{i}) \leq \epsilon_i \quad \text{for} \quad i \in \{1,2\}, \\
			&\limsup_{n \rightarrow \infty} -\frac{1}{n}\log{\beta_{n,i}(f^{n},g^{n}_{i})} \geq \theta_i \quad \text{for} \quad i \in \{1,2\}, \\
			&H(S^{n}_{i}|Z^{n}_{i},M_0,M_i) \geq n\Lambda_i \quad \text{for} \quad i \in \{1,2\}.
		\end{align}
		The achievable region $\mathcal{R}(\boldsymbol{\epsilon})$ is the closure of the set of all achievable tuples $(\boldsymbol{\theta}, \boldsymbol{R},\boldsymbol{\Lambda})$, given a specific $\boldsymbol{\epsilon}$.
	\end{definition}

	In the next section, we are going to introduce an inner bound on the $\mathcal{R}(\boldsymbol{\epsilon})$.

	\section{Main result}\label{main}
	The following theorem provides the main result of this paper by devising an inner bound on $\mathcal{R}(\boldsymbol{\epsilon})$. 
	
	\begin{theorem}\thlabel{mainresult}
		Given $\boldsymbol{\epsilon}=(\epsilon_1,\epsilon_2) \in [0, 1)^2$, the $(\boldsymbol{\theta}, \boldsymbol{R},\boldsymbol{\Lambda}) \in \mathcal{R}(\boldsymbol{\epsilon})$ is achievable, if there exist auxiliary random variables $Y_{[0:2]}$ with $p_{Y_{[0:2]}|X}$ such that the following conditions hold:
		\begin{align}
			\theta_j &\leq \theta^{*}_{j}, \\
			\Lambda_i &\leq H(S_i|Z_i,Y_{0},Y_{i}),
		\end{align}
		\begin{equation}
			\begin{gathered}
				R_{0} > \max_{i \in \{1,2\}}\{ I(X;Y_{0}|Z_{i}) - I(Y_{0},Y_{i}|Z_{i}) \},\\
				R_{1} > I(X;Y_{1}|Z_{1}) - I(Y_{0},Y_{1}|Z_{1}),\\
				R_{2} > I(X;Y_{2}|Z_{2}) - I(Y_{0},Y_{2}|Z_{2}),\\
				R_{0} + R_{1} > I(X;Y_{0}Y_{1}|Z_{1}), \\
				R_{0} + R_{2} >  I(X;Y_{0}Y_{2}|Z_{2}), \\
				R_{0} + R_{1} > I(X;Y_{0}|Z_{2}) + I(X;Y_{1}|Y_{0}Z_{1}) - I(Y_{0};Y_{2}|Z_{2}),\\
				R_{0} + R_{2} > I(X;Y_{0}|Z_{1}) + I(X;Y_{2}|Y_{0}Z_{2}) - I(Y_{0};Y_{1}|Z_{1}),\\
				R_{1} + R_{2} > I(X;Y_{1}|Y_{0}Z_{1}) + I(X;Y_{2}|Y_{0}Z_{2}) + I(Y_{1};Y_{2}|XY_{0}) - I(Y_{1}Y_{2};Y_{0}|X),\\
				R_{0} + R_{1} + R_{2} >   I(X;Y_{1}|Y_{0}Z_{1}) + I(X;Y_{2}|Y_{0}Z_{2}) + \max_{i \in \{1,2\}}\{I(Y_{0};X|Z_{i})\} + I(Y_{1};Y_{2}|XY_{0}),\\
				2R_{0}+R_{1}+R_{2} >I(X;Y_{1}|Y_{0}Z_{1})\!+\!I(X;Y_{2}|Y_{0}Z_{2})\!+\!I(Y_{0};X|Z_{1})\!+\!I(Y_{0};X|Z_{2})\!+\!I(Y_{1};Y_{2}|XY_{0}),
			\end{gathered}
		\end{equation}
		for $j \in \{1,2\}$, where
		\begin{equation*}
			\begin{gathered}
				\theta^{*}_{j} = \min\left\{
				E_{0,j}(p_{Y_{[0:2]}|X}),
				E_{1,j}(p_{Y_{[0:2]}|X}),
				E_{2,j}(p_{Y_{[0:2]}|X})
				\right\},
				\\
				E_{0,j}(p_{Y_{[0:2]}|X}) \coloneqq
				\min_{\pi_{X,Y_{[0:2]},Z_j} \in \mathcal{K}_{0}} \infdiv*{\pi_{X,Y_{[0:2]},Z_j}}{q_{X,Z_j}p_{Y_{[0:2]}|X}},
				\\
				E_{1,j}(p_{Y_{[0:2]}|X})
				\coloneqq
				\min_{\pi_{X,Y_{[0:2]},Z_j} \in \mathcal{K}_{1,j}} \infdiv*{\pi_{X,Y_{[0:2]},Z_j}}{q_{X,Z_j}p_{Y_{[0:2]}|X}}
				+
				\min_{\emptyset \neq \mathcal{S} \subseteq \{0,j\}}
				\para*{\sum_{i \in \mathcal{S}}R_i + \tilde{R}_i - H(Y_{\mathcal{S}}|Z_j,Y_{\mathcal{S}^{c}})},
				\\
				E_{2,j}(p_{Y_{[0:2]}|X}) \coloneqq 
				\min_{\pi_{X,Y_{[0:2]},Z_j} \in \mathcal{K}_{2,j}}
				\left \{
				\infdiv*{\pi_{X,Y_{[0:2]},Z_j}}{q_{X,Z_j}p_{Y_{[0:2]}|X}}
				+
				\frac{1}{2}\left[
				\min_{\mathcal{S} \subseteq [0:2]}
				\left(
				H_{\pi}(Y_{\mathcal{S}}|X) -
				\sum_{i \in \mathcal{S}}\tilde{R}_{i} 
				\right)
				\right]^{+}
				\right\},
			\end{gathered}
		\end{equation*}
		and
		\begin{equation*}
			\begin{gathered}
				\mathcal{K}_{0,j}= \left\{
				\pi_{X,Y_{[0:2]},Z_j} \in \mathcal{P}\left(\mathcal{X}\times\mathcal{Y}_{[0:2]}\times\mathcal{Z}_j\right) \ : \
				\pi_{X,Y_{[0:2]}} = p_{X,Y_{[0:2]}} \wedge
				\pi_{Y_{0},Y_{j},Z_j}=p_{Y_{0},Y_{j},Z_j}
				\right\},
				\\
				\mathcal{K}_{1,j}= \left\{
				\pi_{X,Y_{[0:2]},Z_j} \in \mathcal{P}\left(\mathcal{X}\times\mathcal{Y}_{[0:2]}\times\mathcal{Z}_j\right) \ : \
				\pi_{X,Y_{[0:2]}} = p_{X,Y_{[0:2]}} \wedge
				\pi_{Z_j}=p_{Z_j}
				\right\},
				\\
				\mathcal{K}_{2,j}= \left\{
				\pi_{X,Y_{[0:2]},Z_j} \in \mathcal{P}\left(\mathcal{X}\times\mathcal{Y}_{[0:2]}\times\mathcal{Z}_j\right) \ : \
				\pi_{Z_j}=p_{Z_j}
				\right\},
			\end{gathered}
		\end{equation*}
	\begin{equation*}
		\begin{gathered} 
			\tilde{R}_{0} < H(Y_{0}|X), \\
			\tilde{R}_{1} < H(Y_{1}|X), \\
			\tilde{R}_{2} < H(Y_{2}|X), \\
			\tilde{R}_{0}  + \tilde{R}_{1} < H(Y_{0}Y_{1}|X), \\
			\tilde{R}_{0}  + \tilde{R}_{2} < H(Y_{0}Y_{2}|X), \\
			\tilde{R}_{1}  + \tilde{R}_{2} < H(Y_{1}Y_{2}|X), \\
			\tilde{R}_{0}  + \tilde{R}_{1} + \tilde{R}_{2} <  H(Y_{0}Y_{1}Y_{2}|X).
		\end{gathered}
	\end{equation*}
	\end{theorem}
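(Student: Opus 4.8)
The plan is to reduce the GWP problem to a companion "dual" problem with well-behaved random-binning statistics, following the OSRB philosophy of \cite{yas14} adapted to constant-composition codebooks. First I would fix an $n$-type $p_{\bar{X}}$ close to $p_X$ and auxiliary conditional types $p_{\bar Y_{[0:2]}|\bar X}$ approximating $p_{Y_{[0:2]}|X}$, and generate the codebooks $Y_0^n,Y_1^n,Y_2^n$ by constant-composition random coding over those types at rates $\tilde R_0,\tilde R_1,\tilde R_2$; each $Y_i^n$ is then independently binned at rate $R_i$ to produce the message $M_i$. This induces a random pmf $P$ on all variables. The dual/source-coding side-constraints $\tilde R_{\mathcal S}<H(Y_{\mathcal S}|X)$ guarantee (via the non-asymptotic OSRB lemma established earlier in the paper) that under $P$ the codeword triple looks almost i.i.d.\ $\sim p_{Y_{[0:2]}|X}$ given $X^n$; the binning-rate constraints $R_i - \tilde R_i$ vs.\ $H(Y_i|Z_j,\cdot)$ (the first block of displayed inequalities) guarantee, again via OSRB, that detector $j$ can reconstruct $(Y_0^n,Y_j^n)$ from $(Z_j^n,M_0,M_j)$ with vanishing error, so that the detector's effective observation in the dual world is jointly typical with $p_{X,Y_{[0:2]},Z_j}$ under $H_0$.

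\textbf{Key steps, in order.} (1) Set up the dual random-binning pmf and invoke the non-asymptotic OSRB bound to show $\lVert P - \Gamma\rVert_{TV}$ is exponentially small, where $\Gamma$ is the idealized pmf under which messages are uniform and independent of $(X^n,S_1^n,S_2^n)$ and the codewords are conditionally i.i.d. (2) Define the detector's decision rule in the true problem to be the natural typicality test: detector $j$ decodes $(\hat Y_0^n,\hat Y_j^n)$ from $(Z_j^n,M_0,M_j)$ and declares $H_0$ iff the decoded tuple together with $Z_j^n$ lies in the type class/typical set consistent with $p_{X,Y_{[0:2]},Z_j}$. (3) \emph{Type I error:} under $H_0$ and under $\Gamma$, decoding succeeds and joint typicality holds w.h.p., so $\alpha_{n,j}\to 0$; transfer this to $P$ and then to the true distribution $p$ using the TV bound from Step 1 plus a change-of-measure argument (the true distribution and $P$ agree on the source statistics, only the stochastic encoder differs). (4) \emph{Type II error:} under $H_1$ the pair $(X^n,Z_j^n)\sim q^{\otimes n}$, and one must bound the probability that the (mismatched) decoder still outputs something jointly typical with the $H_0$ law. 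This is where the three exponents $E_{0,j},E_{1,j},E_{2,j}$ come from: $E_{0,j}$ is the cost of the detector's typicality-test constraint set $\mathcal K_{0}$ (a Sanov/Hoeffding-type exponent for the event that a $q$-distributed pair fakes the $p$-joint type through the reconstructed codewords), $E_{1,j}$ and $E_{2,j}$ add the binning-collision penalties $\sum_{i\in\mathcal S}(R_i+\tilde R_i) - H(Y_{\mathcal S}|Z_j,Y_{\mathcal S^c})$ and $\tfrac12[\,H_\pi(Y_{\mathcal S}|X)-\sum\tilde R_i\,]^+$ respectively, accounting for the event that a wrong codeword is accepted because it happens to fall in the observed bins. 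Taking the worst case over the relevant constraint polytopes $\mathcal K_{0},\mathcal K_{1,j},\mathcal K_{2,j}$ and minimizing over them yields $\beta_{n,j}\le 2^{-n(\theta^*_j-o(1))}$. (5) \emph{Privacy:} under $\Gamma$ the messages $(M_0,M_j)$ are (nearly) independent of $S_j^n$ given $Z_j^n$, so $\tfrac1n H_\Gamma(S_j^n|Z_j^n,M_0,M_j)\approx H(S_j|Z_j)$; the rate slack $R_i-\tilde R_i$ on the binning is exactly what lets the leakage be controlled, and by continuity of entropy under small TV perturbation (a standard lemma, which I would invoke) the equivocation under $p$ is at least $n(H(S_j|Z_j,Y_0,Y_j)-o(1))$, giving $\Lambda_j$. (6) Finally, take $n\to\infty$ along a sequence of types converging to $p_{Y_{[0:2]}|X}$, and pass to the closure to conclude $(\boldsymbol\theta,\boldsymbol R,\boldsymbol\Lambda)\in\mathcal R(\boldsymbol\epsilon)$.

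\textbf{Main obstacle.} The delicate part is Step 4: correctly identifying the constraint sets $\mathcal K_{0},\mathcal K_{1,j},\mathcal K_{2,j}$ and the associated collision penalties so that the second-type error exponent is a genuine lower bound, not merely a heuristic. One must carefully track, for each nonempty subset $\mathcal S\subseteq\{0,j\}$ (and $\mathcal S\subseteq[0:2]$ for the codebook-collision term), the probability that under $q^{\otimes n}$ a spurious codeword tuple is simultaneously (i) in the bins indicated by the received messages and (ii) jointly typical with $Z_j^n$ under the $H_0$ law — and this must be done \emph{after} the change of measure from $p$ to $P$ to $\Gamma$, so the exponents must be shown robust to the $o(1)$ TV errors. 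A secondary subtlety is that the two detectors share $M_0$ and use correlated codebooks $Y_0^n$, so the Type II analyses for $j=1$ and $j=2$ are not independent; however, since each detector's error is analyzed separately and the theorem only asks for the pair of marginal exponents, this coupling does not actually bite — it only means the single codebook $Y_0^n$ must simultaneously satisfy both detectors' constraints, which is already encoded in the $\max_{i\in\{1,2\}}$ terms in the rate region. The remaining steps (OSRB TV bound, typicality Type I analysis, entropy-continuity privacy bound) are, modulo bookkeeping, routine given the non-asymptotic OSRB lemma proved earlier in the paper.
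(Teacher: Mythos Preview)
Your overall plan --- OSRB dual problem, typicality-based decision at each detector, Sanov/collision analysis for the Type~II exponent, entropy-continuity for the equivocation --- is exactly the paper's approach, and the privacy step (condition on $(Y_0^n,Y_j^n)$, invoke continuity of conditional entropy under small TV, then use the product structure of the dual) matches the paper line for line.

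There is, however, a genuine gap in your account of where $E_{2,j}$ comes from. In the paper, the Type~II analysis \emph{inside the dual problem} (Protocol~A) yields only $E_{0,j}$ and $E_{1,j}$: the former is the Sanov cost of the true pair $(Y_0^n,Y_j^n)$ being $p$-typical with $Z_j^n$ under $q$, the latter the Sanov-plus-collision cost for some other pair in the observed bins being typical (subsets $\emptyset\neq\mathcal S\subseteq\{0,j\}$). The third exponent $E_{2,j}$ is \emph{not} a codeword-collision term over $\mathcal S\subseteq[0{:}2]$ inside the dual; it is the price of the change of measure from the dual back to the real problem, and it is \emph{not} an $o(1)$ slop as you suggest but a genuine exponent that can be the binding one. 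Concretely: because the acceptance region forces $Z_j^n$ to be $p_{Z_j}$-typical (call this event $\Psi_{0,j}$), the paper writes
\[
\hat\beta_{n,j}\;\le\;\beta_{n,j}^{(\mathrm A)}\;+\;q(\Psi_{0,j})\,\bigl\|\hat P(\cdot\mid\Psi_{0,j})-P(\cdot\mid\Psi_{0,j})\bigr\|_{TV},
\]
bounds $q(\Psi_{0,j})$ by Sanov, and bounds the conditioned TV by the \emph{constant-composition} OSRB exponent $\aleph$ (\thref{OSRBnewconditioned}, not the i.i.d.\ version). The two exponents add, and the sum is exactly $E_{2,j}$ with the constraint set $\mathcal K_{2,j}=\{\pi:\pi_{Z_j}=p_{Z_j}\}$. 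If you treat the OSRB TV bound as merely $o(1)$ and try to locate $E_{2,j}$ inside the dual's Type~II analysis, you will not recover it, and the resulting Type~II exponent will be wrong.

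Two smaller corrections. First, the paper does not build constant-composition codebooks at rates $\tilde R_i$; in Protocol~A the tuple $(X^n,Y^n_{[0:2]})$ is generated i.i.d.\ and each $Y_i^n$ is binned \emph{twice}, at rates $R_i$ (messages $M_i$) and $\tilde R_i$ (indices $F_i$). In Protocol~B the $F_i$ become shared randomness, the stochastic encoder is the induced posterior $P(y^n_{[0:2]}\mid x^n,f_{[0:2]})$, and at the end one fixes a good realization of the binnings and of $F_{[0:2]}$. Second, the Slepian--Wolf-type decodability conditions read $R_i+\tilde R_i>H(Y_i\mid\cdot)$ (sum, not difference); the Fourier--Motzkin elimination of the $\tilde R_i$ subject to the OSRB constraints $\sum_{i\in\mathcal S}\tilde R_i<H(Y_{\mathcal S}\mid X)$ is what produces the displayed rate region on $(R_0,R_1,R_2)$.
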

	\begin{remark}
		In this paper, we only consider the problem for the Gray-Wyner network, which we call GWP. However, since the proof offers a comprehensive framework for different setups, in view of the fact that our approach doesn't concern the specific features of the Gray-Wyner network, the proof could be applied to other networks almost effortlessly. 
	\end{remark}
	
	\section{Proof of the Main Result}\label{proof}
	To prove that we can achieve the specific exponent for the type \rom{2} errors' rate of decay while maintaining a vanishing type \rom{1} errors, stated in \thref{mainresult}, we propose a scheme for the GWP setup and then evaluate the probability of its error events induced by its distribution.\\
	The scheme is comprised of an encoder and two separate decoders for each of the detectors, which will be introduced in the subsequent parts of the proof. Since privacy is another issue to consider, the encoder is a stochastic block that takes advantage of a few random binning blocks. The resulted distribution is a random pmf, meaning that we have to show the probability of errors satisfy the constraints in \thref{mainresult} in the mean and then deduce that there are fixed encoders and decoders that also are consistent with the constraints. \\
	The random pmf induced by the random mappings and the stochastic characteristics of the proposed encoder is not easy to evaluate. On the other hand, the random mappings behave smoothly in the mean with a tractable distribution which can be dealt with easily. Suppose we can show that the random pmf induced by the encoder has concentration properties. In that case, we can craft a dual setup with a distribution similar to the mean distribution of the encoder. Then we can evaluate the probability of error events in the dual problem more easily. Consequently, using the concentration properties of the encoder's random pmf, we can show that the results are also applicable to the main problem by making some adjustments. \\
	To follow this approach, first, in Subsection \ref{osrbtheorem}, we ascertain the aforementioned concentration properties of the distributed random binning, and then proceed, in Subsection \ref{ourapproach}, to complete the proof by introducing a dual problem for the GWP setup, evaluating the error events in the dual problem, and attributing the results to the GWP setup, as described. \\
	Finally, we find a lower bound on the equivocation measure of our private data by using the same method as the error exponents in Subsection \ref{privacy}. We first find a lower bound on the equivocation measure in the dual problem and then ascertain that the results are roughly applicable to the main problem.

	\subsection{Non-asymptotic output statistics of random binning}\label{osrbtheorem}
	Let $(Y_{[1:T]},X)$ be discrete memoryless stationary sources distributed according to a joint pmf $p_{Y_{[1:T]},X}$ on the discrete set $\prod_{i=1}^{T}\mathcal{Y}_{i}\times\mathcal{X}$. A distributed random binning scheme can be defined as a set of $T$ random mappings, each described by $\mathcal{B}_i : \mathcal{Y}^{n}_{i} \rightarrow [1: 2^{nR_{i}}]$ for $i \in [1:T]$, where $\mathcal{B}_i$ maps each sequence of $\mathcal{Y}^{n}_{i}$ uniformly and independently to $[1: 2^{nR_{i}}]$. We denote the random variable $\mathcal{B}_i(\cdot)$ by simply $B_i$. Also the realization of the $B_i$ will be depicted as $b_i$. \\
	The distributed random binning scheme will induce a random pmf through the inherent randomness in each of the described random binnings, namely
	\[
	P(y^{n}_{[1:T]},x^{n},b_{[1:T]})=p(y^{n}_{[1:T]},x^{n})\prod_{i=1}^{T}\mathbbm{1}(\mathcal{B}_i(y^{n}_{i})=b_i).
	\]
	The induced random pmf is called the output statistics of random binning (OSRB). The OSRB theorem in \cite{yas14} states that given a specific criteria on the binning rates, i.e., $(R_1,\ldots,R_T)$, the induced random pmf has a concentration property and its expected deviation from its mean would vanish asymptotically in terms of total variation distance.

	\begin{lemma}{\cite[Theorem 1]{yas14}}\thlabel{OSRBorigin}
		if for each $\mathcal{S} \subseteq [1:T]$ the following constraints holds
		\begin{equation}
			\sum_{i \in \mathcal{S}}R_{i} < \cond{Y_{\mathcal{S}}}{X},
		\end{equation}
		then as $n \rightarrow \infty$ we would have
		\begin{equation}
			\mathbbm{E}_{\mathcal{B}}\norm{P(x^{n},b_{[1:T]}) - \mathbbm{E}_{\mathcal{B}}P(x^{n},b_{[1:T]})}_{TV} \rightarrow 0,
		\end{equation}
		where $\mathcal{B}$ is the set of all random mappings, i.e. $\mathcal{B}=\{\mathcal{B}_{i}: i \in [1:T]\}$.
	\end{lemma}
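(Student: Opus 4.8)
The plan is to establish the claimed concentration of the random pmf $P(x^{n},b_{[1:T]})$ around its mean by a second-moment estimate, after first truncating to a typical set to keep the relevant support under control. As a preliminary I would record the mean: since each $\mathcal{B}_{i}$ maps every sequence of $\mathcal{Y}^{n}_{i}$ uniformly and independently into $[1:2^{nR_{i}}]$, one has $\mathbbm{E}_{\mathcal{B}}\big[\mathbbm{1}(\mathcal{B}_{i}(y^{n}_{i})=b_{i})\big]=2^{-nR_{i}}$, and independence across $i$ together with summation over $y^{n}_{[1:T]}$ gives $\mathbbm{E}_{\mathcal{B}}\big[P(x^{n},b_{[1:T]})\big]=p(x^{n})\prod_{i=1}^{T}2^{-nR_{i}}=:Q(x^{n},b_{[1:T]})$, the product of the true marginal $p_{X^{n}}$ with the uniform pmf on the bin indices. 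So the target is $\mathbbm{E}_{\mathcal{B}}\norm{P-Q}_{TV}\to 0$.

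Second, I would truncate. For small $\delta>0$ let $\hat{P}(x^{n},b_{[1:T]})\coloneqq\sum_{y^{n}_{[1:T]}}p(x^{n},y^{n}_{[1:T]})\,\mathbbm{1}\{(x^{n},y^{n}_{[1:T]})\ \delta\text{-typical}\}\prod_{i=1}^{T}\mathbbm{1}(\mathcal{B}_{i}(y^{n}_{i})=b_{i})$ and $\hat{Q}\coloneqq\mathbbm{E}_{\mathcal{B}}\hat{P}$. Since $0\le\hat{P}\le P$ pointwise, $\norm{P-\hat{P}}_{TV}=\norm{Q-\hat{Q}}_{TV}=\tfrac12\Pr\big[(X^{n},Y^{n}_{[1:T]})\ \text{not}\ \delta\text{-typical}\big]$, which vanishes by the AEP and is deterministic in $\mathcal{B}$; by the triangle inequality it is enough to bound $\mathbbm{E}_{\mathcal{B}}\norm{\hat{P}-\hat{Q}}_{TV}$. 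Because $\hat{P}$ is supported in $x^{n}$ on $\delta$-typical sequences, of which there are at most $2^{n(H(X)+\delta')}$, Cauchy--Schwarz over the (now small) support together with Jensen's inequality for $\sqrt{\cdot}$ yields
\begin{equation*}
	\mathbbm{E}_{\mathcal{B}}\norm{\hat{P}-\hat{Q}}_{TV}\le\tfrac12\sqrt{\,2^{n(H(X)+\delta')}\,2^{n\sum_{i}R_{i}}\sum_{x^{n},b_{[1:T]}}\Var_{\mathcal{B}}\!\big(\hat{P}(x^{n},b_{[1:T]})\big)\,}.
\end{equation*}

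Third, the variance. Expanding $\hat{P}(x^{n},b_{[1:T]})^{2}$ and summing over $b_{[1:T]}$ turns $\prod_{i}\mathbbm{1}(\mathcal{B}_{i}(y^{n}_{i})=b_{i})\mathbbm{1}(\mathcal{B}_{i}(\tilde y^{n}_{i})=b_{i})$ into $\prod_{i}\mathbbm{1}(\mathcal{B}_{i}(y^{n}_{i})=\mathcal{B}_{i}(\tilde y^{n}_{i}))$, whose $\mathcal{B}$-expectation, for a pair $(y^{n}_{[1:T]},\tilde y^{n}_{[1:T]})$ whose coordinates differ exactly on $\mathcal{T}\subseteq[1:T]$, equals $2^{-n\sum_{i\in\mathcal{T}}R_{i}}$. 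Comparing with $\sum_{x^{n},b}\hat{Q}^{2}$, which admits the same decomposition over differ-sets but with weight $2^{-n\sum_{i}R_{i}}$, the $\mathcal{T}=[1:T]$ term cancels exactly, and each proper $\mathcal{T}\subsetneq[1:T]$ leaves a nonnegative contribution at most $2^{-n\sum_{i\in\mathcal{T}}R_{i}}\sum_{x^{n}}\sum_{y^{n},\tilde y^{n}\ \text{agree on}\ \mathcal{T}^{c}}p(x^{n},y^{n})p(x^{n},\tilde y^{n})$, which by typicality is $O\big(2^{-n\sum_{i\in\mathcal{T}}R_{i}}\,2^{-n(H(X,Y_{\mathcal{S}})-\delta'')}\big)$ with $\mathcal{S}\coloneqq\mathcal{T}^{c}\neq\emptyset$. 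Substituting into the displayed bound, the exponent under the square root contributed by $\mathcal{S}$ is $H(X)+\sum_{i}R_{i}-\sum_{i\in\mathcal{T}}R_{i}-H(X,Y_{\mathcal{S}})+O(\delta)=\sum_{i\in\mathcal{S}}R_{i}-\cond{Y_{\mathcal{S}}}{X}+O(\delta)$, which is strictly negative for $\delta$ small exactly under the hypothesis $\sum_{i\in\mathcal{S}}R_{i}<\cond{Y_{\mathcal{S}}}{X}$. A union bound over the finitely many nonempty $\mathcal{S}$ and then $\delta=\delta_{n}\downarrow0$ sufficiently slowly completes the argument.

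The step I expect to be the main obstacle is the variance computation together with its typicality bookkeeping: one must carry the $\delta$-slack and the polynomial-in-$n$ type-counting factors through the collision decomposition without swamping the strict inequalities, and verify that the decomposition produces exactly the exponents $\sum_{i\in\mathcal{S}}R_{i}-\cond{Y_{\mathcal{S}}}{X}$ uniformly over all nonempty $\mathcal{S}$ --- it is this exact matching that accounts for why the hypothesis imposes a constraint for \emph{every} subset. Beyond this accounting I do not anticipate a conceptual difficulty; the remaining ingredients are standard AEP/typicality estimates.
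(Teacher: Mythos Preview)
Your argument is correct. Note, however, that the paper does not itself prove this lemma: it is quoted from \cite{yas14}. What the paper does prove, in Appendix~\ref{OSRBnewProof}, is the strictly stronger non-asymptotic \thref{OSRBnew}, which specializes to \thref{OSRBorigin} as $n\to\infty$; that proof takes a genuinely different route from yours.

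The comparison is as follows. You truncate to a $\delta$-typical set, apply Cauchy--Schwarz over the (now exponentially bounded) support, and reduce to a global variance that you then decompose over the ``agree/differ'' subsets $\mathcal{S}$. The paper's proof of \thref{OSRBnew} instead performs \emph{type enumeration}: it rewrites the integrand $|L_{B}-1|$ as a sum over conditional types $\pi_{\bar{Y}_{\mathcal{T}}|\bar{X}}$ of random variables $Z_{\pi}$, and for each such $Z_{\pi}$ uses two bounds --- the crude $\mathbbm{E}|Z-\mathbbm{E}Z|\le 2\,\mathbbm{E}Z$ and the second-moment bound $\mathbbm{E}|Z-\mathbbm{E}Z|\le\sqrt{\Var(Z)}$ --- taking the \emph{minimum} of the two. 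The variance of $Z_{\pi}$ is then controlled by the same collision decomposition over subsets $\mathcal{S}$ that you use, arriving at the same critical exponents $\sum_{i\in\mathcal{S}}R_{i}-\cond{Y_{\mathcal{S}}}{X}$.

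What each approach buys: your truncation-plus-Cauchy--Schwarz argument is shorter and entirely adequate for the asymptotic statement, and it is essentially the proof in \cite{yas14}. The paper's type-by-type ``min of two bounds'' refinement is what produces the explicit finite-$n$ exponent with the divergence term $\infdiv{\pi}{p}$ and the $[\cdot]^{+}$ structure in \thref{OSRBnew}, which is exactly the quantitative input needed later for the type-\rom{2} error-exponent analysis.
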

	Since in this paper we deal with the exponential rates of decay, we need a non-asymptotic account of how distributed binning scheme behaves. The following theorem provides a non-asymptotic version of \thref{OSRBorigin}. 
	
	\begin{theorem}\thlabel{OSRBnew}
		Suppose $(Y_{[1:T]},X)$ to be discrete memoryless stationary sources with $p_{Y_{[1:T]},X}$ as the joint pmf on $\prod_{i=1}^{T}\mathcal{Y}_{i}\times\mathcal{X}$. Also assume we have a set of random binnings, each denoted by $\mathcal{B}_i : \mathcal{Y}^{n}_{i} \rightarrow [1: 2^{nR_{i}}]$ for $i \in [1:T]$, where $\mathcal{B}_i$ maps each $\mathcal{Y}^{n}_{i}$ uniformly and independently to $[1: 2^{nR_{i}}]$, then the following constraint holds
		\[
			\begin{gathered}
				-\frac{1}{n}\log\mathbbm{E}_{\mathcal{B}}\norm{P(x^{n},b_{[1:T]}) - \mathbbm{E}_{\mathcal{B}}P(x^{n},b_{[1:T]})}_{TV} \geq \\
				\min_{\pi_{Y_{[1:T]},X} \in \mathcal{P}(\mathcal{Y}_{[1:T]}\times\mathcal{X})}\left\{
				\infdiv{\pi_{Y_{[1:T]},X}}{p_{Y_{[1:T]},X}}
				+
				\frac{1}{2}\left[
				\min_{\mathcal{S} \subseteq [1:T]} \left\{
				H_{\pi}\condx{Y_{\mathcal{S}}}{X} -
				\sum_{i \in \mathcal{S}}R_{i} -
				\delta^{\mathcal{S}}_{n}
				\right\}
				\right]^{+}
				- \epsilon_{n}
				\right\},
			\end{gathered}
		\]
		where $\epsilon_{n} \coloneqq \abs{\mathcal{X}}\abs{\mathcal{Y}_{[1:T]}} 
		\frac{\log(n+1)}{n}$ and  $\delta^{\mathcal{S}}_{n} \coloneqq \abs{\mathcal{X}}\abs{\mathcal{Y}_{\mathcal{S}}} 
		\frac{\log(n+1)}{n}+
		\frac{T}{n}$ converge to zero as $n \rightarrow \infty$. $\mathcal{B}$ is the set of all random mappings, i.e., $\mathcal{B}=\{\mathcal{B}_{i}: i \in [1:T]\}$.
	\end{theorem}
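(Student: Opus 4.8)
The plan is to establish \thref{OSRBnew} by the second--moment method underlying OSRB, made quantitative through the method of types. Write $Q(x^{n},b_{[1:T]})\coloneqq\mathbbm{E}_{\mathcal{B}}P(x^{n},b_{[1:T]})$. First I would identify this mean: since each $\mathcal{B}_i$ sends $y^{n}_i$ uniformly and independently into $[1:2^{nR_i}]$ and the $T$ binnings are mutually independent, $\mathbbm{E}_{\mathcal{B}}\mathbbm{1}(\mathcal{B}_i(y^{n}_i)=b_i)=2^{-nR_i}$, so $Q(x^{n},b_{[1:T]})=p(x^{n})\prod_{i=1}^{T}2^{-nR_i}$, i.e.\ $p(x^{n})$ times the uniform pmf on $\prod_i[2^{nR_i}]$. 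Next, for any nonnegative random variable $Z$ one has $\mathbbm{E}\abs{Z-\mathbbm{E}Z}\le\min\para*{2\,\mathbbm{E}Z,\sqrt{\Var(Z)}}$ (the first bound from $\abs{a-b}\le a+b$, the second from Jensen); applying it to $Z=P(x^{n},b_{[1:T]})$ for each fixed $(x^{n},b_{[1:T]})$ and summing reduces the claim to estimating $\sum_{x^{n},b_{[1:T]}}\min\para*{2Q(x^{n},b_{[1:T]}),\sqrt{\Var_{\mathcal{B}}(P(x^{n},b_{[1:T]}))}}$.

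The variance is handled by the pairwise independence of the binnings. Expanding $\Var_{\mathcal{B}}(P(x^{n},b_{[1:T]}))$ over two pre--images $y^{n}_{[1:T]},\tilde y^{n}_{[1:T]}$ and noting that the $i$-th indicators coincide when $y^{n}_i=\tilde y^{n}_i$ and are independent otherwise, one gets the covariance factor $\prod_{i}2^{-2nR_i}(2^{n\sum_{i\in\mathcal{A}}R_i}-1)$ with $\mathcal{A}=\{i:y^{n}_i=\tilde y^{n}_i\}$; using $2^{n\sum_{i\in\mathcal{A}}R_i}-1=\sum_{\emptyset\ne\mathcal{S}\subseteq\mathcal{A}}\prod_{i\in\mathcal{S}}(2^{nR_i}-1)\le\sum_{\emptyset\ne\mathcal{S}\subseteq\mathcal{A}}2^{n\sum_{i\in\mathcal{S}}R_i}$, swapping the order of summation, and $\sum_{y,\tilde y:\,y_{\mathcal{S}}=\tilde y_{\mathcal{S}}}p(y,x^{n})p(\tilde y,x^{n})=\sum_{y^{n}_{\mathcal{S}}}p(y^{n}_{\mathcal{S}},x^{n})^{2}$, one obtains
\[
\Var_{\mathcal{B}}\para*{P(x^{n},b_{[1:T]})}\ \le\ \Big(\prod_{i}2^{-nR_i}\Big)^{\!2}\sum_{\emptyset\ne\mathcal{S}\subseteq[1:T]}2^{n\sum_{i\in\mathcal{S}}R_i}\sum_{y^{n}_{\mathcal{S}}}p(y^{n}_{\mathcal{S}},x^{n})^{2}.
\]
Plugging this in, summing over $b_{[1:T]}$ (which multiplies by $\prod_i 2^{nR_i}$), and using $\sqrt{\sum_{\mathcal{S}}(\cdot)}\le\sum_{\mathcal{S}}\sqrt{(\cdot)}$ together with $\min(a,\sum_k b_k)\le\sum_k\min(a,b_k)$, I get
\[
2\,\mathbbm{E}_{\mathcal{B}}\norm{P-Q}_{TV}\ \le\ \sum_{\emptyset\ne\mathcal{S}\subseteq[1:T]}\ \sum_{x^{n}}\min\para*{\,2p(x^{n}),\ 2^{\frac n2\sum_{i\in\mathcal{S}}R_i}\sqrt{\textstyle\sum_{y^{n}_{\mathcal{S}}}p(y^{n}_{\mathcal{S}},x^{n})^{2}}\,}.
\]

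It then remains to estimate the inner sum for each $\mathcal{S}$ by the method of types: group $x^{n}$ by its type and the pair $(y^{n}_{\mathcal{S}},x^{n})$ by its joint type $\pi=\pi_{Y_{\mathcal{S}},X}$, and use $p(x^{n})=2^{-n(H(\pi_X)+\infdiv{\pi_X}{p_X})}$, $p(y^{n}_{\mathcal{S}},x^{n})=2^{-n(H_\pi(Y_{\mathcal{S}},X)+\infdiv{\pi}{p_{Y_{\mathcal{S}},X}})}$, and the class--size bounds $\abs{\mathcal{T}^{n}_{\pi_X}}\le 2^{nH(\pi_X)}$, $\abs{\mathcal{T}^{n}_{\pi_{Y_{\mathcal{S}}|X}}(x^{n})}\le 2^{nH_\pi(Y_{\mathcal{S}}|X)}$; the collision mass $\sum_{y^{n}_{\mathcal{S}}}p(y^{n}_{\mathcal{S}}|x^{n})^{2}$ is then bounded, up to a factor $(n+1)^{\abs{\mathcal{X}}\abs{\mathcal{Y}_{\mathcal{S}}}}$, by $\max_\pi 2^{-n(H_\pi(Y_{\mathcal{S}}|X)+2\infdiv{\pi_{Y_{\mathcal{S}}|X}}{p_{Y_{\mathcal{S}}|X}})}$ (conditional divergence relative to $\pi_X$). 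Carrying this through, the two branches of the $\min$ become $2\cdot 2^{-n\infdiv{\pi_X}{p_X}}$ and $2^{-n(\infdiv{\pi}{p_{Y_{\mathcal{S}},X}}+\frac12 H_\pi(Y_{\mathcal{S}}|X)-\frac12\sum_{i\in\mathcal{S}}R_i)}$, and their minimum --- after absorbing the type--counting polynomials --- is precisely what generates the clipped term $\tfrac12\big[H_\pi(Y_{\mathcal{S}}|X)-\sum_{i\in\mathcal{S}}R_i-\delta^{\mathcal{S}}_n\big]^{+}$ with $\delta^{\mathcal{S}}_n=\abs{\mathcal{X}}\abs{\mathcal{Y}_{\mathcal{S}}}\tfrac{\log(n+1)}{n}+\tfrac Tn$ and the uniform correction $\epsilon_n=\abs{\mathcal{X}}\abs{\mathcal{Y}_{[1:T]}}\tfrac{\log(n+1)}{n}$. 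Summing the $2^{T}-1$ subsets costs a further factor $2^{T}$ (hence $\tfrac Tn$ in the exponent); minimizing the resulting exponent over $\mathcal{S}$ and over the ($n$-type) joints, and then rewriting $\infdiv{\pi_{Y_{\mathcal{S}},X}}{p_{Y_{\mathcal{S}},X}}$ as $\infdiv{\pi_{Y_{[1:T]},X}}{p_{Y_{[1:T]},X}}$ for the extension whose $Y_{\mathcal{S}^{c}}$ block equals $p_{Y_{\mathcal{S}^{c}}|Y_{\mathcal{S}},X}$ (the divergence--minimizing extension, which leaves $H_\pi(Y_{\mathcal{S}}|X)$ unchanged), delivers the stated inequality; since restricting the final minimization to $n$-types only enlarges it, stating it over all $\pi\in\mathcal{P}(\mathcal{Y}_{[1:T]}\times\mathcal{X})$ is legitimate and no continuity argument is needed.

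The main obstacle is this last step: one must invoke the trivial bound $\norm{P-Q}_{TV}\le 1$ at exactly the right granularity --- through the $2Q$ branch of the $\min$ --- so that each subset's exponent is capped at $0$, and then argue that ``cap per $\mathcal{S}$, then $\min_{\mathcal{S}}$'' agrees, up to the $\delta^{\mathcal{S}}_n$ slack, with the single outer positive part applied after $\min_\pi$ in the statement. If that reconciliation turns out to be lossy, the clean remedy is to split the sum over joint types $\pi$ of $(y^{n}_{\mathcal{S}},x^{n})$ into the \emph{resolvable} regime $H_\pi(Y_{\mathcal{S}}|X)\ge\sum_{i\in\mathcal{S}}R_i$, estimated via $\sqrt{\Var_{\mathcal{B}}}$, and its complement, estimated by the probability mass $2^{-n\infdiv{\pi}{p}}$ it carries --- the standard route to sharp resolvability exponents --- which reproduces exactly the clipped $\min_\pi$ form. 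Everything else is the routine bookkeeping of folding the $(n+1)^{(\cdot)}$ factors into $\epsilon_n$ and $\delta^{\mathcal{S}}_n$.
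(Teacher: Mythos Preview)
Your variance computation and the overall second–moment strategy are correct and match the paper. The gap is precisely the granularity issue you flag at the end, and it is not merely cosmetic: the primary route you describe does \emph{not} recover the stated exponent. When you take $\min\big(2p(x^{n}),\sqrt{\Var_{\mathcal{B}}P}\big)$ at the level of $(x^{n},b_{[1:T]})$ and only afterwards decompose the collision sum $\sum_{y^{n}_{\mathcal{S}}}p(y^{n}_{\mathcal{S}},x^{n})^{2}$ by the joint type $\pi$, the ``trivial'' branch $2p(x^{n})$ carries only $D(\pi_{X}\Vert p_{X})$, not the full $D(\pi_{Y_{\mathcal{S}},X}\Vert p_{Y_{\mathcal{S}},X})$. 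Consequently the per-$\pi$ exponent you produce is
\[
D(\pi_{X}\Vert p_{X})+\Big[D\big(\pi_{Y_{\mathcal{S}}|X}\big\Vert p_{Y_{\mathcal{S}}|X}\,\big|\,\pi_{X}\big)+\tfrac12\big(H_{\pi}(Y_{\mathcal{S}}|X)-\textstyle\sum_{i\in\mathcal{S}}R_{i}\big)\Big]^{+},
\]
whereas the theorem requires
\[
D(\pi_{Y_{\mathcal{S}},X}\Vert p_{Y_{\mathcal{S}},X})+\tfrac12\Big[H_{\pi}(Y_{\mathcal{S}}|X)-\textstyle\sum_{i\in\mathcal{S}}R_{i}\Big]^{+}.
\]
Since $[a+c]^{+}\le a+[c]^{+}$ for $a\ge 0$, your expression is never larger and is strictly smaller whenever $H_{\pi}(Y_{\mathcal{S}}|X)<\sum_{i\in\mathcal{S}}R_{i}$ with $D(\pi_{Y_{\mathcal{S}}|X}\Vert p_{Y_{\mathcal{S}}|X}\,|\,\pi_{X})>0$. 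After the outer minimization over $\pi$ this yields a genuinely weaker lower bound on the exponent, so the theorem as stated does not follow from your primary route.

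Your proposed remedy is exactly what the paper does, and you should go with it from the start. The paper first writes the likelihood ratio $L_{B}(x^{n},b_{\mathcal{T}})=P(x^{n},b_{\mathcal{T}})/\big(p(x^{n})p^{U}_{\mathcal{T}}\big)$ as a sum $\sum_{\pi}Z_{\pi}(x^{n},b_{\mathcal{T}})$ over conditional types $\pi=\pi_{\bar{Y}_{\mathcal{T}}|\bar{X}}$, and then applies $\mathbbm{E}\lvert Z_{\pi}-\mathbbm{E}Z_{\pi}\rvert\le\min\big(2\,\mathbbm{E}Z_{\pi},\sqrt{\Var(Z_{\pi})}\big)$ \emph{per type}. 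The crucial gain is that the first branch is now $2\,\mathbbm{E}Z_{\pi}=2\,l_{\pi}\,\lvert\mathcal{T}^{n}_{\pi}(x^{n})\rvert\doteq 2^{-nD(\pi_{\bar{Y}_{\mathcal{T}}|\bar{X}}\Vert p_{Y_{\mathcal{T}}|X}\,|\,\pi_{\bar{X}})}$, i.e.\ it already carries the conditional divergence, so after summing over $x^{n}$ the positive part clips only $\tfrac12\big(H_{\pi}(Y_{\mathcal{S}}|X)-\sum_{i}R_{i}\big)$ and the $D$–term sits outside, exactly as in the statement. The variance of each $Z_{\pi}$ is bounded by the same subset–$\mathcal{S}$ decomposition you wrote; the remaining steps (type counting, absorbing $(n+1)^{(\cdot)}$ into $\epsilon_{n}$ and $\delta^{\mathcal{S}}_{n}$, relaxing $n$-types to all of $\mathcal{P}$) are as you describe.
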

	\begin{proof}
		The proof is provided in Appendix \ref{OSRBnewProof}.
	\end{proof}

	\begin{remark}
		In the case when $\sum_{i \in \mathcal{S}}R_{i} \geq \cond{Y}{X}$ for some arbitrary $\mathcal{S} \subseteq [1:T]$, the optimal choice would be  $\pi_{Y_{[1:T]},X}=p_{Y_{[1:T]},X}$, yielding the zero exponent. This observation coincides with our perception from \thref{OSRBorigin} for high-rate codes.
		
	\end{remark}
	\begin{remark}
		For convenience, let's define 
		\[
			\zeta(R_{\mathcal{T}},p_{X},p_{Y_{\mathcal{T}}|X}) \coloneqq 
			\min_{\pi_{Y_{\mathcal{T}},X} \in \mathcal{P}(\mathcal{Y}_{\mathcal{T}}\times\mathcal{X})}\left\{
			\infdiv{\pi_{Y_{\mathcal{T}},X}}{p_{Y_{\mathcal{T}},X}}
			+
			\frac{1}{2}\left[
			\min_{\mathcal{S} \subseteq \mathcal{T}} \left\{
			H_{\pi}\condx{Y_{\mathcal{S}}}{X} -
			\sum_{i \in \mathcal{S}}R_{i} -
			\delta^{\mathcal{S}}_{n}
			\right\}
			\right]^{+}
			- \epsilon_{n}
			\right\},
		\]
		where $\mathcal{T} \coloneqq [1:T]$ and $R_{\mathcal{T}} \coloneqq \{R_{i}: i \in \mathcal{T}\}$.
	\end{remark}

	Another variant of \thref{OSRBnew}, which is needed in this paper, is a case of distributed random binning when there is another discrete random sequence $Z^{n}$, correlated with $X^{n}$ in a manner that $Z^{n} \leftrightarrow X^{n} \leftrightarrow Y^{n}_{[1:T]}$ forms a Markov chain. The ensued distribution on $\prod_{i=1}^{T}\mathcal{Y}^{n}_{i}\times\mathcal{X}^{n} \times \mathcal{Z}^{n}$ can be presented as $p(y^{n}_{[1:T]},x^{n},z^{n})=p(z^{n})p(x^{n}|z^{n})p(y^{n}_{[1:T]}|x^{n})$ where $p(x^{n}|z^{n})$ and $p(y^{n}_{[1:T]}|x^{n})$ are product distributions. We also assume that the $Z^{n}$ has a constant-composition distribution on $\mathcal{Z}^{n}$ with respect to a specific $n$-Type $p_{\bar{Z}}$, i.e., 
	\begin{equation}
		p(z^{n}) = \frac{1}{\abs{\mathcal{T}^{n}_{p_{\bar{Z}}}}}\mathbbm{1}\left\{
		z^{n} \in \mathcal{T}^{n}_{p_{\bar{Z}}}
		\right\}.
	\end{equation}
	Note that the constant composition distribution, and consequently, the $p(y^{n}_{[1:T]},x^{n},z^{n})$ are not product distributions. The following theorem presents this extension.
	
	\begin{theorem}\thlabel{OSRBnewconditioned}
		Let $(Y^{n}_{[1:T]},X^{n},Z^{n})$ be discrete sources given that $Z^{n} \leftrightarrow X^{n} \leftrightarrow Y^{n}_{[1:T]}$. Assume we have $p(y^{n}_{[1:T]},x^{n},z^{n})=p(z^{n})p(x^{n}|z^{n})p(y^{n}_{[1:T]}|x^{n})$ where $p(x^{n}|z^{n})$ and $p(y^{n}_{[1:T]}|x^{n})$ are product distributions. Also assume a distributed random binning scheme comprised of $\mathcal{B}_i : \mathcal{Y}^{n}_{i} \rightarrow [1: 2^{nR_{i}}]$ for $i \in [1:T]$. The following constraint holds
		\[
			\begin{gathered}
				-\frac{1}{n}\log\mathbbm{E}_{\mathcal{B}}\norm{P(x^{n},b_{[1:T]}) - \mathbbm{E}_{\mathcal{B}}P(x^{n},b_{[1:T]})}_{TV} \geq \\
				\min_{\pi_{Y_{[1:T]},X|Z} \in \mathcal{P}(\mathcal{Y}_{[1:T]}\times\mathcal{X}|\mathcal{Z})}\left\{
				\infdiv{\pi_{Y_{\mathcal{T}},X|Z}}{p_{Y_{\mathcal{T}},X|Z}|p_{\bar{Z}}}
				+
				\frac{1}{2}\left[
				\min_{\mathcal{S} \subseteq [1:T]} \left\{
				H_{\pi}\condx{Y_{\mathcal{S}}}{X} -
				\sum_{i \in \mathcal{S}}R_{i} -
				\delta^{\mathcal{S}}_{n}
				\right\}
				\right]^{+}
				- \epsilon_{n}
				\right\},
			\end{gathered}
		\]
		where  $\epsilon_{n} \coloneqq \abs{\mathcal{X}}\abs{\mathcal{Y}_{[1:T]}} 
		\frac{\log(n+1)}{n}$ and  $\delta^{\mathcal{S}}_{n} \coloneqq \abs{\mathcal{X}}\abs{\mathcal{Y}_{\mathcal{S}}} 
		\frac{\log(n+1)}{n}+
		\frac{T}{n}$ converge to zero as $n \rightarrow \infty$. $\mathcal{B}$ is the set of all random mappings, i.e. $\mathcal{B}=\{\mathcal{B}_{i}: i \in [1:T]\}$.
	\end{theorem}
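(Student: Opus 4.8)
The plan is to follow the same route as the proof of \thref{OSRBnew} in Appendix \ref{OSRBnewProof}, carrying the constant-composition sequence $z^{n}$ along through every method-of-types estimate; one cannot simply invoke \thref{OSRBnew} as a black box, because conditioning on $z^{n}$ renders the per-letter laws of $(X_i,Y_{[1:T],i})$ non-identical. The first observation is that the mean keeps the same form: $\mathbbm{E}_{\mathcal{B}}P(x^{n},b_{[1:T]})=p(x^{n})\,2^{-n\sum_{i}R_{i}}$, since averaging the binning indicators produces $2^{-n\sum_{i}R_{i}}$ irrespective of the source law, only now $p(x^{n})=\sum_{z^{n}}p(z^{n})p(x^{n}|z^{n})$ is the (non-product) marginal of the constant-composition model. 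Writing $P(x^{n},b_{[1:T]})=\sum_{z^{n}}p(z^{n})\,P(x^{n},b_{[1:T]}|z^{n})$, where each $P(\cdot|z^{n})$ is a genuine pmf on $(x^{n},b_{[1:T]})$, the triangle inequality for $\norm{\cdot}_{TV}$ together with linearity of $\mathbbm{E}_{\mathcal{B}}$ bounds the quantity of interest by a convex combination of the conditional deviations, hence
\[
\mathbbm{E}_{\mathcal{B}}\norm{P(\cdot)-\mathbbm{E}_{\mathcal{B}}P(\cdot)}_{TV}\ \le\ \max_{z^{n}\in\mathcal{T}^{n}_{p_{\bar{Z}}}}\ \mathbbm{E}_{\mathcal{B}}\norm{P(\cdot|z^{n})-\mathbbm{E}_{\mathcal{B}}P(\cdot|z^{n})}_{TV},
\]
so it suffices to bound the deviation for an arbitrary but fixed $z^{n}$ of composition $p_{\bar{Z}}$.

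For a frozen $z^{n}$ one has $p(x^{n},y^{n}_{[1:T]}|z^{n})=\prod_{i}p(x_i,y_{[1:T],i}|z_i)$, and for each bin tuple $b_{[1:T]}$ the collision indicators $\mathbbm{1}(\mathcal{B}_i(y^{n}_i)=b_i)$ are mutually independent across distinct sequences $y^{n}_{i}$ --- exactly the structure exploited in \thref{OSRBnew}. I would reproduce that proof's steps here: the second-moment estimate $\mathbbm{E}_{\mathcal{B}}\abs{P-\mathbbm{E}_{\mathcal{B}}P}\le\sqrt{\Var_{\mathcal{B}}(P)}$, the $\prod_{i}2^{nR_{i}}$ count of bin tuples, the trivial bound on the atypical $(x^{n},b_{[1:T]})$, and the enumeration of $(x^{n},y^{n}_{[1:T]})$ by their joint \emph{conditional} type $\pi_{X,Y_{[1:T]}|Z}$ relative to $z^{n}$. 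The collision contribution to $\Var_{\mathcal{B}}(P)$ depends on $y^{n}_{[1:T]}$ only through its conditional type given $x^{n}$, and --- because $Y^{n}_{[1:T]}\leftrightarrow X^{n}\leftrightarrow Z^{n}$ --- it is blind to $z^{n}$; this is why the entropic term in the bound stays $H_{\pi}\condx{Y_{\mathcal{S}}}{X}$ rather than $H_{\pi}\condx{Y_{\mathcal{S}}}{X,Z}$, and why the $\tfrac12[\,\cdot\,]^{+}$ shape re-emerges from the square root.

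The genuinely new ingredient is the change-of-measure cost: with $z^{n}$ of composition $p_{\bar{Z}}$ frozen, the method-of-types estimates for the probability and the size of conditional-type classes of $(x^{n},y^{n}_{[1:T]})$ given $z^{n}$ yield, after the uniform average over $z^{n}\in\mathcal{T}^{n}_{p_{\bar{Z}}}$, exactly the expression of \thref{OSRBnew} with the unconditional divergence $D(\pi_{Y_{[1:T]},X}\,\|\,p_{Y_{[1:T]},X})$ replaced by the conditional divergence $D(\pi_{Y_{[1:T]},X|Z}\,\|\,p_{Y_{[1:T]},X|Z}\,|\,p_{\bar{Z}})$; the cardinality $\abs{\mathcal{T}^{n}_{p_{\bar{Z}}}}$ and the number of conditional types enter only as polynomial prefactors. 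I expect the main obstacle to be precisely this bookkeeping: verifying that every $(n+1)^{(\cdot)}$-type prefactor and the factor $\abs{\mathcal{T}^{n}_{p_{\bar{Z}}}}^{-1}$ fold into the already-present $\epsilon_{n}$ and $\delta^{\mathcal{S}}_{n}$ terms, and that the typical/atypical split over $(x^{n},b_{[1:T]})$ can be carried out uniformly in $z^{n}$ so that the outer $\max_{z^{n}}$ contributes nothing to the exponent.
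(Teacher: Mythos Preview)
Your proposal is correct but takes a different route from the paper. You condition on $z^{n}$ at the outset via the triangle inequality and then redo the entire \thref{OSRBnew} machinery for each frozen $z^{n}$, tracking joint conditional types $\pi_{X,Y_{[1:T]}|Z}$. The paper instead observes that the proof of \thref{OSRBnew} up through \eqref{eq:33} never uses that $p(x^{n})$ is a product law---those steps depend on $x^{n}$ only through its type and on $p(y^{n}_{\mathcal{T}}|x^{n})$ being product---so it reuses \eqref{eq:2}--\eqref{eq:33} verbatim, and only at the stage $\sum_{x^{n}}p(x^{n})(\cdots)$ does it expand $p(x^{n})=\sum_{z^{n}}p(z^{n})p(x^{n}|z^{n})$ and then enumerate $x^{n}$ by its conditional type $\pi_{\bar{X}|\bar{Z}}$ given $z^{n}$. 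Your remark that ``one cannot simply invoke \thref{OSRBnew} as a black box'' is therefore too pessimistic: the paper's point is precisely that almost all of it \emph{can} be reused, because the obstruction (non-product $p(x^{n})$) only matters at the very last averaging step. Both routes reach the same exponent; the paper's is shorter and more modular, while yours makes the role of $z^{n}$ explicit from the start and lands directly on the unrestricted minimization over $\pi_{Y_{[1:T]},X|Z}$ rather than first obtaining the Markov-restricted minimum and then relaxing.
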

	\begin{proof}
		The proof is provided in Appendix \ref{OSRBnewconditionedProof}.
	\end{proof}
	\begin{remark}
		We use the following definition to refer to the acquired exponent:
		\[
		\aleph(R_{\mathcal{T}},p_{X^{n}},p_{Y_{\mathcal{T}}|X}) \coloneqq 		
		\min_{\pi \in \mathcal{P}(\mathcal{Y}_{\mathcal{T}}\times\mathcal{X}|\mathcal{Z})}\left\{
		\infdiv{\pi_{Y_{\mathcal{T}},X|Z}}{p_{Y_{\mathcal{T}},X|Z}|p_{\bar{Z}}}
		+
		\frac{1}{2}\left[
		\min_{\mathcal{S} \subseteq \mathcal{T}} \left\{
		H_{\pi}\condx{Y_{\mathcal{S}}}{X} -
		\sum_{i \in \mathcal{S}}R_{i} -
		\delta^{\mathcal{S}}_{n}
		\right\}
		\right]^{+}
		- \epsilon_{n}
		\right\}.
		\]
	\end{remark}
	
	\subsection{Proof of \thref{mainresult}}\label{ourapproach}
	
	Our approach in proving an achievable exponent vector for the GWP problem is comprised of few steps. In the first step, or \textit{step (1) of the proof}, we modify the main problem by adding a shared randomness to it and then fabricate a well-defined dual problem \textit{(Protocol A)} for our modified main setup \textit{(Protocol B)}.
	In the second step or \textit{step (2a) of the proof}, we solve the distributed hypothesis testing for the dual problem and determine its error bounds, and in the third step or \textit{step (2b) of the proof} we will explore the criteria in which the distributions of the modified main problem and the dual problem are almost identical, and therefore the results for \textit{Protocol A} are also applicable to \textit{Protocol B} to some extent. 
	In the last step or \textit{step (3) of the proof}, we show that we obtain the desired results for the main problem by eliminating the shared randomness from its modified version.  

	\proofpart{1}{Introducing the dual problem.}
	In this step, a modified version of the main problem, which we call \textit{Protocol B}, along with its corresponding dual problem, \textit{Protocol A}, is introduced and their induced distribution will be looked at.\\

	\textit{Protocol A (source coding side of the problem)}: 
	Define three auxiliary random variables $Y_{[0:2]}$ and fix the conditional distribution $p_{Y_{[0:2]}|X}$ such that:
\[
Y_{[0:2]} \longleftrightarrow X \longleftrightarrow Z_{[1:2]}.
\]
	 Recall that the two competing hypotheses have a same marginal distribution, namely $p_{X}$ for random variable $X$. Let $(Y^{n}_{[0:2]},X)$ be a sequence distributed according to $\prod_{t=1}^{n}p_{X}p_{Y_{[0:2]}|X}$. Now for each $i \in [0:2]$, consider a random binning where two bin indices $m_i \in [1:2^{nR_i}]$ and $f_i \in [1:2^{n\tilde{R}_i}]$ are assigned to each $y^{n}_{i}$, uniformly and independently, denoted by $\mathcal{B}_{M,i}$ and $\mathcal{B}_{F,i}$ respectively. Further, consider two distinct decoders depicted as $j \in \{1,2\}$, each trying to perform the hypothesis testing based on their observations. Decoder $j$, $j \in \{1,2\}$, has access to $(M_0,M_j,Z^{n}_{j})$ and will be manifested by its induced distribution, $P^{HT}(\hat{h}_{j}|m_0,f_0,m_j,f_j,z^{n}_{j})$. The specific descriptions of these decoders will be shown later on, but for now we are only interested in their definition. The random pmf induced by the random binning schemes can be expressed as:
	 
	\begin{equation}
	 	\begin{aligned} \label{eq:44}
	 		P(x^{n},z^{n}_{1},z^{n}_{2},y^{n}_{[0:2]},&m^{n}_{[0:2]},f_{[0:2]},\hat{h}_{1},\hat{h}_{2})  
	 		\\ = & \phi(x^{n},z^{n}_{1},z^{n}_{2})p(y^{n}_{[0:2]}|x^{n})P(m_{0},f_{0}|y^{n}_{0})P(m_{1},f_{1}|y^{n}_{1})P(m_{2},f_{2}|y^{n}_{2})
	 		\\& \times
	 		P^{HT}(\hat{h}_{1}|m_0,f_0,m_1,f_1,z^{n}_{1})P^{HT}(\hat{h}_{2}|m_0,f_0,m_2,f_2,z^{n}_{2})  \\ = &
	 		P(f_{[0:2]},x^{n},z^{n}_{1},z^{n}_{2})P(y^{n}_{[0:2]}|x^{n},f_{[0:2]})P(m_{0}|y^{n}_{0})P(m_{1}|y^{n}_{1})P(m_{2}|y^{n}_{2}) 
	 		\\& \times
	 		P^{HT}(\hat{h}_{1}|m_0,f_0,m_1,f_1,z^{n}_{1})P^{HT}(\hat{h}_{2}|m_0,f_0,m_2,f_2,z^{n}_{2}),	 
 		\end{aligned} 
	\end{equation}	 

 	where $\phi$ is an indeterminate pmf that would be interpreted as $\phi=\prod_{i=1}^{n}p_{X,Z_1,Z_2}$ in case of the null hypothesis and $\phi=\prod_{i=1}^{n}q_{X,Z_1,Z_2}$ in case of the alternative hypothesis. This setup is illustrated in Figure \ref{ProtocolA}.
 	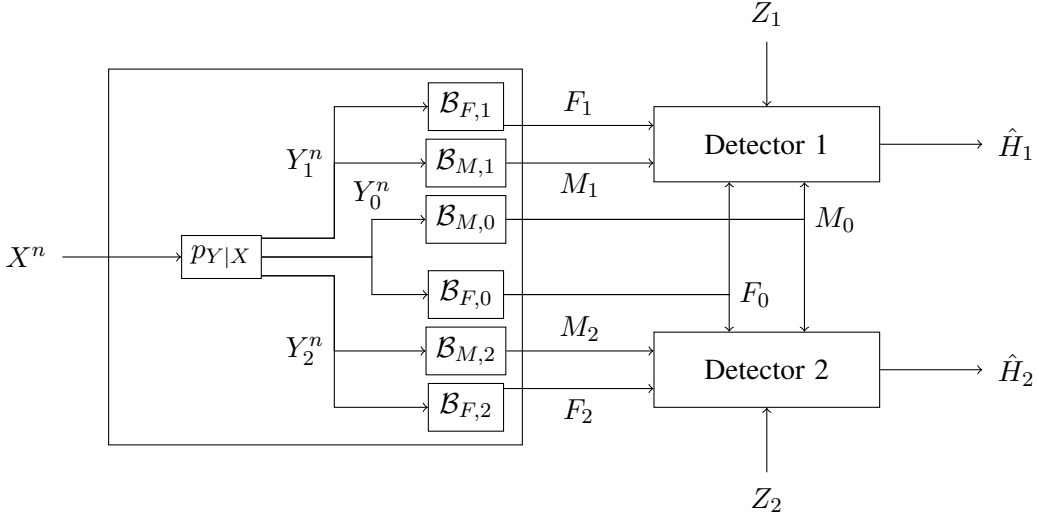
\begin{figure}
 		\centering
 		\begin{tikzpicture}
 			
 			\node (Tr) at (-1,0) [shape=rectangle, draw, minimum width=5.5cm, minimum height=5cm] {};
 			
 			\node (Randomizer) at (-2.25,0) [shape=rectangle, draw, minimum width=1cm, minimum height=0.5cm] {$p_{Y|X}$};
 			
 			\node (B0) at (1,0.5) [shape=rectangle, draw, minimum width=1cm, minimum height=0.5cm] {$\mathcal{B}_{M,0}$};
 			\node (BF0) at (1,-0.5) [shape=rectangle, draw, minimum width=1cm, minimum height=0.5cm] {$\mathcal{B}_{F,0}$};
 			\node (B1) at (1,1.25) [shape=rectangle, draw, minimum width=1cm, minimum height=0.5cm] {$\mathcal{B}_{M,1}$};
 			\node (BF1) at (1,2) [shape=rectangle, draw, minimum width=1cm, minimum height=0.5cm] {$\mathcal{B}_{F,1}$};
 			\node (B2) at (1,-1.25) [shape=rectangle, draw, minimum width=1cm, minimum height=0.5cm] {$\mathcal{B}_{M,2}$};
 			\node (BF2) at (1,-2) [shape=rectangle, draw, minimum width=1cm, minimum height=0.5cm] {$\mathcal{B}_{F,2}$};
 			
 			\node (Rx1) at (5,1.5) [shape=rectangle, draw, minimum width=3cm, minimum height=1cm] {Detector 1};
 			\node (Rx2) at (5,-1.5) [shape=rectangle, draw, minimum width=3cm, minimum height=1cm] {Detector 2};
 			
 			\draw [->] (B1.east) -- node[below] {$M_1$} ($(Rx1.west)+(0,-0.25)$);
 			\draw [->] (B2.east) -- node[above] {$M_2$} ($(Rx2.west)+(0,0.25)$);
 			
 			\draw [->] ($(BF1.east)+(0,-0.25)$) -- node[above] {$F_1$} ($(Rx1.west)+(0,0.25)$);
 			\draw [->] ($(BF2.east)+(0,0.25)$) -- node[below] {$F_2$} ($(Rx2.west)+(0,-0.25)$);
 			
 			\draw [->] (Randomizer.east) -- ($(Randomizer)+(2,0)$) |- node[above] {$Y^{n}_{0}$} (B0.west);
 			\draw [->] (Randomizer.east) -- ($(Randomizer)+(2,0)$) |- (BF0.west);

 			\draw [->] (Randomizer.east) ++(0,0.25) -- ($(Randomizer)+(1.5,0.25)$) |- node[left] {$Y^{n}_{1}$} (B1.west);
 			\draw [->] (Randomizer.east) ++(0,0.25) -- ($(Randomizer)+(1.5,0.25)$) |- (BF1.west);

 			\draw [->] (Randomizer.east) ++(0,-0.25) -- ($(Randomizer)+(1.5,-0.25)$) |- node[left] {$Y^{n}_{2}$} (B2.west);
 			\draw [->] (Randomizer.east) ++(0,-0.25) -- ($(Randomizer)+(1.5,-0.25)$) |- (BF2.west);

 			\draw [->] (B0.east) -| node[right] {$M_0$} ($(Rx1.south)+(0.5,0)$);
 			\draw [->] (B0.east) -| ($(Rx2.north)+(0.5,0)$);
 			
 			\draw [->] (BF0.east) -| node[right] {$F_0$} ($(Rx1.south)+(-0.5,0)$);
 			\draw [->] (BF0.east) -| ($(Rx2.north)+(-0.5,0)$);
 			
 			\node (source0) at ($(Tr)+(-3.5,0)$) [label={[label distance=-2mm]left: {$X^{n}$}}] {};  
 			\draw [->] (source0) -- (Randomizer.west) {};
 			
 			\node (res1) at ($(Rx1)+(3,0)$) [label={[label distance=-2mm]right: {$\hat{H}_{1}$}}] {};  
 			\node (res2) at ($(Rx2)+(3,0)$) [label={[label distance=-2mm]right: {$\hat{H}_{2}$}}] {};  
 			\draw [->] (Rx1.east) -- (res1);
 			\draw [->] (Rx2.east) -- (res2);
 			
 			\node (source1) at ($(Rx1)+(0,1.5)$) [label={[label distance=-2mm]above: {$Z_{1}$}}] {};  
 			\node (source2) at ($(Rx2)+(0,-1.5)$) [label={[label distance=-2mm]below: {$Z_{2}$}}] {};  
 			\draw [->] (source1.south) -- (Rx1.north);
 			\draw [->] (source2.north) -- (Rx2.south);
 			
 		\end{tikzpicture}
 		\caption{Source coding side of the problem (\textit{Protocol A})}
 		 \label{ProtocolA}
 	\end{figure}

	\textit{Protocol B (coding for the main problem assisted with the shared randomness)}: As shown in Figure \ref{ProtocolB}, consider the GWP setup, except for a slight adjustment that both the observer and Detector $j \in \{1,2\}$ have access to a shared randomness $(F_0,F_j)$ where $F_0$ and $F_j$ are uniformly distributed on $[1:2^{n\tilde{R}_0}]$ and $[1:2^{n\tilde{R}_j}]$, respectively. The encoder of the observer acts as follows:
	\begin{enumerate}
		\item The encoder first generates $(Y^{n}_{0},Y^{n}_{1},Y^{n}_{2})$ according to the conditional pmf $P(y^{n}_{[0:2]}|x^{n},f_{[0:2]})$ of $\textit{Protocol A}$.
		\item Subsequently, having obtained $(x^{n},y^{n}_{0},y^{n}_{1},y^{n}_{2})$, the encoder generates index $m_i$ for $i \in [0:2]$ which is the bin index of $y^{n}_i$. To generate the indices, for each $i \in [0:2]$, a random binning scheme maps each sequence $y^{n}_{i}$ to an index according to the conditional pmf $P(m_{i}|y^{n}_{i})$ of $\textit{Protocol A}$. 
		\item Finally, the encoder sends $(M_0,M_1)$ to the first detector and $(M_0,M_2)$ to the second detector. We assume that both the detectors have access to the exact type index of the $(X^{n},Y^{n}_{[0:2]}) \in \mathcal{P}_{n}(\mathcal{X} \times \mathcal{Y}_{[0:2]})$. Because $\abs{\mathcal{P}_{n}(\mathcal{X} \times \mathcal{Y}_{[0:2]})} \leq (n+1)^{\abs{\mathcal{X}} \times \abs{\mathcal{Y}_{[0:2]}}}$, the observer can send the index to detectors through a common zero-rate channel. We denote this index by a random variable $T$ and its realization by $t$.
	\end{enumerate}
\begin{figure}
	\centering
	\begin{tikzpicture}
		
		\node (Tr) at (-1,0) [shape=rectangle, draw, minimum width=5.5cm, minimum height=4cm] {};
		
		\node (Randomizer) at (-2.25,0) [shape=rectangle, draw, minimum width=1cm, minimum height=0.5cm] {$P_{Y^{n}_{[0:2]}|X^{n},F_{[0:2]}}$};
		
		\node (B0) at (1,0) [shape=rectangle, draw, minimum width=0.5cm, minimum height=0.5cm] {$\mathcal{B}_{M,0}$};
		\node (B1) at (1,1) [shape=rectangle, draw, minimum width=0.5cm, minimum height=0.5cm] {$\mathcal{B}_{M,1}$};
		\node (B2) at (1,-1) [shape=rectangle, draw, minimum width=0.5cm, minimum height=0.5cm] {$\mathcal{B}_{M,2}$};
		
		\node (Rx1) at (5,1) [shape=rectangle, draw, minimum width=3cm, minimum height=1cm] {Detector 1};
		\node (Rx2) at (5,-1) [shape=rectangle, draw, minimum width=3cm, minimum height=1cm] {Detector 2};
		
		\draw [->] (B1.east) -- node[above] {$M_1$} (Rx1.west);
		\draw [->] (B2.east) -- node[below] {$M_2$} (Rx2.west);
		
		\draw [->] (Randomizer.east) -- node[above] {$Y^{n}_{0}$} (B0.west);
		\draw [->] (Randomizer.east) ++(0,0.25) -- ($(Randomizer)+(1.5,0.25)$) |- node[above] {$Y^{n}_{1}$} (B1.west);
		\draw [->] (Randomizer.east) ++(0,-0.25) -- ($(Randomizer)+(1.5,-0.25)$) |- node[below] {$Y^{n}_{2}$} (B2.west);
		
		\draw [<->,dash dot] (Randomizer.north) |- ($(Randomizer)+(0,2.5)$) -- ($(Randomizer)+(3.25,2.5)$) node[above] {$F_0,F_1$} -| ($(Rx1.north)+(-1,0)$);
		\draw [<->,dash dot] (Randomizer.south) |- ($(Randomizer)+(0,-2.5)$) -- ($(Randomizer)+(3.25,-2.5)$) node[below] {$F_0,F_2$} -| ($(Rx2.south)+(-1,0)$);
		
		\draw [->] (B0.east) -| node[right] {$M_0$} (Rx1.south);
		\draw [->] (B0.east) -| (Rx2.north);
		
		\node (source0) at ($(Tr)+(-3.5,0)$) [label={[label distance=-2mm]left: {$X^{n}$}}] {};  
		\draw [->] (source0) -- (Randomizer.west) {};
		
		\node (res1) at ($(Rx1)+(3,0)$) [label={[label distance=-2mm]right: {$\hat{H}_{1}$}}] {};  
		\node (res2) at ($(Rx2)+(3,0)$) [label={[label distance=-2mm]right: {$\hat{H}_{2}$}}] {};  
		\draw [->] (Rx1.east) -- (res1);
		\draw [->] (Rx2.east) -- (res2);
		
		\node (source1) at ($(Rx1)+(0,1.5)$) [label={[label distance=-2mm]above: {$Z_{1}$}}] {};  
		\node (source2) at ($(Rx2)+(0,-1.5)$) [label={[label distance=-2mm]below: {$Z_{2}$}}] {};  
		\draw [->] (source1.south) -- (Rx1.north);
		\draw [->] (source2.north) -- (Rx2.south);
		
	\end{tikzpicture}
	\caption{Main problem assisted with a shared randomness (\textit{Protocol B})}
	\label{ProtocolB}
\end{figure}
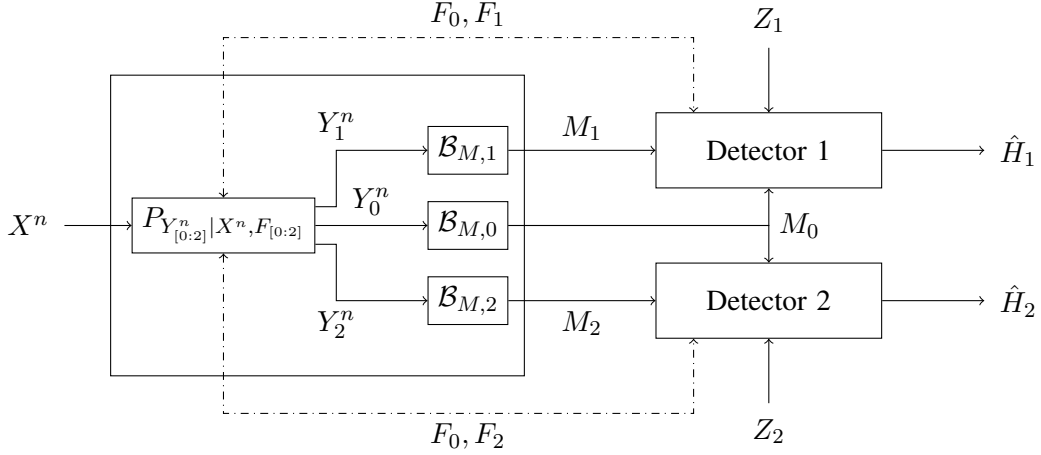

	Detector $j \in \{1,2\}$ performs the hypothesis testing employing decoder $P^{HT}(\hat{h}_{j}|m_0,f_0,m_j,f_j,z^{n}_{j})$ of $\textit{Protocol A}$. The random pmf induced by this protocol, denoted as $\hat{P}$, can be expressed as
	
	\begin{equation} \label{eq:43}
		\begin{aligned}
			\hat{P}(x^{n},z^{n}_{1},z^{n}_{2},y^{n}_{[0:2]},&m^{n}_{[0:2]},f_{[0:2]},\hat{h}_{1},\hat{h}_{2})  \\ =&
			p^{U}(f_{[0:2]})\phi(x^{n},z^{n}_{1},z^{n}_{2})P(y^{n}_{[0:2]}|x^{n},f_{[0:2]})P(m_{0}|y^{n}_{0})P(m_{1}|y^{n}_{1}) 
			\\& \times
			P(m_{2}|y^{n}_{2})P^{HT}(\hat{h}_{1}|m_0,f_0,m_1,f_1,z^{n}_{1})P^{HT}(\hat{h}_{2}|m_0,f_0,m_2,f_2,z^{n}_{2}),   
		\end{aligned}
	\end{equation}

	Note that $P(y^{n}_{[0:2]}|x^{n},f_{[0:2]})$ is independent of the $\phi$ as long as the consistency condition on the marginal distributions of the two hypotheses holds, because it can be displayed as
	\[
		P(y^{n}_{[0:2]}|x^{n},f_{[0:2]}) = 
		\frac{P(y^{n}_{[0:2]},x^{n},f_{[0:2]})}{\sum_{y^{n}_{[0:2]} \in \mathcal{Y}^{n}_{[0:2]}}P(y^{n}_{[0:2]},x^{n},f_{[0:2]})} =
		\frac{\phi(x^{n})p(y^{n}_{[0:2]}|x^{n})P(f_{[0:2]}|y^{n}_{[0:2]})}{\sum_{y^{n}_{[0:2]} \in \mathcal{Y}^{n}_{[0:2]}}\phi(x^{n})p(y^{n}_{[0:2]}|x^{n})P(f_{[0:2]}|y^{n}_{[0:2]})},
	\]
	which is indifferent towards the particular occurrence of $\phi$ since $\phi(x^n)=\prod_{i=1}^{n}p_{X}(x^n)$ is valid regardless of the true hypothesis.
	
	\proofpart{2a}{Sufficient conditions that make the hypothesis testing in the dual setup successful.}
	We deem a hypothesis testing scheme successful when the obtained type \rom{1} error by the scheme is vanishing and the type \rom{2} error fades exponentially as $n \rightarrow \infty$. For this evaluation to be made, first we need to describe our proposed hypothesis testing scheme at the detectors.  
	For Detector $j \in \{1,2\}$ consider the following events: 
	\begin{equation*}
	\begin{gathered}
		\mathcal{E}_{0}=\left\{T \in \mathcal{T}^{n}_{[\, p_{X,Y_{[0:2]}}]_{\delta^{\prime}_{n}}}\right\}, \\ 
		\mathcal{E}_{j}\!=\!\left\{ \exists \ (\tilde{y}^{n}_{0},\tilde{y}^{n}_{j}) : \mathcal{B}_{M,0}(\tilde{y}^{n}_{0}) \!=\! M_0 \wedge \mathcal{B}_{F,0}(\tilde{y}^{n}_{0})\!=\!F_0 \wedge \mathcal{B}_{M,j}(\tilde{y}^{n}_{j})\!=\!M_j \wedge \mathcal{B}_{F,j}(\tilde{y}^{n}_{j})\!=\!F_j \wedge   (\tilde{y}^{n}_{0},\tilde{y}^{n}_{j},Z^{n}_{j}) \in \mathcal{T}^{n}_{[\, p_{Y_0,Y_j,Z_j}]_{\delta^{\prime}_{n}}} \right\},		
	\end{gathered}
	\end{equation*}
	 where $p_{Y_0,Y_j,Z_j}(y^{n}_{0},y^{n}_{j},z^{n}_{j})=\sum_{x^{n}}p(x^n,z^{n}_{j})p(y^{n}_{0},y^{n}_{j}|x^{n})$. Note that if we define 
	 
	 \begin{equation*}
	 	\begin{gathered}
	 		\mathcal{E}_{j,S}= \left\{(Y^{n}_{0},Y^{n}_{j},Z^{n}_{j}) \in \mathcal{T}^{n}_{[\, p_{Y_0,Y_j,Z_j}]_{\delta^{\prime}_{n}}}\right\}, \\
	 		\mathcal{E}_{j,NS}=
	 		\left\{\begin{array}{cc}
	 			 \exists \ (\tilde{y}^{n}_{0},\tilde{y}^{n}_{j}) : \mathcal{B}_{M,0}(\tilde{y}^{n}_{0}) = M_0 \wedge \mathcal{B}_{F,0}(\tilde{y}^{n}_{0})=F_0 \wedge \mathcal{B}_{M,j}(\tilde{y}^{n}_{J})=M_j \wedge \mathcal{B}_{F,j}(\tilde{y}^{n}_{j})=F_j \\ 
	 			\text{for some } \tilde{y}^{n}_{0} \neq Y^{n}_{0} \text{ or } \tilde{y}^{n}_{j} \neq Y^{n}_{j} \text{ such that }  (\tilde{y}^{n}_{0},\tilde{y}^{n}_{j},Z^{n}_{j}) \in \mathcal{T}^{n}_{[\, p_{Y_0,Y_j,Z_j}]_{\delta^{\prime}_{n}}} 
	 		\end{array}\right\},
	 	\end{gathered}
	 \end{equation*}
 	evidently we have $\mathcal{E}_{j}=\{\mathcal{E}_{j,S} \cup \mathcal{E}_{j,NS}\}$. The decision function at Detector $j \in \{1,2\}$ can be expressed as follows:
	 \begin{equation}
	 	\hat{H}_{j} =  g^{n}_{j}(Z^{n}_{1},M_0,M_j,F_0,F_j) = 1 - \mathbbm{1}\left(
	 	\mathcal{E}_{0}
	 	\cap
	 	\mathcal{E}_{j}
	 	\right).
	 \end{equation}
	\textit{Type \rom{1} error analysis}: The following lemma describes a vanishing upper bound for the type \rom{1} error of the dual problem:
	\begin{lemma}\thlabel{type1dual}
		The type \rom{1} error of the HT at Detector $j \in \{1,2\}$ of \textit{Protocol A} is bounded as:
		\begin{equation}
			P\condx*{\hat{H}_{j}=1}{H=0} \leq \epsilon_{n} + \delta_{n} \stackrel{n \rightarrow \infty}{\rightarrow} 0 \label{eq:57}.
		\end{equation}
	\end{lemma}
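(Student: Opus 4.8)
The plan is to read the type I error straight off the decision rule, split it with a union bound, and reduce the two resulting pieces to standard concentration of empirical distributions of i.i.d.\ sequences under $H_0$. First note that, under $H=0$, marginalizing the bin indices $m_{[0:2]},f_{[0:2]}$, the HT outputs $\hat h_{1},\hat h_{2}$, and the other detector's side information out of the Protocol~A pmf~\eqref{eq:44} leaves the i.i.d.\ law $\prod_{t=1}^{n} p_{X,Z_{1},Z_{2}}\,p_{Y_{[0:2]}|X}$ on $(X^{n},Z^{n}_{1},Z^{n}_{2},Y^{n}_{[0:2]})$, which respects the Markov chain $Y_{[0:2]}\leftrightarrow X\leftrightarrow Z_{[1:2]}$ built into Protocol~A. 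Since $\hat H_{j}=1-\mathbbm{1}(\mathcal{E}_{0}\cap\mathcal{E}_{j})$, the event $\{\hat H_{j}=1\}$ coincides with $\mathcal{E}_{0}^{c}\cup\mathcal{E}_{j}^{c}$, so that $P(\hat H_{j}=1\mid H=0)\le P(\mathcal{E}_{0}^{c}\mid H=0)+P(\mathcal{E}_{j}^{c}\mid H=0)$, and it suffices to bound each term.

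For the first term, $T$ is the joint type of $(X^{n},Y^{n}_{[0:2]})$, i.e.\ the empirical distribution of $n$ i.i.d.\ draws from $p_{X,Y_{[0:2]}}$ under $H_0$; hence $P(\mathcal{E}_{0}^{c}\mid H=0)$ is exactly the probability that this empirical distribution leaves the $\delta'_{n}$-neighbourhood of $p_{X,Y_{[0:2]}}$. A per-symbol Hoeffding bound with a union bound over $\mathcal{X}\times\mathcal{Y}_{[0:2]}$ (equivalently, a type-counting/Sanov estimate) bounds this by a sequence $\epsilon_{n}\to 0$, provided the typicality slack $\delta'_{n}\to 0$ is chosen to decay slowly enough (slower than $n^{-1/2}$), which is how $\delta'_{n}$ is selected. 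This bound does not involve the binning realization $\mathcal{B}$, since $\mathcal{E}_{0}$ is a function of the source type alone.

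For the second term I would first note the deterministic inclusion $\mathcal{E}_{j,S}\subseteq\mathcal{E}_{j}$: the codewords $(Y^{n}_{0},Y^{n}_{j})$ actually produced by the encoder satisfy $\mathcal{B}_{M,0}(Y^{n}_{0})=M_{0}$, $\mathcal{B}_{F,0}(Y^{n}_{0})=F_{0}$, $\mathcal{B}_{M,j}(Y^{n}_{j})=M_{j}$, $\mathcal{B}_{F,j}(Y^{n}_{j})=F_{j}$ by construction, so whenever $(Y^{n}_{0},Y^{n}_{j},Z^{n}_{j})$ is jointly $\delta'_{n}$-typical the pair $(\tilde y^{n}_{0},\tilde y^{n}_{j})=(Y^{n}_{0},Y^{n}_{j})$ already witnesses $\mathcal{E}_{j}$; this holds for every $\mathcal{B}$. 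Therefore $P(\mathcal{E}_{j}^{c}\mid H=0)\le P(\mathcal{E}_{j,S}^{c}\mid H=0)$, and under $H_0$ the Markov chain yields $(Y^{n}_{0},Y^{n}_{j},Z^{n}_{j})\sim\prod_{t=1}^{n}p_{Y_{0},Y_{j},Z_{j}}$, so the same concentration argument bounds $P(\mathcal{E}_{j,S}^{c}\mid H=0)$ by a sequence $\delta_{n}\to 0$. Adding the two estimates gives $P(\hat H_{j}=1\mid H=0)\le\epsilon_{n}+\delta_{n}\to 0$, and since both estimates are independent of $\mathcal{B}$ and of the choice of the decoders $P^{HT}$, the bound holds for every realization of the random binning.

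I do not expect a real obstacle here; the argument is essentially bookkeeping. The two points that need care are (i) fixing the typicality slack $\delta'_{n}$ so that concentration still applies while $\delta'_{n}\to 0$ (this pins down the rates at which $\epsilon_{n}$ and $\delta_{n}$ vanish), and (ii) recognizing that the ``non-self'' collision event $\mathcal{E}_{j,NS}$ plays no role in the type I analysis — it is only needed later, when the type II error exponent of the dual problem is computed — so that controlling $\mathcal{E}_{j,S}$ alone is enough.
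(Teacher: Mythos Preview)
Your proposal is correct and follows essentially the same approach as the paper: union-bound $\{\hat H_j=1\}=\mathcal{E}_0^c\cup\mathcal{E}_j^c$, control $P(\mathcal{E}_0^c)$ by AEP/concentration of the i.i.d.\ type, and control $P(\mathcal{E}_j^c)$ via the inclusion $\mathcal{E}_{j,S}\subseteq\mathcal{E}_j$ together with the same concentration argument. Your additional remarks (the inclusion holds for every binning realization, and $\mathcal{E}_{j,NS}$ is irrelevant here) are accurate and match the paper's use of $\mathcal{E}_j^c=\mathcal{E}_{j,S}^c\cap\mathcal{E}_{j,NS}^c\subseteq\mathcal{E}_{j,S}^c$.
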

	\begin{proof}
	Consider the case where the true hypothesis corresponds to the null hypothesis, $H=0$, implying that $\phi_{X,Z_1,Z_2}=p_{X,Z_1,Z_2}$. The type \rom{1} error of the hypothesis testing at Detector $j \in \{1,2\}$ of the dual protocol in this case can be written as follows: 
	\begin{align}
		P\condx*{\hat{H}_{j}=1}{H=0} 
		&= P\condx*{\mathcal{E}^{c}_{0} \cup \mathcal{E}^{c}_{j}}{ \phi_{X,Z_1,Z_2}=p_{X,Z_1,Z_2}} \\
		&\leq 
		P\condx*{\mathcal{E}^{c}_{0}}{ \phi_{X,Z_1,Z_2}=p_{X,Z_1,Z_2}} +
		P\condx*{\mathcal{E}^{c}_{j}}{ \phi_{X,Z_1,Z_2}=p_{X,Z_1,Z_2}}, \label{eq:47}
	\end{align}
	where \eqref{eq:47} follows from the union bound, also known as Boole's inequality. 
	Recall that the tuple $(X^{n},Y^{n}_{[0:2]},Z^{n}_{[1:2]})$ in the dual problem is i.i.d according to $p_{X,Z_1,Z_2}p_{Y_{[0;2]}|X}$, meaning the terms in \eqref{eq:47} could be bounded as 
	\begin{equation}
		P\condx*{\mathcal{E}^{c}_{0}}{ \phi_{X,Z_1,Z_2}=p_{X,Z_1,Z_2}} \leq \epsilon_n \rightarrow 0, \label{eq:48}
	\end{equation}
	and
	\begin{align}
	P\condx*{\mathcal{E}^{c}_{j}}{ \phi_{X,Z_1,Z_2}=p_{X,Z_1,Z_2}} &\leq P\condx*{
	\mathcal{E}^{c}_{j,S}	\cap \mathcal{E}^{c}_{j,NS}
	}{ \phi_{X,Z_1,Z_2}=p_{X,Z_1,Z_2}} \\
	&\leq P\condx*{
		\mathcal{E}^{c}_{j,S}
	}{ \phi_{X,Z_1,Z_2}=p_{X,Z_1,Z_2}} \\
	&\leq \delta_n \rightarrow 0, \label{eq:49}
	\end{align}
	where \eqref{eq:48} and \eqref{eq:49} follow from the AEP. Subsequently we obtain:
	\begin{equation}
		P\condx*{\hat{H}_{j}=1}{H=0} \leq \epsilon_{n} + \delta_{n} \rightarrow 0.
	\end{equation}
	\end{proof}
	\textit{Type \rom{2} error analysis}: 
	When the true hypothesis is $H=1$, meaning $\phi_{X,Z_1,Z_2}=q_{X,Z_1,Z_2}$, the type \rom{2} error at Detector $j \in \{1,2\}$ of the dual protocol is evaluated in the following lemma.
	\begin{lemma}\thlabel{type2dual}
		In \textit{Protocol A} for dual problem, if $(R_{[0:2]},\tilde{R}_{[0:2]})$ satisfy the following conditions:
		\begin{equation} 
			\begin{gathered}
				R_{0} + \tilde{R}_{0} > H(Y_{0}|Y_{j}Z_{j}), \\
				R_{j} + \tilde{R}_{j} > H(Y_{j}|Y_{0}Z_{j}), \\
				R_{0} + \tilde{R}_{0} + R_{j} + \tilde{R}_{j} > H(Y_{0}Y_{j}|Z_{j}), \\
			\end{gathered}
		\end{equation}
		then the type \rom{2} error of the HT at Detector $j \in \{1,2\}$ is bounded as:
		\begin{align}
			-\frac{1}{n} \limsup_{n \rightarrow \infty} 
			P\condx*{\hat{H}_{j}=0}{H=1} \geq
			E_{0,j}(p_{Y_{[0:2]}|X}) +
			E_{1,j}(p_{Y_{[0:2]}|X}). \label{eq:61}
		\end{align}
	\end{lemma}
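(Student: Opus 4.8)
The plan is to analyze the type II error of Detector $j$ in Protocol A directly from the event decomposition $\hat H_j=0 \iff (\mathcal{E}_0\cap\mathcal{E}_j)^c$, under $\phi_{X,Z_1,Z_2}=q_{X,Z_1,Z_2}$. Since $\hat H_j=0$ requires both $\mathcal{E}_0$ and $\mathcal{E}_j$ to occur, we have $\beta_{n,j}=P(\mathcal{E}_0\cap\mathcal{E}_j\mid H=1)\le P(\mathcal{E}_0\cap\mathcal{E}_{j,S}\mid H=1)+P(\mathcal{E}_0\cap\mathcal{E}_{j,NS}\mid H=1)$, so it suffices to bound each of the two contributions and match them to $E_{0,j}$ and $E_{1,j}$ respectively. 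The key point is that under $H=1$ the tuple $(X^n,Y^n_{[0:2]},Z^n_j)$ is i.i.d.\ according to $q_{X,Z_j}p_{Y_{[0:2]}|X}$ (the auxiliary $Y$'s are generated from $p_{Y_{[0:2]}|X}$ regardless of the hypothesis), so the probability that it lands in a typical set defined through $p$ is an atypicality event whose exponent is a relative-entropy minimization.

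First I would handle the "stay" term $P(\mathcal{E}_0\cap\mathcal{E}_{j,S}\mid H=1)$. Here $\mathcal{E}_0$ forces the joint type of $(X^n,Y^n_{[0:2]})$ to be $\delta'_n$-close to $p_{X,Y_{[0:2]}}$ and $\mathcal{E}_{j,S}$ forces $(Y^n_0,Y^n_j,Z^n_j)$ to be $\delta'_n$-close to $p_{Y_0,Y_j,Z_j}$; the probability of this under the product law $q_{X,Z_j}p_{Y_{[0:2]}|X}$ is, by Sanov-type / method-of-types bounds, $\doteq 2^{-n\min D(\pi_{X,Y_{[0:2]},Z_j}\|q_{X,Z_j}p_{Y_{[0:2]}|X})}$ where the minimum is over $\pi$ consistent with both typicality constraints, i.e.\ over $\mathcal{K}_{0,j}$ (equivalently the set $\mathcal{K}_0$ appearing in $E_{0,j}$ after taking $\delta'_n\to 0$). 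This yields the $E_{0,j}(p_{Y_{[0:2]}|X})$ term. I would be careful to let the slack parameters $\delta'_n\to 0$ so that the constraint sets converge and the minimized divergence converges to the one in the theorem.

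Next I would handle the "not-stay" term $P(\mathcal{E}_0\cap\mathcal{E}_{j,NS}\mid H=1)$, which is where the binning-rate conditions enter. Conditioned on the actual $(Y^n_0,Y^n_j)$ and on the realized bin indices $(M_0,F_0,M_j,F_j)$, the event $\mathcal{E}_{j,NS}$ asks for a \emph{different} pair $(\tilde y^n_0,\tilde y^n_j)$ that collides in all four bins and is jointly typical with $Z^n_j$; by the union bound over the conditional type class and the fact that $\mathcal{B}_{M,0},\mathcal{B}_{F,0},\mathcal{B}_{M,j},\mathcal{B}_{F,j}$ are uniform and independent, the conditional probability of such a collision is at most the number of typical $(\tilde y^n_0,\tilde y^n_j)$ compatible with $Z^n_j$ divided by $2^{n(R_0+\tilde R_0+R_j+\tilde R_j)}$, and similarly for the sub-events where only $\tilde y^n_0$ differs or only $\tilde y^n_j$ differs. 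Using $|\mathcal{T}^n_{p_{Y_\mathcal{S}|Z_j,Y_{\mathcal{S}^c}}}(\cdot)|\le 2^{n(H(Y_\mathcal{S}|Z_j,Y_{\mathcal{S}^c})+o(1))}$ for $\mathcal{S}\subseteq\{0,j\}$ and then combining with the outer atypicality exponent over $\mathcal{K}_{1,j}$ (where now only $\pi_{X,Y_{[0:2]}}=p_{X,Y_{[0:2]}}$ and $\pi_{Z_j}=p_{Z_j}$ are imposed, since $Z^n_j$ need not be jointly typical with the true $Y$'s in this branch) produces exactly $\min_\pi D(\cdot\|\cdot)+\min_{\emptyset\neq\mathcal{S}\subseteq\{0,j\}}(\sum_{i\in\mathcal{S}}R_i+\tilde R_i - H(Y_\mathcal{S}|Z_j,Y_{\mathcal{S}^c}))$, i.e.\ $E_{1,j}(p_{Y_{[0:2]}|X})$; the hypothesis $R_0+\tilde R_0>H(Y_0|Y_jZ_j)$ etc.\ guarantees the bracketed quantities are positive so this term is genuinely exponentially small and does not collapse the bound.

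The main obstacle I expect is the bookkeeping in the not-stay branch: correctly enumerating the three sub-cases of $\mathcal{E}_{j,NS}$ ($\tilde y^n_0\neq Y^n_0$ only, $\tilde y^n_j\neq Y^n_j$ only, both differ), matching each to the corresponding subset $\mathcal{S}$, and controlling the conditioning on the already-realized $(Y^n_0,Y^n_j,Z^n_j)$ and bin values so that the collision probabilities multiply cleanly — all while threading the $\delta'_n$ and $\delta^{\mathcal{S}}_n$ slacks through to vanish in the limit. Finally I would combine the two exponents additively (the $-\tfrac1n\log$ of a sum of two terms is the min of the exponents, but since both appear as separate additive contributions in $\theta^*_j$ through $E_{0,j}+E_{1,j}$, the statement follows by noting $\beta_{n,j}$ is upper bounded by a product-type bound whose exponent is $E_{0,j}+E_{1,j}$, not a sum — I would double-check which of "sum of probabilities" versus "product of conditional probabilities" is operative and confirm it gives the additive form claimed in \eqref{eq:61}).
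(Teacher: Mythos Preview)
Your approach is essentially the paper's: decompose $\mathcal{E}_j=\mathcal{E}_{j,S}\cup\mathcal{E}_{j,NS}$, bound the stay term by Sanov over $\mathcal{K}_{0,j}$ to obtain $E_{0,j}$, split $\mathcal{E}_{j,NS}$ into the three sub-events indexed by which of $(\tilde y^n_0,\tilde y^n_j)$ differs, and bound each by a Sanov factor times a uniform-binning collision count. The paper makes the conditioning step you worry about precise by introducing auxiliary events $\Psi_{i,j}$ (marginal typicality of $Z^n_j$, $(Y^n_j,Z^n_j)$, or $(Y^n_0,Z^n_j)$, each implied by the corresponding $\mathcal{E}_{j,NS,i}$) and writing $P_q(\mathcal{E}_0\cap\mathcal{E}_{j,NS,i})=P_q(\mathcal{E}_0\cap\Psi_{i,j})\,P_q(\mathcal{E}_{j,NS,i}\mid\mathcal{E}_0\cap\Psi_{i,j})$; the three Sanov sets $\mathcal{K}^n_{\Psi_{i,j}}$ are then all relaxed to $\mathcal{K}^n_{\Psi_{0,j}}=\mathcal{K}_{1,j}$, exactly as you anticipate.

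Your hesitation in the last paragraph is well placed. The paper's own chain of inequalities terminates in
\[
P\!\left(\hat H_j=0\mid H=1\right)\le 2^{-nE^n_{0,j}}+2^{-nE^n_{1,j}},
\]
a \emph{sum} of two probabilities, so the resulting exponent is $\min\{E_{0,j},E_{1,j}\}$, not $E_{0,j}+E_{1,j}$. The ``$+$'' in the displayed conclusion of the lemma is a typo; this is confirmed by the main theorem, where $\theta_j^*=\min\{E_{0,j},E_{1,j},E_{2,j}\}$, and by the analogous later bound $2^{-nE^n_{0,j}}+2^{-nE^n_{1,j}}+2^{-nE^n_{2,j}}$ in Step~3. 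So you should not try to manufacture a product bound; the sum-of-probabilities route you outline is exactly what is done.
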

\begin{proof}
	We expand the type \rom{2} error at Detector $j \in \{1,2\}$ as:
	\begin{align}
		P\condx*{\hat{H}_{j}=0}{H=1}\!
		&=\! P\condx*{\mathcal{E}_{0} \cap \mathcal{E}_{j}}{ \phi_{X,Z_1,Z_2}=q_{X,Z_1,Z_2}} \\
		&= \! P\condx*{\mathcal{E}_{0} \cap \{\mathcal{E}_{j,S} \cup \mathcal{E}_{j,NS}\}}{ \phi_{X,Z_1,Z_2}=q_{X,Z_1,Z_2}} \\
		&\leq		
		\! P\condx*{\mathcal{E}_{0} \cap \mathcal{E}_{j,S}}{ \phi_{X,Z_1,Z_2}=q_{X,Z_1,Z_2}}\! +\!
		P\condx*{\mathcal{E}_{0} \cap \mathcal{E}_{j,NS}}{ \phi_{X,Z_1,Z_2}=q_{X,Z_1,Z_2}}, \label{eq:50}
	\end{align}
		where \eqref{eq:50} follows from the union bound. From now on we are using the $P\condx*{\cdot}{ \phi_{X,Z_1,Z_2}=q_{X,Z_1,Z_2}} \coloneqq P_q(\cdot)$ for the sake of convenience. Note that the tuple $(X^{n},Y^{n}_{[0:2]},Z^{n}_{[1:2]})$ in the dual problem is i.i.d according to $q_{X,Z_1,Z_2}p_{Y_{[0;2]}|X}$, permitting the use of Sanov's theorem \cite[Problem 2.12]{csiszar11} to bound the first term in \eqref{eq:50} as follows,
		
		\begin{align}
			-\frac{1}{n}\log P_q\para*{\mathcal{E}_{0} \cap \mathcal{E}_{j,S}}
			\geq
			\min_{\pi_{\bar{X},\bar{Y}_{[0:2]},\bar{Z}_j} \in \mathcal{K}^{n}_{0,j}} \infdiv*{\pi_{\bar{X},\bar{Y}_{[0:2]},\bar{Z}_j}}{q_{X,Z_j}p_{Y_{[0:2]}|X}} \label{eq:60}
			- \nu_{n,j},
		\end{align}
	for 
	\[
		\mathcal{K}^{n}_{0,j}= \left\{
		\pi_{\bar{X},\bar{Y}_{[0:2]},\bar{Z}_j} \in \mathcal{P}_{n}\left(\mathcal{X}\times\mathcal{Y}_{[0:2]}\times\mathcal{Z}_j\right) \ : \
		\pi_{\bar{X},\bar{Y}_{[0:2]}} \stackrel{\delta^{\prime}_{n}}{\approx} p_{X,Y_{[0:2]}} \wedge
		\pi_{\bar{Y}_{0},\bar{Y}_{j},\bar{Z}_j} \stackrel{\delta^{\prime}_{n}}{\approx} p_{Y_{0},Y_{j},Z_j}
		\right\},
	\]
where $p_{Y_0,Y_j,Z_j}(y^{n}_{0},y^{n}_{j},z^{n}_{j})=\sum_{x^{n}}p(x^n,z^{n}_{j})p(y^{n}_{0},y^{n}_{j}|x^{n})$ and $\nu_{n,j} \coloneqq \frac{\log{n+1}}{n}\abs{\mathcal{X}}\abs{\mathcal{Y}_{[0:2]}}\abs{\mathcal{Z}_j}$. We refer to this obtained exponent by
 \[
 	E^{n}_{0,j}(p_{Y_{[0:2]}|X}) \coloneqq 
 	\min_{\pi_{\bar{X},\bar{Y}_{[0:2]},\bar{Z}_j} \in \mathcal{K}^{n}_{0,j}} \infdiv*{\pi_{\bar{X},\bar{Y}_{[0:2]},\bar{Z}_j}}{q_{X,Z_j}p_{Y_{[0:2]}|X}}
 	- \nu_{n,j}.
 \]

For the second term in \eqref{eq:50} we can further decouple the events by forming new combinations as:‌
	\begin{equation*}
		\begin{gathered}
		\mathcal{E}_{j,NS,0}=
		\left\{\begin{array}{cc}
			\exists \ (\tilde{y}^{n}_{0},\tilde{y}^{n}_{j}) : \mathcal{B}_{M,0}(\tilde{y}^{n}_{0}) = M_0 \wedge \mathcal{B}_{F,0}(\tilde{y}^{n}_{0})=F_0 \wedge \mathcal{B}_{M,j}(\tilde{y}^{n}_{J})=M_j \wedge \mathcal{B}_{F,j}(\tilde{y}^{n}_{j})=F_j \\ 
			\text{for some } \tilde{y}^{n}_{0} \neq Y^{n}_{0} \text{ and } \tilde{y}^{n}_{j} \neq Y^{n}_{j} \text{ such that }  (\tilde{y}^{n}_{0},\tilde{y}^{n}_{j},Z^{n}_{j}) \in \mathcal{T}^{n}_{[\,p_{Y_0,Y_j,Z_j}]_{\delta^{\prime}_{n}}} 
		\end{array}\right\}, \\
	\mathcal{E}_{j,NS,1}=
	\left\{\begin{array}{cc}
		\exists \ \tilde{y}^{n}_{0} : \mathcal{B}_{M,0}(\tilde{y}^{n}_{0}) = M_0 \wedge \mathcal{B}_{F,0}(\tilde{y}^{n}_{0})=F_0  \\ 
		\text{for some } \tilde{y}^{n}_{0} \neq Y^{n}_{0} \text{ such that }  (\tilde{y}^{n}_{0},Y^{n}_{j},Z^{n}_{j}) \in \mathcal{T}^{n}_{[\,p_{Y_0,Y_j,Z_j}]_{\delta^{\prime}_{n}}} 
	\end{array}\right\}, \\
\mathcal{E}_{j,NS,2}=
\left\{\begin{array}{cc}
	\exists \ \tilde{y}^{n}_{j} : \mathcal{B}_{M,j}(\tilde{y}^{n}_{j}) = M_j \wedge \mathcal{B}_{F,j}(\tilde{y}^{n}_{j})=F_j  \\ 
	\text{for some } \tilde{y}^{n}_{j} \neq Y^{n}_{j} \text{ such that }  (Y^{n}_{0},\tilde{y}^{n}_{j},Z^{n}_{j}) \in \mathcal{T}^{n}_{[\,p_{Y_0,Y_j,Z_j}]_{\delta^{\prime}_{n}}} 
\end{array}\right\}.
	\end{gathered}
\end{equation*}
Note that $\mathcal{E}_{j,NS}=\left\{ \mathcal{E}_{j,NS,0} \cup \mathcal{E}_{j,NS,1} \cup \mathcal{E}_{j,NS,2} \right\}$. By using the union bound, one can write the second term in \eqref{eq:50} as:

	\begin{align}
		P_q\para*{\mathcal{E}_{0} \cap \mathcal{E}_{j,NS}}
		& \leq
		\sum_{i \in [0:2]}
		P_q\para*{\mathcal{E}_{0} \cap \mathcal{E}_{j,NS,i}}
		\\& =
		\sum_{i \in [0:2]}
		P_q\para*{\mathcal{E}_{0} \cap \Psi_{i,j}}
		P_q\condx*{\mathcal{E}_{j,NS,i}}{\mathcal{E}_{0} \wedge \Psi_{i,j}}, \label{eq:51}
	\end{align}
	where $\Psi_{0,j}\coloneqq\left\{Z^{n}_j \in \mathcal{T}^{n}_{[\,p_{Z_j}]_{\delta^{'}_{n}}}\right\}$, $\Psi_{1,j}\coloneqq\left\{(Y^{n}_j,Z^{n}_j) \in \mathcal{T}^{n}_{[\,p_{Y_j,Z_j}]_{\delta^{'}_{n}}} \right\}$ and $\Psi_{2,j}\coloneqq\left\{(Y^{n}_0,Z^{n}_j) \in \mathcal{T}^{n}_{[\,p_{Y_0,Z_j}]_{\delta^{'}_{n}}} \right\}$.
	The first term inside the sum in \eqref{eq:51} can be bounded again by using Sanov's theorem, \cite[Problem 2.12]{csiszar11}, yielding the following results for $i \in [0:2]$,
	\begin{align}
		-\frac{1}{n}\log P_q\para*{\mathcal{E}_{0} \cap \Psi_{i,j}}
		\geq
		\min_{\pi_{\bar{X},\bar{Y}_{[0:2]},\bar{Z}_j} \in \mathcal{K}^{n}_{\Psi_{i,j}}} \infdiv*{\pi_{\bar{X},\bar{Y}_{[0:2]},\bar{Z}_j}}{q_{X,Z_j}p_{Y_{[0:2]}|X}} \label{eq:80}
		- \nu_{n,j},
	\end{align}
	where, 
	\begin{equation*}
		\begin{gathered}
	\mathcal{K}^{n}_{\Psi_{0,j}}= \left\{
	\pi_{\bar{X},\bar{Y}_{[0:2]},\bar{Z}_j} \in \mathcal{P}_{n}\left(\mathcal{X}\times\mathcal{Y}_{[0:2]}\times\mathcal{Z}_j\right) \ : \
	\pi_{\bar{X},\bar{Y}_{[0:2]}} \stackrel{\delta^{\prime}_{n}}{\approx} p_{X,Y_{[0:2]}} \wedge
	\pi_{\bar{Z}_j} \stackrel{\delta^{\prime}_{n}}{\approx} p_{Z_j}
	\right\}, \\
	\mathcal{K}^{n}_{\Psi_{1,j}}= \left\{
	\pi_{\bar{X},\bar{Y}_{[0:2]},\bar{Z}_j} \in \mathcal{P}_{n}\left(\mathcal{X}\times\mathcal{Y}_{[0:2]}\times\mathcal{Z}_j\right) \ : \
	\pi_{\bar{X},\bar{Y}_{[0:2]}} \stackrel{\delta^{\prime}_{n}}{\approx} p_{X,Y_{[0:2]}} \wedge
	\pi_{\bar{Y}_j,\bar{Z}_j} \stackrel{\delta^{\prime}_{n}}{\approx} p_{Y_j,Z_j}
	\right\}, \\
	\mathcal{K}^{n}_{\Psi_{2,j}}= \left\{
	\pi_{\bar{X},\bar{Y}_{[0:2]},\bar{Z}_j} \in \mathcal{P}_{n}\left(\mathcal{X}\times\mathcal{Y}_{[0:2]}\times\mathcal{Z}_j\right) \ : \
	\pi_{\bar{X},\bar{Y}_{[0:2]}} \stackrel{\delta^{\prime}_{n}}{\approx} p_{X,Y_{[0:2]}} \wedge
	\pi_{\bar{Y}_0,\bar{Z}_j} \stackrel{\delta^{\prime}_{n}}{\approx} p_{Y_0,Z_j}
	\right\},
\end{gathered}
\end{equation*}
	and $\nu_{n,j} = \frac{\log{n+1}}{n}\abs{\mathcal{X}}\abs{\mathcal{Y}_{[0:2]}}\abs{\mathcal{Z}_j}$, as we defined earlier.
	To bound the $P_q\condx*{\mathcal{E}_{j,NS,0}}{\mathcal{E}_{0} \wedge \Psi_{0,j}}$ in \eqref{eq:51}, one can easily see that while the particular instances of $(y^{n}_{0},y^{n}_{j})$ who are jointly typical with $z^{n}_j$ depend strictly on the specific choice of $z^{n}_j$, their number, i.e., $\abs{\mathcal{T}^{n}_{p_{Y_0,Y_j|Z_j}}(z^{n}_j)}$, depends on $z^{n}_j$ only through their type. Consequently, the probability of $\mathcal{E}_{j,NS,0}$ can be bounded by using the law of total probability as:
	
	\begin{align}
		P_q&\condx*{\mathcal{E}_{j,NS,0}}{\mathcal{E}_{0} \wedge \Psi_{0,j}} =   
		\frac{P_q\condx*{\mathcal{E}_{j,NS,0} \cap \Psi_{0,j}}{ \mathcal{E}_{0}}}{P_q\condx*{\Psi_{0,j}}{\mathcal{E}_{0}}}
		\label{eq:68}
		\\ 
		&\begin{aligned}
		&=
		\sum_{\mathclap{z^{n}_{j} \in \Psi_{0,j}}}\frac{q\condx*{z^{n}_{j}}{\mathcal{E}_{0}}}{P_q\condx*{\Psi_{0,j}}{\mathcal{E}_{0}}} \\& \times
		\osum{(y^{n}_{0},y^{n}_{j}) \in \mathcal{T}^{n}_{[\,p_{Y_0,Y_j|Z_j}]_{\delta^{'}_{n}}}\!(z^{n}_j)}{
		P_q\left(		
		 \mathcal{B}_{M_0}(y^{n}_{0})\!=\!M_0 \wedge \mathcal{B}_{F_0}(y^{n}_{0})\!=\!F_0 \wedge \mathcal{B}_{M_j}(y^{n}_{j})\!=\!M_j \wedge \mathcal{B}_{F_j}(y^{n}_{j})\!=\!F_j
		\right)}  \label{eq:69}
		\end{aligned}
		\\&=
		\sum_{z^{n}_{j} \in \Psi_{0,j}}\frac{q\condx*{z^{n}_{j}}{\mathcal{E}_{0}}}{P_q\condx*{\Psi_{0,j}}{\mathcal{E}_{0}}}
		\sum_{(y^{n}_{0},y^{n}_{j}) \in \mathcal{T}^{n}_{[\,p_{Y_0,Y_j|Z_j}]_{\delta^{'}_{n}}}\!(z^{n}_j)}		
		2^{-nR_0} \times 2^{-n\tilde{R}_0} \times 2^{-nR_j} \times 2^{-n\tilde{R}_j}
		\label{eq:70}
		\\&=
		\sum_{z^{n}_{j} \in \Psi_{0,j}}\frac{q\condx*{z^{n}_{j}}{\mathcal{E}_{0}}}{P_q\condx*{\Psi_{0,j}}{\mathcal{E}_{0}}}
		\abs{\mathcal{T}^{n}_{[\,p_{Y_0,Y_j|Z_j}]_{\delta^{'}_{n}}}\!(z^{n}_j)}		
		2^{-n\left(R_0 + \tilde{R}_0 + R_j + \tilde{R}_j\right)}
		\label{eq:71}
		\\ &\leq
		\sum_{z^{n}_{j} \in \Psi_{0,j}}\frac{q\condx*{z^{n}_{j}}{\mathcal{E}_{0}}}{P_q\condx*{\Psi_{0,j}}{\mathcal{E}_{0}}} \times
		2^{n\left(
			\cond{Y_0,Y_j}{Z_j}
			+ \eta_n
			\right)}		\times
		2^{-n\left(R_0 + \tilde{R}_0 + R_j + \tilde{R}_j\right)} 
		\label{eq:72}
		\\ &\leq
		2^{-n\left(
			R_0 + \tilde{R}_0 + R_j + \tilde{R}_j -
			\cond*{Y_0,Y_j}{Z_j}
			- \eta_{0,n}
		 \right)}, \label{eq:73}
	\end{align}
	where $\eta_{0,n} \rightarrow 0$ as $n \rightarrow \infty$; \eqref{eq:69} follows from the definition of the events and the law of total probability, \eqref{eq:70} follows because the random mappings are done uniformly and independently, \eqref{eq:71} is reached since the terms inside the summation in \eqref{eq:70} do not depend on the specific values of $(y^{n}_{0},y^{n}_{j})$, and therefore, the summation can be replaced by the size of its subscription, i.e., $\abs{\mathcal{T}^{n}_{[\,p_{Y_0,Y_j|Z_j}]_{\delta^{'}_{n}}}\!(z^{n}_j)}$. This term then is bounded using \cite[Lemma 2.13]{csiszar11} to prompt \eqref{eq:72}. Eventually, inequality in \eqref{eq:73} is attained since by the definition of $\Psi_{0,j}$, it is evident that $\sum_{z^{n}_{j} \in \Psi_{0,j}}\frac{q\condx*{z^{n}_{j}}{\mathcal{E}_{0}}}{P_q\condx*{\Psi_{0,j}}{\mathcal{E}_{0}}} = 1$ for $j \in \{1,2\}$.
	
	By using the same method one can bound $P_q\condx*{\mathcal{E}_{j,NS,i}}{\mathcal{E}_{0} \wedge \Psi_{i,j}}$ for $i \in \{1,2\}$ to obtain the following results.
	
	\begin{align}
		P_q\condx*{\mathcal{E}_{j,NS,1}}{\mathcal{E}_{0} \wedge \Psi_{1,j}} \leq 
		2^{-n\left(
			R_0 + \tilde{R}_0  -
			\cond*{Y_0}{Z_j,Y_j}
			- \eta_{1,n}
			\right)}, \label{eq:78}
		\\
		P_q\condx*{\mathcal{E}_{j,NS,2}}{\mathcal{E}_{0} \wedge \Psi_{2,j}} \leq 
		2^{-n\left(
			R_j + \tilde{R}_j  -
			\cond*{Y_j}{Z_j,Y_0}
			- \eta_{2,n}
			\right)}, \label{eq:79}
	\end{align}
	where $\eta_{j,n} \rightarrow 0$ as $n \rightarrow
	 \infty$ for $j \in \{1,2\}$. The term in \eqref{eq:51} can be bounded as follows.
	
	\begin{align}
		P_q&\para*{\mathcal{E}_{0} \cap \mathcal{E}_{j,NS}}
		 \leq
		\sum_{i \in [0:2]}
		2^{-n \left( \min_{\pi \in \mathcal{K}^{n}_{\Psi_{i,j}}} \infdiv*{\pi_{\bar{X},\bar{Y}_{[0:2]},\bar{Z}_j}}{q_{X,Z_j}p_{Y_{[0:2]}|X}} 
			- \nu_{n,j}
			\right)}
		P_q\condx*{\mathcal{E}_{j,NS,i}}{\mathcal{E}_{0} \wedge \Psi_{i,j}} \label{eq:74}
		\\& \leq
		\max_{i \in [0:2]}
		\left\{
		2^{-n \left( \min_{\pi \in \mathcal{K}^{n}_{\Psi_{i,j}}} \infdiv*{\pi_{\bar{X},\bar{Y}_{[0:2]},\bar{Z}_j}}{q_{X,Z_j}p_{Y_{[0:2]}|X}} 
			- \nu_{n,j}
			\right)}		
		\right\}
		\sum_{i \in [0:2]}
		P_q\condx*{\mathcal{E}_{j,NS,i}}{\mathcal{E}_{0} \wedge \Psi_{i,j}} \label{eq:76}
		\\ &=
		2^{-n \left( \min_{\pi \in \mathcal{K}^{n}_{\Psi_{0,j}}} \infdiv*{\pi_{\bar{X},\bar{Y}_{[0:2]},\bar{Z}_j}}{q_{X,Z_j}p_{Y_{[0:2]}|X}} 
		- \nu_{n,j}
		\right)}
		\sum_{i \in [0:2]}
		P_q\condx*{\mathcal{E}_{j,NS,i}}{\mathcal{E}_{0} \wedge \Psi_{i,j}} \label{eq:75}
		\\& 
		\begin{aligned}
		\leq
		2^{-n \left( \min_{\pi \in \mathcal{K}^{n}_{\Psi_{0,j}}} \infdiv*{\pi_{\bar{X},\bar{Y}_{[0:2]},\bar{Z}_j}}{q_{X,Z_j}p_{Y_{[0:2]}|X}} 
			- \nu_{n,j}
			\right)}&
		\\ \times
		\max_{\emptyset \neq \mathcal{S} \in \{0,j\}}
		&\left\{
		3 \times 
		2^{-n\para*{\sum_{i \in \mathcal{S}}R_i + \tilde{R}_i - \cond{Y_{\mathcal{S}}}{Z_j,Y_{\mathcal{S}^{c}}}}
		- \eta_n} 
	\right\}  \label{eq:77}
	\end{aligned}
		\\& \leq
		2^{-n\left\{ \min_{\pi} \in \mathcal{K}_{\Psi_0} \infdiv*{\pi_{\bar{X},\bar{Y}_{[0:2]},\bar{Z}_j}}{q_{X,Z_j}p_{Y_{[0:2]}|X}}
		+
		\min_{\emptyset \neq \mathcal{S} \in \{0,j\}}
		\para*{\sum_{i \in \mathcal{S}}R_i + \tilde{R}_i - \cond{Y_{\mathcal{S}}}{Z_j,Y_{\mathcal{S}^{c}}}}
	- \kappa_n\right\}}, \label{eq:59}
	\end{align}
	where $\eta_n \coloneqq \max_{i \in [0:2]}(\eta_{i,n})$ and $\kappa_n \coloneqq \frac{\log(3)}{n} +‌\nu_{n,j} +‌ \eta_n$. The inequality in \eqref{eq:74} follows from \eqref{eq:80}. Note that $\mathcal{K}^{n}_{\Psi_{1,j}} \subseteq \mathcal{K}^{n}_{\Psi_{0,j}}$ and $\mathcal{K}^{n}_{\Psi_{2,j}} \subseteq \mathcal{K}^{n}_{\Psi_{0,j}}$, meaning that $\mathcal{K}^{n}_{\Psi_{0,j}}$ results in a larger upper bound than $\mathcal{K}^{n}_{\Psi_{i,j}}$ for $i \in \{1,2\}$, hence \eqref{eq:75} follows. \eqref{eq:77} is resulted from maximizing among the upper bounds achieved in \eqref{eq:73}, \eqref{eq:78}, and \eqref{eq:79}.
	For simplicity, we will use the following convention from now on:
	\[
		E^{n}_{1,j}(p_{Y_{[0:2]}|X})
		\coloneqq
		\min_{\pi_{\bar{X},\bar{Y}_{[0:2]},\bar{Z}_j} \in \mathcal{K}^{n}_{\Psi_{0,j}}} \infdiv*{\pi_{\bar{X},\bar{Y}_{[0:2]},\bar{Z}_j}}{q_{X,Z_j}p_{Y_{[0:2]}|X}}
		+
		\min_{\emptyset \neq \mathcal{S} \in \{0,j\}}
		\para*{\sum_{i \in \mathcal{S}}R_i + \tilde{R}_i - \cond{Y_{\mathcal{S}}}{Z_j,Y_{\mathcal{S}^{c}}}}
		- \kappa_n.
	\]

	Now by combining the results from \eqref{eq:60} and \eqref{eq:59}, we come by the following bound for the dual problem:
	\begin{align}
		P\condx*{\hat{H}_{j}=0}{H=1} \leq
		2^{-nE^{n}_{0,j}(p_{Y_{[0:2]}|X})} +
		2^{-nE^{n}_{1,j}(p_{Y_{[0:2]}|X})}.
	\end{align}
	
	For this error probability to converge to zero exponentially for $j \in \{1,2\}$, we must have:
	\begin{equation} \label{eq:conditionsSecond}
		\begin{gathered}
			R_{0} + \tilde{R}_{0} > H(Y_{0}|Y_{j}Z_{j}), \\
			R_{j} + \tilde{R}_{j} > H(Y_{j}|Y_{0}Z_{j}), \\
			R_{0} + \tilde{R}_{0} + R_{j} + \tilde{R}_{j} > H(Y_{0}Y_{j}|Z_{j}). \\
		\end{gathered}
	\end{equation}
\end{proof}	
	\proofpart{2b}{Sufficient conditions that make the induced pmfs approximately the same}
	Now that we have the necessary bounds regarding the events in the dual problem, we are interested in finding the conditions that make the pmf $P$ close to $\hat{P}$ in terms of total variation distance. By achieving such conditions we can apply those upper bounds to the main problem assisted with the shared randomness. The following lemma provides an upper bound on the total variation distance between random pmfs induced in \textit{Protocol A} and \textit{Protocol A}.
	\begin{lemma}
		For $j \in \{1,2\}$, following bounds could be applied:
	\begin{gather}
		\mathbbm{E}\norm{
			\hat{P}(\cdot|\Psi_{0,j})
			-
			P(\cdot|\Psi_{0,j})
		}_{TV} \label{eq:1000}
		\leq
		2^{-n\aleph(R_{\mathcal{T}},\phi_{X^{n}|\Psi_0},p_{Y_{[0:2]}|X})}, \\
		\mathbbm{E}\norm{
		\hat{P}(\cdot)
		-
		P(\cdot)}_{TV} 
		\leq
	2^{-n\zeta(R_{\mathcal{T}},\phi_{X},p_{Y_{[0:2]}|X})}. \label{eq:1001}
	\end{gather}
	\end{lemma}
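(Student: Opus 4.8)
The plan is to exploit the fact, apparent on comparing \eqref{eq:43} and \eqref{eq:44}, that $\hat P$ and $P$ share a common ``tail'': in both protocols the conditional law $P(y^{n}_{[0:2]}|x^{n},f_{[0:2]})\,\prod_{i\in[0:2]}P(m_{i}|y^{n}_{i})\,\prod_{j\in\{1,2\}}P^{HT}(\hat h_{j}|m_{0},f_{0},m_{j},f_{j},z^{n}_{j})$ is one and the same (binning-dependent) stochastic kernel from $(f_{[0:2]},x^{n},z^{n}_{[1:2]})$ to the remaining variables, and the two protocols differ only in the joint law of the ``head'' $(f_{[0:2]},x^{n},z^{n}_{[1:2]})$: it equals $P(f_{[0:2]},x^{n},z^{n}_{[1:2]})$ in \textit{Protocol A} and $p^{U}(f_{[0:2]})\,\phi(x^{n},z^{n}_{[1:2]})$ in \textit{Protocol B}. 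Since total variation distance is invariant under appending a fixed stochastic kernel, conditioning on the realization of the binnings $\mathcal B$ gives $\norm{\hat P-P}_{TV}=\norm{p^{U}(f_{[0:2]})\phi(x^{n},z^{n}_{[1:2]})-P(f_{[0:2]},x^{n},z^{n}_{[1:2]})}_{TV}$, and the same identity holds with everything conditioned on $\Psi_{0,j}$, which only reweights the head. Taking $\mathbbm{E}_{\mathcal B}$ then reduces both \eqref{eq:1000} and \eqref{eq:1001} to bounding an expected deviation of the head from its mean.

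Next I would strip off the $z$-coordinates. By the Markov chain $Y_{[0:2]}\leftrightarrow X\leftrightarrow Z_{[1:2]}$, the bin indices $f_{[0:2]}$ are conditionally independent of $z^{n}_{[1:2]}$ given $x^{n}$, so (after marginalizing the $m_i$ and $\hat h_j$) $P(f_{[0:2]},x^{n},z^{n}_{[1:2]})=\phi(z^{n}_{[1:2]}|x^{n})\,\phi(x^{n})\,P(f_{[0:2]}|x^{n})$ and $p^{U}(f_{[0:2]})\phi(x^{n},z^{n}_{[1:2]})=\phi(z^{n}_{[1:2]}|x^{n})\,\phi(x^{n})\,p^{U}(f_{[0:2]})$ share the common kernel $\phi(z^{n}_{[1:2]}|x^{n})$; this survives conditioning on $\Psi_{0,j}$ once $\phi(\cdot|x^{n})$ is replaced by the still-proper kernel $\phi(\cdot|x^{n},\Psi_{0,j})$. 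Hence \eqref{eq:1001} reduces to bounding $\mathbbm{E}_{\mathcal B}\norm{\phi(x^{n})P(f_{[0:2]}|x^{n})-\phi(x^{n})p^{U}(f_{[0:2]})}_{TV}$ and \eqref{eq:1000} to the analogous quantity with $\phi(x^{n})$ replaced by $\phi(x^{n}|\Psi_{0,j})$. These are exactly output-statistics-of-random-binning deviation quantities: the $F$-binning maps each $y^{n}_{i}$ uniformly and independently onto $[1:2^{n\tilde R_{i}}]$, so $\mathbbm{E}_{\mathcal B_{F}}P(f_{[0:2]}|x^{n})=p^{U}(f_{[0:2]})$ for every $x^{n}$, whence the second term in each norm is the $\mathbbm{E}_{\mathcal B_{F}}$ of the first. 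For \eqref{eq:1001} the law of $X^{n}$ is the product law $\phi_{X}^{\otimes n}$ (using $\phi_{X}=p_{X}=q_{X}$), so \thref{OSRBnew} applied with $T=3$, source $(Y^{n}_{[0:2]},X^{n})\sim\phi_{X}^{\otimes n}p_{Y_{[0:2]}|X}^{\otimes n}$ and the $F$-rates $\tilde R_{[0:2]}$ plugged into the rate slots yields exactly the exponent in \eqref{eq:1001}. For \eqref{eq:1000} the law of $X^{n}$ is $\phi_{X^{n}|\Psi_{0}}$, a mixture over $z^{n}_{j}$ in the $\delta'_{n}$-typical set of the product laws $\prod_{t}\phi_{X|Z_{j}}(x_{t}|z_{j,t})$, which — after further conditioning on the type of $z^{n}_{j}$ — is the constant-composition-conditioned setting of \thref{OSRBnewconditioned}, giving the exponent in \eqref{eq:1000}.

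The main obstacle I anticipate is this last step: \thref{OSRBnewconditioned} is stated for a $Z^{n}$ of a single fixed $n$-type, whereas conditioning on $\Psi_{0,j}$ leaves $z^{n}_{j}$ ranging over a union of $n$-types all within $\delta'_{n}$ of $p_{Z_{j}}$. I would resolve this by conditioning further on the exact type of $z^{n}_{j}$ (there are at most $(n+1)^{\abs{\mathcal Z_{j}}}$ of them), invoking \thref{OSRBnewconditioned} for each with $p_{\bar Z}$ equal to that type, and noting that the resulting per-type exponents agree with $\aleph(R_{\mathcal T},\phi_{X^{n}|\Psi_{0}},p_{Y_{[0:2]}|X})$ up to $O\!\left(\delta'_{n}+\log(n+1)/n\right)$ corrections already absorbed into the $\epsilon_{n}$ and $\delta^{\mathcal S}_{n}$ slacks; the bound on the mixture then follows by taking the worst type. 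A small amount of care is also needed when separating the $F$-binning randomness from the $M$-binning randomness (so that the common-kernel peeling is legitimate conditionally on $\mathcal B$ and the residual expectation is over $\mathcal B_{F}$ alone); all remaining steps are routine manipulations of total variation distance.
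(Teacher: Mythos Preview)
Your proposal is correct and follows essentially the same approach as the paper: peel off the common stochastic kernel shared by \eqref{eq:43} and \eqref{eq:44} so that the total variation reduces to the head $(f_{[0:2]},x^{n})$, then invoke \thref{OSRBnew} for the unconditioned bound and \thref{OSRBnewconditioned} for the $\Psi_{0,j}$-conditioned bound. You are in fact more careful than the paper on one point---the paper simply cites \thref{OSRBnewconditioned} for \eqref{eq:1000} without discussing the mixture of $z^{n}_{j}$-types inside $\Psi_{0,j}$, whereas you correctly flag and resolve this by conditioning on the exact type and absorbing the polynomial-in-$n$ factor into the slack terms.
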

	\begin{proof}
	Note that since the probability of $\Psi_{0,j}$ for $j \in \{1,2\}$ is consistent in both \textit{Protocol A} and \textit{Protocol B}, based on \thref{totalvariationlemma} (in Apprendix \ref{AppA}), we consider the proximity conditioned on the event $\Psi_{0,j}$. We make use of the \thref{OSRBnewconditioned} to find criteria, in which those two random distributions would be close in the mean, i.e., for $j \in \{1,2\}$ we can write,
\begin{align}
	\mathbbm{E}&\norm{
		\hat{P}(x^{n},z^{n}_{j},y^{n}_{[0:2]},m^{n}_{[0:2]},f_{[0:2]},\hat{h}_{1},\hat{h}_{2}|\Psi_{0,j})
		-
		P(x^{n},z^{n}_{j},y^{n}_{[0:2]},m^{n}_{[0:2]},f_{[0:2]},\hat{h}_{1},\hat{h}_{2}|\Psi_{0,j})
	}_{TV} 
	\\&  = 
	\mathbbm{E}\norm{
		p^{U}(f_{[0:2]})\phi(x^{n}|\Psi_0)
		-
		P(f_{[0:2]},x^{n}|\Psi_0)
	}_{TV} \label{eq:42}  
	\\& \leq
	2^{-n\aleph(R_{\mathcal{T}},\phi_{X^{n}|\Psi_0},p_{Y_{[0:2]}|X})}, \label{eq:45}
\end{align}
where \eqref{eq:42} follows because other terms in \eqref{eq:44} and \eqref{eq:43} are similar and \eqref{eq:45} follows from \thref{OSRBnewconditioned}. The unconditioned version of this proximity could be stated based on \thref{OSRBnew} as follows.
\begin{align}
	\mathbbm{E}&\norm{
		\hat{P}(x^{n},z^{n}_{j},y^{n}_{[0:2]},m^{n}_{[0:2]},f_{[0:2]},\hat{h}_{1},\hat{h}_{2})
		-
		P(x^{n},z^{n}_{j},y^{n}_{[0:2]},m^{n}_{[0:2]},f_{[0:2]},\hat{h}_{1},\hat{h}_{2}}_{TV} 
	\\&  = 
	\mathbbm{E}\norm{
		p^{U}(f_{[0:2]})\phi(x^{n})
		-
		P(f_{[0:2]},x^{n})
	}_{TV}  
	\\& \leq
	2^{-n\zeta(R_{\mathcal{T}},\phi_{X},p_{Y_{[0:2]}|X})}. \label{eq:55}
\end{align}
\end{proof}

\begin{corollary}
	For right hand sides of \eqref{eq:1000} and \eqref{eq:1001} to converge to zero as $n \rightarrow \infty$, we should have the following conditions met:
	\begin{equation}
		\begin{gathered} 
			\tilde{R}_{0} < H(Y_{0}|X), \\
			\tilde{R}_{1} < H(Y_{1}|X), \\
			\tilde{R}_{2} < H(Y_{2}|X), \\
			\tilde{R}_{0}  + \tilde{R}_{1} < H(Y_{0}Y_{1}|X), \\
			\tilde{R}_{0}  + \tilde{R}_{2} < H(Y_{0}Y_{2}|X), \\
			\tilde{R}_{1}  + \tilde{R}_{2} < H(Y_{1}Y_{2}|X), \\
			\tilde{R}_{0}  + \tilde{R}_{1} + \tilde{R}_{2} <  H(Y_{0}Y_{1}Y_{2}|X), \\ \label{eq:conditionsFirst}
		\end{gathered}
	\end{equation}
\end{corollary}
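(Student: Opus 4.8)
The plan is to read the claimed inequalities directly off \thref{OSRBnew} and \thref{OSRBnewconditioned}, specialized to $T=3$ with the three sources being $(Y_0,Y_1,Y_2)$, with $X$ in the role of the conditioning variable of those theorems, and with binning rates $(\tilde{R}_0,\tilde{R}_1,\tilde{R}_2)$. This is exactly the regime in which \eqref{eq:45} and \eqref{eq:55} were invoked, since there the $F$-indices are the bin indices of $Y^{n}_{[0:2]}$ at rates $\tilde{R}_{[0:2]}$. Accordingly, the right-hand sides of \eqref{eq:1000} and \eqref{eq:1001} equal $2^{-n\aleph(\tilde{R}_{[0:2]},\phi_{X^{n}|\Psi_0},p_{Y_{[0:2]}|X})}$ and $2^{-n\zeta(\tilde{R}_{[0:2]},\phi_{X},p_{Y_{[0:2]}|X})}$, so it suffices to show that each of the two exponents stays bounded below by a strictly positive constant for all $n$ large enough precisely when, for every nonempty $\mathcal{S}\subseteq\{0,1,2\}$,
\[
\sum_{i\in\mathcal{S}}\tilde{R}_i < H(Y_{\mathcal{S}}|X),
\]
which unrolls to the seven inequalities in \eqref{eq:conditionsFirst}.

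First I would substitute the definition of $\zeta$ and evaluate the bracketed objective $\infdiv*{\pi_{Y_{[0:2]},X}}{\phi_{X}p_{Y_{[0:2]}|X}}+\tfrac12\bigl[\min_{\emptyset\neq\mathcal{S}\subseteq\{0,1,2\}}\bigl(H_{\pi}(Y_{\mathcal{S}}|X)-\sum_{i\in\mathcal{S}}\tilde{R}_i-\delta^{\mathcal{S}}_{n}\bigr)\bigr]^{+}$ at $\pi=p_{Y_{[0:2]},X}=\phi_{X}p_{Y_{[0:2]}|X}$, where the divergence term vanishes. If all seven inequalities hold strictly, then for $n$ large enough that $\delta^{\mathcal{S}}_{n}$ falls below the slack $H(Y_{\mathcal{S}}|X)-\sum_{i\in\mathcal{S}}\tilde{R}_i$ for every nonempty $\mathcal{S}$, the positive-part penalty at $\pi=p$ is a strictly positive constant, so the bracketed objective is $>0$ at $\pi=p$. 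The key observation is that this bracketed objective equals $\infdiv*{\pi}{p}$ plus a nonnegative quantity, hence it can vanish only at $\pi=p$; since it does not vanish there, and the probability simplex is compact, its minimum over $\pi$ is a constant $c>0$ independent of $n$ (for $n$ large), so $\zeta\ge c-\epsilon_{n}$ and $2^{-n\zeta}\to 0$. Conversely, if some inequality fails, say $\sum_{i\in\mathcal{S}_0}\tilde{R}_i\ge H(Y_{\mathcal{S}_0}|X)$, then the penalty at $\pi=p$ is already zero, so the minimization over $\pi$ returns $0$, whence $\zeta=-\epsilon_{n}$ and $2^{-n\zeta}=(n+1)^{\abs{\mathcal{X}}\abs{\mathcal{Y}_{[0:2]}}}\to\infty$; thus the conditions are necessary as well. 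The bound \eqref{eq:1000} is handled identically with $\aleph$ in place of $\zeta$: the unconditional divergence is replaced by the conditional divergence $\infdiv*{\pi_{Y_{[0:2]},X|Z}}{p_{Y_{[0:2]},X|Z}|p_{\bar{Z}}}$ of \thref{OSRBnewconditioned}, which still vanishes exactly at $\pi_{Y_{[0:2]},X|Z}=p_{Y_{[0:2]},X|Z}$, while the Markov chain $Y_{[0:2]}\leftrightarrow X\leftrightarrow Z$ guarantees that the conditional entropies $H_{\pi}(Y_{\mathcal{S}}|X)$ entering the penalty are unchanged, so the very same seven inequalities control positivity of the exponent.

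The only delicate point, and the step I expect to require the most care, is the bookkeeping around the operator $[\,\cdot\,]^{+}$ and the inner minimization over the subsets $\mathcal{S}$: one must verify that the penalty is strictly positive at $\pi=p$ under the strict inequalities (this is where $\delta^{\mathcal{S}}_{n}\to 0$ is used) and, more importantly, that no competing $\pi\neq p$ can pull the whole bracketed objective down to zero, which is argued by combining $\infdiv*{\pi}{p}>0$ for $\pi\neq p$ with a compactness argument that extracts a lower bound uniform in $n$. Everything else is a direct substitution into \thref{OSRBnew} and \thref{OSRBnewconditioned}.
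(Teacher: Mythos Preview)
The paper states this corollary without proof; it treats the seven inequalities as an immediate reading of the exponent formulas in \thref{OSRBnew} and \thref{OSRBnewconditioned}, relying implicitly on the Remark following \thref{OSRBnew} (that $\pi=p$ is optimal and yields zero exponent whenever some $\sum_{i\in\mathcal{S}}R_i\ge H(Y_{\mathcal{S}}|X)$) together with \thref{OSRBorigin} for the positive direction. Your proposal makes this reasoning explicit and is correct: you identify the relevant binning rates as the $\tilde{R}_i$'s, evaluate the objective at $\pi=p$, and then use the nonnegativity of the divergence plus compactness of the simplex to secure a uniform positive lower bound on the minimized exponent.

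Two small points worth noting. First, for the uniformity in $n$ you flag at the end, the cleanest route is to observe that $[\,\cdot\,]^{+}$ is $1$-Lipschitz and monotone, so the $n$-dependent objective $f_n(\pi)$ dominates the $n$-free objective $g(\pi)=\infdiv{\pi}{p}+\tfrac12\bigl[\min_{\mathcal{S}}(H_\pi(Y_{\mathcal{S}}|X)-\sum_{i\in\mathcal{S}}\tilde{R}_i)\bigr]^{+}$ up to $\tfrac12\max_{\mathcal{S}}\delta^{\mathcal{S}}_{n}$; since $g$ is lower semicontinuous and strictly positive on the compact simplex, $\min_\pi g>0$ and you are done. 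Second, in the $\aleph$ case the reference type $p_{\bar{Z}}$ enters the marginal $\pi_X$ through $\sum_z p_{\bar{Z}}(z)p_{X|Z}(x|z)$, so $H_\pi(Y_{\mathcal{S}}|X)$ at $\pi=p$ is not literally $H_p(Y_{\mathcal{S}}|X)$ but differs by $o(1)$ because $p_{\bar{Z}}\stackrel{\delta'_n}{\approx}p_{Z}$ on the event $\Psi_{0,j}$; this is absorbed by the strictness of the inequalities and does not affect your argument.
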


	\proofpart{3}{Eliminating the shared randomness}
	In this step, we show that the proximity of the main problem's random pmf which is assisted with shared randomness to the random pmf of the dual problem will be preserved if we eliminate the shared randomness by assuming a realization for it. Suppose $(R_{[0:2]},\tilde{R}_{[0:2]})$ satisfy \ref{eq:conditionsSecond} and \ref{eq:conditionsFirst}. \\
		\textit{Type \rom{1} error analysis}: 
		 The type \rom{1} error of the hypothesis testing at Detector $j \in \{1,2\}$ of the main problem assisted with shared randomness can be expressed as: 
		\begin{align}
			\alpha_{n,j} &=
			\hat{P}\condx*{\hat{H}_{j}=1}{H=0} 
			 \\& \leq 
			P\condx*{\hat{H}_{j}=1}{H=0} 
			+
			\mathbbm{E}\norm{\hat{P}-P}_{TV} \label{eq:58}
			\\& \leq
			\epsilon_{n}+\delta_{n}+2^{-n\zeta(R_{\mathcal{T}},q_{X},p_{Y_{[0:2]}|X})} \rightarrow 0, \label{eq:56}
		\end{align}
		where \eqref{eq:58} results from \thref{totalvariationlemma} and \eqref{eq:56} follows from \eqref{eq:57} follows from Lemma \ref{type1dual} and Lemma \ref{type2dual}.
		
		\textit{Type \rom{2} error analysis}:
		By using the same argument as type \rom{1} error, one can find upper bounds for the mean type \rom{2} error probability at Detector $j \in \{1,2\}$.
		\begin{lemma}
			Type \rom{2} error at Detector $j \in \{1,2\}$ of the main problem assisted with shared randomness (\textit{Protocol B}) is bounded as:
			\begin{gather}
				-\frac{1}{n}\lim_{n \rightarrow \infty}\beta_{n,j} \geq
				E_{0,j}(p_{Y_{[0:2]}|X}) \!+\!
				E_{1,j}(p_{Y_{[0:2]}|X}) \!+\!
				E_{2,j}(p_{Y_{[0:2]}|X}).
			\end{gather}
		\end{lemma}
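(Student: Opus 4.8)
The plan is to control $\beta_{n,j}$, the type \rom{2} error of \textit{Protocol B} at Detector $j$, by the corresponding error of the dual \textit{Protocol A} plus the total variation gap between the two induced random pmfs, and to show that this gap produces exactly the third exponent $E_{2,j}(p_{Y_{[0:2]}|X})$. Working under the alternative hypothesis, i.e.\ $\phi_{X,Z_1,Z_2}=q_{X,Z_1,Z_2}$, write $\hat{P}_q(\cdot)$ and $P_q(\cdot)$ for the laws of \textit{Protocol B} and \textit{Protocol A}. The first observation is that the decision event $\{\hat{H}_j=0\}=\mathcal{E}_0\cap\mathcal{E}_j$ is contained in $\Psi_{0,j}=\{Z^{n}_j\in\mathcal{T}^{n}_{[\,p_{Z_j}]_{\delta^{\prime}_{n}}}\}$: each of $\mathcal{E}_{j,S}$ and $\mathcal{E}_{j,NS}$ forces the existence of a pair jointly $\delta^{\prime}_{n}$-typical with $Z^{n}_j$ under $p_{Y_0,Y_j,Z_j}$, hence pins the $Z_j$-marginal type near $p_{Z_j}$ (after enlarging the slack). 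Consequently $\beta_{n,j}=\hat{P}_q(\hat{H}_j=0)=\hat{P}_q(\Psi_{0,j})\,\hat{P}_q\condx*{\hat{H}_j=0}{\Psi_{0,j}}$, and since $\Psi_{0,j}$ depends on $Z^{n}_j$ only, while the law of $Z^{n}_j$ is the common $q_{Z_j}$ in both protocols, $\hat{P}_q(\Psi_{0,j})=P_q(\Psi_{0,j})=:q_n$ with $q_n\doteq 2^{-n\infdiv*{p_{Z_j}}{q_{Z_j}}}$ by Sanov's theorem \cite[Problem~2.12]{csiszar11}.

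Next I would invoke \thref{totalvariationlemma} conditioned on $\Psi_{0,j}$ together with \eqref{eq:1000} to get $\mathbbm{E}_{\mathcal{B}}\abs*{\hat{P}_q\condx*{\hat{H}_j=0}{\Psi_{0,j}}-P_q\condx*{\hat{H}_j=0}{\Psi_{0,j}}}\leq 2^{-n\aleph(\tilde{R}_{[0:2]},\,q_{X^{n}|\Psi_{0,j}},\,p_{Y_{[0:2]}|X})}$, and \thref{type2dual} for the dual error, which --- since $\{\hat{H}_j=0\}\subseteq\Psi_{0,j}$ --- already gives $P_q(\hat{H}_j=0)\leq 2^{-nE^{n}_{0,j}(p_{Y_{[0:2]}|X})}+2^{-nE^{n}_{1,j}(p_{Y_{[0:2]}|X})}$. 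Combining these, bounding $q_n\leq 1$ on the dual term but keeping the factor $q_n$ in front of the proximity term, yields
\[
\mathbbm{E}_{\mathcal{B}}\,\beta_{n,j}\leq 2^{-nE^{n}_{0,j}(p_{Y_{[0:2]}|X})}+2^{-nE^{n}_{1,j}(p_{Y_{[0:2]}|X})}+q_n\,2^{-n\aleph(\tilde{R}_{[0:2]},\,q_{X^{n}|\Psi_{0,j}},\,p_{Y_{[0:2]}|X})}.
\]

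The decisive step is identifying the last exponent. Conditioned on $\Psi_{0,j}$ the sequence $Z^{n}_j$ is essentially constant-composition of a type $p_{\bar{Z}_j}\stackrel{\delta^{\prime}_{n}}{\approx}p_{Z_j}$, while $X^{n}\mid Z^{n}_j$ stays i.i.d.\ $q_{X|Z_j}$ and $Y^{n}_{[0:2]}\mid X^{n}$ stays i.i.d.\ $p_{Y_{[0:2]}|X}$, so \thref{OSRBnewconditioned} applies with reference conditional $q_{X|Z_j}p_{Y_{[0:2]}|X}$; splitting off $\infdiv*{\pi_{Z_j}}{q_{Z_j}}$ from the divergence via the chain rule over $\mathcal{K}_{2,j}$ (on which $\pi_{Z_j}=p_{Z_j}$) and matching the $\min_{\mathcal{S}\subseteq[0:2]}$ penalty shows $\aleph(\tilde{R}_{[0:2]},q_{X^{n}|\Psi_{0,j}},p_{Y_{[0:2]}|X})=E_{2,j}(p_{Y_{[0:2]}|X})-\infdiv*{p_{Z_j}}{q_{Z_j}}-o(1)$. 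Hence $q_n\,2^{-n\aleph}\leq 2^{-nE_{2,j}(p_{Y_{[0:2]}|X})+o(n)}$, the two copies of $\infdiv*{p_{Z_j}}{q_{Z_j}}$ cancelling exactly. Finally, using $\delta^{\prime}_{n},\epsilon_{n},\delta^{\mathcal{S}}_{n},\nu_{n,j},\kappa_n\rightarrow 0$ and lower semicontinuity of relative entropy to obtain $E^{n}_{0,j}\rightarrow E_{0,j}$ and $E^{n}_{1,j}\rightarrow E_{1,j}$, and dominating the three-term sum by $3\cdot 2^{-n(\min\{E_{0,j},E_{1,j},E_{2,j}\}-o(1))}$, we get $\liminf_{n\rightarrow\infty}-\frac{1}{n}\log\mathbbm{E}_{\mathcal{B}}\,\beta_{n,j}\geq\min\{E_{0,j},E_{1,j},E_{2,j}\}=\theta^{*}_j$. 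Read this way, the ``$+$'' in the statement aggregates the three exponential error contributions into their minimum, which is precisely $\theta^{*}_j$ of \thref{mainresult}.

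The hard part is this identification $\aleph(\tilde{R}_{[0:2]},q_{X^{n}|\Psi_{0,j}},p_{Y_{[0:2]}|X})=E_{2,j}(p_{Y_{[0:2]}|X})-\infdiv*{p_{Z_j}}{q_{Z_j}}-o(1)$, since it requires (i) checking that conditioning on $\Psi_{0,j}$ genuinely produces a constant-composition source of the right type so that \thref{OSRBnewconditioned} is applicable, (ii) carefully matching the constraint set $\mathcal{K}_{2,j}$ and the $\min_{\mathcal{S}\subseteq[0:2]}$ penalty of $E_{2,j}$ against the conditional-divergence form of $\aleph$, and (iii) confirming that the exponential factor $q_n$, coming from the probability of the (under $q$) atypical event $\Psi_{0,j}$, offsets the deficit $\infdiv*{p_{Z_j}}{q_{Z_j}}$ precisely. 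The remaining work --- propagating all the $o(1)$ slacks, justifying $E^{n}_{\cdot,j}\rightarrow E_{\cdot,j}$ by a Berge-type continuity argument, and finally replacing the random binning $\mathcal{B}$ by a fixed good realization (together with the elimination of $(F_0,F_j)$, carried out in the next step of the proof) --- is routine.
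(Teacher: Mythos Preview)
Your proposal is correct and follows essentially the same route as the paper: both use the containment $\{\hat{H}_j=0\}\subseteq\Psi_{0,j}$ together with the second part of \thref{totalvariationlemma} to split $\beta_{n,j}$ into the dual error (handled by \thref{type2dual}) plus $q(\Psi_{0,j})$ times the conditional TV gap (handled by \thref{OSRBnewconditioned}), and then combine $q(\Psi_{0,j})\doteq 2^{-n\infdiv*{p_{Z_j}}{q_{Z_j}}}$ with the $\aleph$ exponent via the chain rule for divergence over $\mathcal{K}_{2,j}$ to produce $E_{2,j}$. Your reading of the ``$+$'' in the lemma statement as aggregation into $\min\{E_{0,j},E_{1,j},E_{2,j}\}=\theta^{*}_j$ is also the intended one, consistent with \thref{mainresult}.
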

		\begin{proof}
		Recall that the probability of the event $\Psi_{0,j}$ is the same in both problems and depends only on $q_{Z_j}$, therefore we can use the second part of \thref{totalvariationlemma} (in Appendix \ref{AppA}) to obtain: 
		\begin{align}
			\beta_{n,j} &=
			\hat{P}\condx*{\hat{H}_{j}=0}{H=1} 
			\\& \leq
			P\condx*{\hat{H}_{j}=0}{H=1} 
			+
			q(\Psi_{0,j})
			\times
			\mathbbm{E}\norm{\hat{P}\condx*{\cdot}{\Psi_{0,j}}-P\condx*{\cdot}{\Psi_{0,j}}}_{TV}
			 \\& \leq			
			P\condx*{\hat{H}_{j}=0}{H=1} 
			+
			q(\Psi_{0,j})\times2^{-n\aleph(R_{\mathcal{T}},q_{X^{n}|\Psi_{0,j}},p_{Y_{[0:2]}|X})}   \label{eq:84}
			\\& \leq			
			P\condx*{\hat{H}_{j}=0}{H=1} 
			+
			2^{-n\min_{\pi_{Z_j} \in \mathcal{W}^{n}_{j}}\infdiv{\pi_{Z_j}}{q_{Z_j}}}
			\times
			2^{-n\aleph(R_{\mathcal{T}},q_{X^{n}|\Psi_{0,j}},p_{Y_{[0:2]}|X})}   \label{eq:85}
			\\& \leq
			2^{-nE^{n}_{0,j}(p_{Y_{[0:2]}|X})} \!+\!
			2^{-nE^{n}_{1,j}(p_{Y_{[0:2]}|X})} \!+\!
			2^{-n\min_{\pi_{Z_j} \in \mathcal{W}^{n}_{j}}\infdiv{\pi_{Z_j}}{q_{Z_j}}}
			\! \times \!
			2^{-n\aleph(R_{\mathcal{T}},q_{X^{n}|\Psi_{0,j}},p_{Y_{[0:2]}|X})}, \label{eq:62}
		\end{align}
		for 
		\begin{align}
			\mathcal{W}^{n}_{j} \coloneqq
			\left\{
			\pi_{Z_j} \in \mathcal{P}\left(\mathcal{Z}_j\right) \ : \
			\pi_{Z_j}  \stackrel{\delta^{\prime}_{n}}{\approx} \ p_{Z_j}
			\right\}.
		\end{align}
		The inequality in \eqref{eq:84} follows from \thref{OSRBnewconditioned}, \eqref{eq:85} comes from \cite[Problem 2.12]{csiszar11}, and \eqref{eq:62} is deduced by the bound in \eqref{eq:61}.
		The exponent in the last term of \eqref{eq:62} can be further simplified to achieve:
		\begin{align}
			&\min_{\pi_{Z_j} \in \mathcal{W}^{n}_{j}}\infdiv{\pi_{Z_j}}{q_{Z_j}}
			+\aleph(R_{\mathcal{T}},q_{X^{n}|\Psi_{0,j}},p_{Y_{[0:2]}|X})
			=
			\min_{\pi_{Z_j} \in \mathcal{W}^{n}_{j}}\infdiv{\pi_{Z_j}}{q_{Z_j}} 
			\\& +
			\min_{\pi_{Y_{\mathcal{T}},X|Z_j}}\left\{
			\infdiv{\pi_{Y_{\mathcal{T}},X|Z_j}}{p_{Y_{\mathcal{T}},X|Z_j}|p_{Z_j}}
			+
			\frac{1}{2}\left[
			\min_{\mathcal{S} \subseteq \mathcal{T}} \left\{
			H_{\pi}\condx{Y_{\mathcal{S}}}{X} -
			\sum_{i \in \mathcal{S}}R_{i} -
			\delta^{\mathcal{S}}_{n}
			\right\}
			\right]^{+}
			- \epsilon_{n}
			\right\}
			\\& =
			\min_{\pi_{Y_{\mathcal{T}},X,Z_j} \in \mathcal{K}_{2,j}}\left\{
			\infdiv{\pi_{Y_{\mathcal{T}},X,Z_j}}{p_{Y_{\mathcal{T}},X,Z_j}}
			+
			\frac{1}{2}\left[
			\min_{\mathcal{S} \subseteq \mathcal{T}} \left\{
			H_{\pi}\condx{Y_{\mathcal{S}}}{X} -
			\sum_{i \in \mathcal{S}}R_{i} -
			\delta^{\mathcal{S}}_{n}
			\right\}
			\right]^{+}
			- \epsilon_{n}
			\right\},
		\end{align}
	where,
		\[
	\mathcal{K}^{n}_{2,j}= \left\{
	\pi_{X,Y_{[0:2]},Z_j} \in \mathcal{P}\left(\mathcal{X}\times\mathcal{Y}_{[0:2]}\times\mathcal{Z}_j\right) \ : \
	\pi_{Z_j} \stackrel{\delta^{\prime}_{n}}{\approx} p_{Z_j}
	\right\}.
	\]
	For simplicity, we define:
	 \[
	E^{n}_{2,j}(p_{Y_{[0:2]}|X}) \coloneqq 
	\min_{\pi_{Y_{\mathcal{T}},X,Z_j} \in \mathcal{K}_{2,j}}\left\{
	\infdiv{\pi_{Y_{\mathcal{T}},X,Z_j}}{p_{Y_{\mathcal{T}},X,Z_j}}
	+
	\frac{1}{2}\left[
	\min_{\mathcal{S} \subseteq \mathcal{T}} \left\{
	H_{\pi}\condx{Y_{\mathcal{S}}}{X} -
	\sum_{i \in \mathcal{S}}R_{i} -
	\delta^{\mathcal{S}}_{n}
	\right\}
	\right]^{+}
	- \epsilon_{n}
	\right\}.
	\]
		\end{proof}
		Note that the acquired bounds are on random pmfs. Therefore, we can argue that there are fixed binning schemes and $F_{[0:2]}=f_{[0:2]}$ with probability distribution $\tilde{p}$, such that if we replace $P$ with $\tilde{p}$ in \eqref{eq:43}, and name the subsequent distribution with $\hat{p}$, then the type \rom{1} and type \rom{2} error
		probabilities are within a constant multiplicative factor of their mean. \\
		These results are valid if the conditions of \eqref{eq:conditionsFirst} and \eqref{eq:conditionsSecond} are met. The achievable rates using the Fourier-Matzkin elimination algorithm is obtained as,
		\begin{equation}
			\begin{gathered}
				R_{0} > \max_{i \in \{1,2\}}\{ I(X;Y_{0}|Z_{i}) - I(Y_{0},Y_{i}|Z_{i}) \},\\
				R_{1} > I(X;Y_{1}|Z_{1}) - I(Y_{0},Y_{1}|Z_{1}),\\
				R_{2} > I(X;Y_{2}|Z_{2}) - I(Y_{0},Y_{2}|Z_{2}),\\
				R_{0} + R_{1} > I(X;Y_{0}Y_{1}|Z_{1}), \\
				R_{0} + R_{2} >  I(X;Y_{0}Y_{2}|Z_{2}), \\
				R_{0} + R_{1} > I(X;Y_{0}|Z_{2}) + I(X;Y_{1}|Y_{0}Z_{1}) - I(Y_{0};Y_{2}|Z_{2}),\\
				R_{0} + R_{2} > I(X;Y_{0}|Z_{1}) + I(X;Y_{2}|Y_{0}Z_{2}) - I(Y_{0};Y_{1}|Z_{1}),\\
				R_{1} + R_{2} > I(X;Y_{1}|Y_{0}Z_{1}) + I(X;Y_{2}|Y_{0}Z_{2}) + I(Y_{1};Y_{2}|XY_{0}) - I(Y_{1}Y_{2};Y_{0}|X),\\
				R_{0} + R_{1} + R_{2} >   I(X;Y_{1}|Y_{0}Z_{1}) + I(X;Y_{2}|Y_{0}Z_{2}) + \max_{i \in \{1,2\}}\{I(Y_{0};X|Z_{i})\} + I(Y_{1};Y_{2}|XY_{0}),\\
				2R_{0}+R_{1}+R_{2} >I(X;Y_{1}|Y_{0}Z_{1})\!+\!I(X;Y_{2}|Y_{0}Z_{2})\!+\!I(Y_{0};X|Z_{1})\!+\!I(Y_{0};X|Z_{2})\!+\!I(Y_{1};Y_{2}|XY_{0}).
			\end{gathered}
		\end{equation}

	\subsection{Privacy constraints}\label{privacy}
	Now we devise a lower bound on the equivocation measure of the latent random observations, i.e., $S^{n}_{i}$ for $i \in \{1,2\}$. 
	\begin{align}
		H_{\tilde{p}}(S^{n}_{i}|Z^{n}_{i},M_0,M_i) &\geq 
		H_{\tilde{p}}(S^{n}_{i}|Z^{n}_{i},M_0,M_i,Y^{n}_{0},Y^{n}_{i}) 
		\\& = \label{eq:92}
		H_{\tilde{p}}(S^{n}_{i}|Z^{n}_{i},Y^{n}_{0},Y^{n}_{i}) 
		\\& \geq \label{eq:93}
		H_{\hat{p}}(S^{n}_{i}|Z^{n}_{i},Y^{n}_{0},Y^{n}_{i})
		+5\!\times\!2^{-n\zeta(R_{\mathcal{T}},\phi_{X},p_{Y_{[0:2]}|X})}
		\log{\frac{4\!\times\!2^{-n\zeta(R_{\mathcal{T}},\phi_{X},p_{Y_{[0:2]}|X})}}{\abs{\mathcal{S}_{i}}^{n}}}
		\\& =
		H_{\hat{p}}(S^{n}_{i}|Z^{n}_{i},Y^{n}_{0},Y^{n}_{i})
		- o(1)
		\\& = \label{eq:94}
		\sum_{j=1}^{n}
		H_{\hat{p}}(S_{i,j}|Z_{i,j},Y_{0,j},Y_{i,j})
		- o(1)
		\\& =
		nH_{\phi}(S_{i}|Z_{i},Y_{0},Y_{i})
		- o(1)
	\end{align}
	where \eqref{eq:92} follows since $M_0$ and $M_i$ are deterministic functions of $Y^{n}_{0}$ and $Y^{n}_{i}$; \eqref{eq:93} is derived from \thref{entropydistanceconditional}, and \eqref{eq:94} follows since $\hat{p}_{S^{n}_{i},Z^{n}_{i},Y^{n}_{0},Y^{n}_{i}}$ is a product distribution.
	
	\begin{appendices}
		\begingroup
		\allowdisplaybreaks
		\clearpage
		\section{Preliminary Lemmas}
		\label{AppA}
		\begin{lemma}\thlabel{totalproperties}
			Suppose $p_{XY}$ and $q_{XY}$ are two joint probability distributions on $(X, Y)$ with alphabet $\mathcal{X}\times \mathcal{Y}$. Total variation distance has the following properties:
			\begin{enumerate}
				\item{\cite[Property 2]{Sch13}} Let $p_X$ and $q_X$ be marginals of $p_{XY}$ and $q_{XY}$. For $\epsilon \geq 0$ and a bounded function $f(X) \leq b$ where $b \in \mathbbm{R}^{+}$, if $\norm{p_X-q_X}_{TV} \leq \epsilon$, then
				\begin{equation}
					\abs{
						\mathbbm{E}_{p}\left[f(X)\right] - \mathbbm{E}_{q}\left[f(X)\right]
				} \leq \epsilon b.
				\end{equation}
			
			\item{\cite[Lemma 17]{Cover2009CommunicationIN}} Let $p_{X}p_{Y|X}$ and $q_{X}p_{Y|X}$ be two joint distributions on $\mathcal{X}\times\mathcal{Y}$, then 
			\begin{equation}
				\norm{p_{X}p_{Y|X} - q_{X}p_{Y|X}}_{TV} =
				\norm{p_{X}-q_{X}}_{TV}.
			\end{equation}
		
		\item{\cite[Lemma 16]{Cover2009CommunicationIN}} For marginals $p_X$ and $q_X$, the following inequality holds:
		\begin{equation}
			\norm{p_{X}-q_{X}}_{TV} \leq \norm{p_{XY}-q_{XY}}_{TV}.
		\end{equation}
	
		\item{\cite[Lemma 3]{yas14}} If $\norm{p_{X}p_{Y|X}-q_{X}q_{Y|X}}_{TV} \leq \epsilon$, then 
		\begin{equation}
		\mathbbm{E}_{p_{X}}\norm{p_{Y|X}-q_{Y|X}}_{TV} \leq 2\epsilon.
		\end{equation}
		Accordingly, there exists a $x \in \mathcal{X}$ such that $\norm{p_{Y|X=x}-q_{Y|X=x}}_{TV} \leq 2\epsilon$.
			\end{enumerate}
		\end{lemma}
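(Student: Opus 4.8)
The plan is to derive all four items directly from the definition $\norm{p_X-q_X}_{TV}=\tfrac12\sum_{x}\abs{p_X(x)-q_X(x)}$; none of them needs more than elementary manipulation, so the write-up is short and the only genuinely delicate point is item~4.

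I would begin with items~2 and~3, which are immediate. For item~2, since $p_{Y|X}(y|x)\ge 0$ and $\sum_{y}p_{Y|X}(y|x)=1$, we get $\tfrac12\sum_{x,y}\abs{p_X(x)p_{Y|X}(y|x)-q_X(x)p_{Y|X}(y|x)}=\tfrac12\sum_{x}\abs{p_X(x)-q_X(x)}\sum_{y}p_{Y|X}(y|x)=\norm{p_X-q_X}_{TV}$. For item~3, writing $p_X(x)=\sum_{y}p_{XY}(x,y)$ and applying the triangle inequality to $\abs{\sum_{y}(p_{XY}(x,y)-q_{XY}(x,y))}$ gives $\norm{p_X-q_X}_{TV}\le\tfrac12\sum_{x,y}\abs{p_{XY}(x,y)-q_{XY}(x,y)}=\norm{p_{XY}-q_{XY}}_{TV}$.

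For item~1, I would use the signed decomposition of $p_X-q_X$. Let $A=\{x:p_X(x)\ge q_X(x)\}$; then $\sum_{x\in A}(p_X(x)-q_X(x))=\sum_{x\notin A}(q_X(x)-p_X(x))=\norm{p_X-q_X}_{TV}\le\epsilon$. Hence, for the nonnegative bounded $f$ at hand (so $0\le f(X)\le b$), $\E_p[f(X)]-\E_q[f(X)]=\sum_{x\in A}f(x)(p_X(x)-q_X(x))-\sum_{x\notin A}f(x)(q_X(x)-p_X(x))$ is a difference of two quantities each lying in $[0,b\epsilon]$, so its absolute value is at most $b\epsilon$. Equivalently one may invoke $\norm{p_X-q_X}_{TV}=\sup_{B}\abs{p_X(B)-q_X(B)}$ and a telescoping argument, but the signed split is cleanest.

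Item~4 is where the only real work lies, and it is the step I expect to be the main obstacle. The trick is a pointwise \emph{reverse} triangle inequality: for each $(x,y)$, writing $p_X(x)p_{Y|X}(y|x)-q_X(x)q_{Y|X}(y|x)=p_X(x)\bigl(p_{Y|X}(y|x)-q_{Y|X}(y|x)\bigr)+q_{Y|X}(y|x)\bigl(p_X(x)-q_X(x)\bigr)$ gives $\abs{p_X(x)p_{Y|X}(y|x)-q_X(x)q_{Y|X}(y|x)}\ge p_X(x)\abs{p_{Y|X}(y|x)-q_{Y|X}(y|x)}-q_{Y|X}(y|x)\abs{p_X(x)-q_X(x)}$. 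Summing over $y$ (using $\sum_{y}q_{Y|X}(y|x)=1$) and then over $x$ yields $2\norm{p_{XY}-q_{XY}}_{TV}\ge 2\,\E_{p_X}\norm{p_{Y|X}-q_{Y|X}}_{TV}-2\norm{p_X-q_X}_{TV}$. Rearranging and bounding $\norm{p_X-q_X}_{TV}\le\norm{p_{XY}-q_{XY}}_{TV}\le\epsilon$ by item~3 gives $\E_{p_X}\norm{p_{Y|X}-q_{Y|X}}_{TV}\le 2\epsilon$, and the existence of some $x$ with $\norm{p_{Y|X=x}-q_{Y|X=x}}_{TV}\le 2\epsilon$ follows since a nonnegative quantity cannot exceed its $p_X$-average at every point. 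The only non-obvious ingredients are the sign bookkeeping in this chain and the use of item~3 to close the loop; everything else is routine.
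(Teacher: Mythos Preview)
Your proofs are correct. The paper does not actually prove this lemma; it states all four items with citations to external sources (\cite{Sch13}, \cite{Cover2009CommunicationIN}, \cite{yas14}) and moves on, so there is no in-paper argument to compare against. Your self-contained derivations from the definition of total variation are the standard ones and match what those references contain; in particular, the add--subtract decomposition and reverse triangle inequality you use for item~4 is exactly the argument behind \cite[Lemma~3]{yas14}. One minor remark on item~1: the paper's hypothesis ``$f(X)\le b$'' is a bit loose, and you (correctly) read it as $0\le f\le b$, which is what is needed for the bound $\epsilon b$ rather than $2\epsilon b$ and which is how the lemma is actually applied later in the paper.
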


		\begin{lemma} \thlabel{totalvariationlemma}
			Consider two random variables $X$ and $Y$ with two joint probability distributions $p_{XY}$ and $q_{XY}$ on their support set $\mathcal{X} \times \mathcal{Y}$. Given $q(A) \leq \epsilon$ for an arbitrary $A \subseteq \mathcal{X}$, we would have
			\begin{equation}
				p(A) \leq \epsilon + 2\norm{p_X-q_X}_{TV}.
			\end{equation} 
			Also if $p(B)=q(B)$ for some $B \subseteq \mathcal{Y}$, and $q(A \cap B) \leq \delta$ then 
			\begin{equation}
				p(A \cap B) \leq \delta + 2\norm{p_{X|B}-q_{X|B}}_{TV}\times q(B).
			\end{equation}
		\end{lemma}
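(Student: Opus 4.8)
The plan is to derive both inequalities from the elementary variational characterization of total variation distance: for any two distributions $p_X,q_X$ on a set $\mathcal{X}$ and any event $A\subseteq\mathcal{X}$, one has $p_X(A)-q_X(A)\leq\norm{p_X-q_X}_{TV}$. This follows from Property 1 of \thref{totalproperties} applied to the bounded function $f(x)=\mathbbm{1}(x\in A)$, which satisfies $f\leq 1$, so that $\abs{p_X(A)-q_X(A)}=\abs{\mathbbm{E}_p[f(X)]-\mathbbm{E}_q[f(X)]}\leq\norm{p_X-q_X}_{TV}$.

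For the first claim, note that when $A\subseteq\mathcal{X}$ the event $\{X\in A\}$ under the joint laws $p_{XY}$ and $q_{XY}$ has probabilities $p(A)=p_X(A)$ and $q(A)=q_X(A)$. The characterization above gives $p(A)\leq q(A)+\norm{p_X-q_X}_{TV}$, and combining this with the hypothesis $q(A)\leq\epsilon$ and the trivial bound $\norm{p_X-q_X}_{TV}\leq 2\norm{p_X-q_X}_{TV}$ yields $p(A)\leq\epsilon+2\norm{p_X-q_X}_{TV}$.

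For the second claim, $A\subseteq\mathcal{X}$ and $B\subseteq\mathcal{Y}$, so $A\cap B$ denotes the event $\{X\in A,\,Y\in B\}$, which factors as $p(A\cap B)=p_{X|B}(A)\,p(B)$, where $p_{X|B}$ is the conditional law of $X$ given $\{Y\in B\}$, and similarly for $q$; if $p(B)=q(B)=0$ both sides vanish and there is nothing to prove, and otherwise the conditional laws are well defined. I would apply the variational characterization to the conditional marginals $p_{X|B},q_{X|B}$ to get $p_{X|B}(A)\leq q_{X|B}(A)+\norm{p_{X|B}-q_{X|B}}_{TV}$, then multiply through by $p(B)$ and use the hypothesis $p(B)=q(B)$, so that the first term becomes $q_{X|B}(A)\,q(B)=q(A\cap B)\leq\delta$ and the second becomes $\norm{p_{X|B}-q_{X|B}}_{TV}\,q(B)\leq 2\norm{p_{X|B}-q_{X|B}}_{TV}\,q(B)$; adding the two gives the stated bound. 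There is no substantive obstacle: both inequalities are one-line consequences of the event characterization of $\norm{\cdot}_{TV}$, and the extra factor of $2$ is merely loose slack, presumably retained so that the statement dovetails with the factor-$2$ loss appearing in Property 4 of \thref{totalproperties} when a joint total-variation bound is converted to a conditional one elsewhere in the paper.
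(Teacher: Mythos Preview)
Your proof is correct and follows essentially the same idea as the paper's: bound $p(A)$ by $q(A)$ plus a total-variation term, and for the second part do the same thing conditionally and multiply through by $p(B)=q(B)$. The only cosmetic difference is that the paper works pointwise, writing $p(A)=\sum_{x\in A}\abs{p(x)-q(x)+q(x)}$ and applying the triangle inequality term by term to get $q(A)+\sum_{x\in\mathcal{X}}\abs{p(x)-q(x)}=q(A)+2\norm{p_X-q_X}_{TV}$ directly, whereas you invoke the event characterization $\abs{p(A)-q(A)}\leq\norm{p_X-q_X}_{TV}$ via Property~1 of \thref{totalproperties} and then throw in the extra factor of $2$ as acknowledged slack. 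Your route is in fact a hair tighter before the final relaxation, but both arguments are one-liners and there is no substantive gap.
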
	
		\begin{proof}
			The proof is quite straightforward. One can write the $p(A)$ as follows:
			\begin{align}
				p(A) &= \sum_{x \in A}p(x) \\
				& = \sum_{x \in A}\abs{p(x)} \\
				& = \sum_{x \in A}\abs{p(x) - q(x) + q(x)} \\
				& \leq \sum_{x \in A}\abs{q(x)} + \sum_{x \in A}\abs{p(x) - q(x)} \label{eq:46}\\
				& \leq q(A) + \sum_{x \in \mathcal{X}}\abs{p(x) - q(x)}  \\
				& \leq \epsilon + 2\norm{p-q}_{TV},
			\end{align}
			where \eqref{eq:46} follows from triangle inequality. For the second part we can write
			\begin{align}
				p(A \cap B) &= p(B)p\condx{A}{B} \\
				&=
				p(B)\sum_{x \in A}\abs{p\condx{x}{B}} \\
				&= p(B)\sum_{x \in A}\abs{p\condx{x}{B} - q\condx{x}{B} + q\condx{x}{B}} \\
				&\leq p(B)\sum_{x \in A}\abs{q\condx{x}{B}} + p(B)\sum_{x \in A}\abs{p\condx{x}{B} - q\condx{x}{B}} \\
				&\leq q(B)q\condx{A}{B} + q(B)\sum_{x \in \mathcal{X}}\abs{p\condx{x}{B} - q\condx{x}{B}}  \\
				&\leq \delta + 2\norm{p_{X|B}-q_{X|B}}_{TV}\times q(B).
			\end{align}
		\end{proof}
	
	\begin{lemma}{\cite[Lemma 2.7]{csiszar11}}\thlabel{entropydistance}
		Suppose $p_X$ and $q_X$ are two non-equal pmfs over a discrete random variable $X$ with alphabet $\mathcal{X}$. Given $\Theta \eqqcolon \norm{p_X - q_X}_{TV} \leq \frac{1}{4}$, we have
		\begin{equation}
			\abs{H_{p}(X)-H_{q}(X)} \leq - 2\Theta\log{\frac{2\Theta}{\abs{\mathcal{X}}}}.
		\end{equation}
	\end{lemma}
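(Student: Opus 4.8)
\emph{Proof plan.} The plan is to reduce $\abs{H_p(X)-H_q(X)}$ to a sum of one-variable differences of the function $\eta(u):=-u\log u$ (with the convention $\eta(0):=0$), to bound each such difference by the modulus of continuity of $\eta$, and then to aggregate the sum using $H\le\log\abs{\mathcal{X}}$ (equivalently, the log-sum inequality).

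First I would write $H_p(X)-H_q(X)=\sum_{x\in\mathcal{X}}\bigl(\eta(p_X(x))-\eta(q_X(x))\bigr)$, so that by the triangle inequality $\abs{H_p(X)-H_q(X)}\le\sum_{x}\abs{\eta(p_X(x))-\eta(q_X(x))}$. Putting $a_x:=\abs{p_X(x)-q_X(x)}$ and recalling that $\Theta$ equals half the $\ell_1$ distance, we have $\sum_{x}a_x=2\Theta\le\tfrac12$, hence $a_x\le\tfrac12$ for every $x$; this is precisely where the hypothesis $\Theta\le\tfrac14$ enters. The decisive scalar estimate is that $\abs{\eta(s)-\eta(t)}\le\eta(\abs{s-t})$ for all $s,t\in[0,1]$ with $\abs{s-t}\le\tfrac12$, which I would derive from the concavity of $\eta$ on $[0,1]$ together with $\eta(0)=0$: the half $\eta(t)-\eta(s)\le\eta(t-s)$ (for $s\le t$) is just subadditivity of a concave function vanishing at the origin, while the reverse half $\eta(s)-\eta(t)\le\eta(t-s)$ is where the bound $t-s\le\tfrac12$ is genuinely used. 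Applying this coordinatewise then gives $\abs{H_p(X)-H_q(X)}\le\sum_{x}\eta(a_x)=-\sum_{x}a_x\log a_x$.

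Finally, since $p_X\neq q_X$ we have $2\Theta>0$, so $\{a_x/(2\Theta)\}_{x\in\mathcal{X}}$ is a pmf supported on at most $\abs{\mathcal{X}}$ symbols, and
\[
-\sum_{x}a_x\log a_x=-2\Theta\log(2\Theta)+2\Theta\Bigl(-\sum_{x}\tfrac{a_x}{2\Theta}\log\tfrac{a_x}{2\Theta}\Bigr)\le-2\Theta\log(2\Theta)+2\Theta\log\abs{\mathcal{X}}=-2\Theta\log\tfrac{2\Theta}{\abs{\mathcal{X}}},
\]
which is exactly the asserted bound; note that $2\Theta/\abs{\mathcal{X}}\le\tfrac12<1$, so the right-hand side is nonnegative, consistent with the left-hand side.

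\emph{Main obstacle.} The only nontrivial ingredient is the sharp scalar inequality $\abs{\eta(s)-\eta(t)}\le\eta(\abs{s-t})$ valid up to $\abs{s-t}\le\tfrac12$: its easy half follows from concavity, but the other half --- the reason the lemma's threshold is $\tfrac14$ rather than $\tfrac12$ --- requires either a short case split according to whether $s$ and $t$ lie to the left or to the right of the maximizer $e^{-1}$ of $\eta$, or a direct estimate of $\sup_{s\in[0,1-d]}(\eta(s)-\eta(s+d))$ for fixed $d\le\tfrac12$ (equality at $d=\tfrac12$, $s=\tfrac12$). Everything after that is routine bookkeeping.
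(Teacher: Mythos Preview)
Your proof is correct and follows the classical argument. The paper does not supply its own proof of this lemma; it is quoted verbatim as \cite[Lemma~2.7]{csiszar11} and used as a black box (only in the derivation of the conditional-entropy continuity bound that follows it). Your plan is precisely the Csisz\'ar--K\"orner proof: the pointwise estimate $\abs{\eta(s)-\eta(t)}\le\eta(\abs{s-t})$ for $\abs{s-t}\le\tfrac12$, summed over $x$, followed by the entropy bound $H\le\log\abs{\mathcal{X}}$ applied to the normalized profile $\{a_x/(2\Theta)\}$. Your identification of where the hypothesis $\Theta\le\tfrac14$ enters (forcing each $a_x\le\tfrac12$) and your sketch of the ``hard half'' of the scalar inequality via $\sup_{s}\bigl(\eta(s)-\eta(s+d)\bigr)=\eta(1-d)\le\eta(d)$ for $d\le\tfrac12$ are both accurate.
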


	\begin{lemma}\thlabel{entropydistanceconditional}
		Let $p_{XY}$ and $q_{XY}$ be two joint distributions on discrete random variables $(X,Y)$ with alphabet $\mathcal{X}\times\mathcal{Y}$. Given $\Theta \eqqcolon \norm{p_{XY} - q_{XY}}_{TV} \leq \frac{1}{2e}$, we have
		\begin{equation}
			\abs{H_{p}\condx{Y}{X}-H_{q}\condx{Y}{X}} 
			\leq -5\Theta \log{\frac{4\Theta}{\abs{\mathcal{Y}}}}.
		\end{equation}
	\end{lemma}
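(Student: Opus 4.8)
The plan is to bootstrap from the unconditional continuity bound \thref{entropydistance} via the chain rule $H\condx{Y}{X}=H(X,Y)-H(X)$. Applying this identity to both $p$ and $q$ and invoking the triangle inequality gives
\[
	\abs{H_{p}\condx{Y}{X}-H_{q}\condx{Y}{X}}
	\leq
	\abs{H_{p}(X,Y)-H_{q}(X,Y)}
	+
	\abs{H_{p}(X)-H_{q}(X)} .
\]
Since $\Theta=\norm{p_{XY}-q_{XY}}_{TV}\leq\frac{1}{2e}<\frac14$, the joint term is controlled directly by \thref{entropydistance} applied on the alphabet $\mathcal{X}\times\mathcal{Y}$, whose cardinality is $\abs{\mathcal{X}}\abs{\mathcal{Y}}$; this yields $\abs{H_{p}(X,Y)-H_{q}(X,Y)}\leq-2\Theta\log\frac{2\Theta}{\abs{\mathcal{X}}\abs{\mathcal{Y}}}$.

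For the marginal term I would first push the total variation distance onto the marginals: by part~3 of \thref{totalproperties}, $\Theta'\coloneqq\norm{p_{X}-q_{X}}_{TV}\leq\norm{p_{XY}-q_{XY}}_{TV}=\Theta\leq\frac14$, so \thref{entropydistance} on $\mathcal{X}$ gives $\abs{H_{p}(X)-H_{q}(X)}\leq-2\Theta'\log\frac{2\Theta'}{\abs{\mathcal{X}}}$. To express this through $\Theta$ alone, I would enlarge $\Theta'$ to $\Theta$ by monotonicity: the map $t\mapsto-2t\log\frac{2t}{\abs{\mathcal{X}}}$ is non-decreasing on $\bigl(0,\tfrac{\abs{\mathcal{X}}}{2e}\bigr]$ (its derivative equals $2\log\frac{\abs{\mathcal{X}}}{2t}-2\log e$), and $\Theta'\leq\Theta\leq\frac{1}{2e}\leq\frac{\abs{\mathcal{X}}}{2e}$ places both arguments inside that interval, so $\abs{H_{p}(X)-H_{q}(X)}\leq-2\Theta\log\frac{2\Theta}{\abs{\mathcal{X}}}$. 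This monotonicity step is exactly the place where the stronger hypothesis $\Theta\leq\frac{1}{2e}$ (rather than the $\Theta\leq\frac14$ of \thref{entropydistance}) is needed.

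Summing the two estimates gives
\[
	\abs{H_{p}\condx{Y}{X}-H_{q}\condx{Y}{X}}
	\leq
	-2\Theta\log\frac{2\Theta}{\abs{\mathcal{X}}\abs{\mathcal{Y}}}-2\Theta\log\frac{2\Theta}{\abs{\mathcal{X}}}
	=
	-2\Theta\log\frac{4\Theta^{2}}{\abs{\mathcal{X}}^{2}\abs{\mathcal{Y}}} ,
\]
so it remains to dominate the right-hand side by the claimed $-5\Theta\log\frac{4\Theta}{\abs{\mathcal{Y}}}$. After cancelling the common factor $\Theta>0$ this reduces to an elementary comparison of logarithms, in which the surplus in the coefficient ($5$ versus $2$) and in the argument ($4\Theta$ versus $2\Theta$) is what must absorb the leftover constants and the $\abs{\mathcal{X}}$-dependence, using $4\Theta<1$ so that $\log\frac{1}{4\Theta}>0$. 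I expect this final reconciliation of the alphabet factors and constants to be the main obstacle; everything preceding it is a direct application of \thref{entropydistance} and \thref{totalproperties} plus the single monotonicity observation. If one wants the bound to be genuinely free of $\abs{\mathcal{X}}$, a more careful route instead expands $H_{p}\condx{Y}{X}=\sum_{x}p(x)H(p_{Y\mid X=x})$, splits the sum according to whether $\theta_{x}\coloneqq\norm{p_{Y\mid X=x}-q_{Y\mid X=x}}_{TV}$ exceeds $\frac14$, bounds the small-$\theta_{x}$ part by \thref{entropydistance} on $\mathcal{Y}$ together with the concavity of $t\mapsto-2t\log\frac{2t}{\abs{\mathcal{Y}}}$ and Jensen's inequality, bounds the large-$\theta_{x}$ part by $\log\abs{\mathcal{Y}}\cdot\Pr_{p}(\theta_{X}>\tfrac14)$ via Markov's inequality, and uses part~4 of \thref{totalproperties} (which gives $\mathbbm{E}_{p_{X}}[\theta_{X}]\leq 2\Theta$) to tie everything back to $\Theta$; here too, pinning down the exact constant $5$ is the delicate point.
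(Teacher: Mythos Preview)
Your primary chain-rule route cannot be closed: the ``final reconciliation'' you flag is not a bookkeeping nuisance but a genuine obstruction. After cancelling $\Theta>0$, the required comparison
\[
2\log\frac{\abs{\mathcal{X}}^{2}\abs{\mathcal{Y}}}{4\Theta^{2}}\leq 5\log\frac{\abs{\mathcal{Y}}}{4\Theta}
\]
rearranges to $4\log\abs{\mathcal{X}}-3\log\abs{\mathcal{Y}}\leq \log\frac{1}{64\Theta}$, which for any fixed $\Theta\in(0,\tfrac{1}{2e}]$ and $\abs{\mathcal{Y}}$ fails once $\abs{\mathcal{X}}$ is large enough. The $\abs{\mathcal{X}}$-dependence introduced by applying \thref{entropydistance} on the joint alphabet $\mathcal{X}\times\mathcal{Y}$ simply cannot be absorbed into a bound that is supposed to depend only on $\abs{\mathcal{Y}}$; no choice of constants rescues this decomposition.

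The alternative you sketch in your last sentence is, up to one technical nicety, exactly the paper's argument. The paper writes $\sum_{x}p(x)H_{p}\condx{Y}{X=x}-\sum_{x}q(x)H_{q}\condx{Y}{X=x}$, adds and subtracts $\sum_{x}p(x)H_{q}\condx{Y}{X=x}$, bounds the resulting ``mismatched weights'' piece by $\Theta\log\abs{\mathcal{Y}}$ via part~(1) of \thref{totalproperties}, and bounds the ``mismatched conditionals'' piece by $\mathbb{E}_{p_{X}}\bigl[-2\theta_{X}\log\tfrac{2\theta_{X}}{\abs{\mathcal{Y}}}\bigr]$ via \thref{entropydistance}. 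It then splits $-2\theta_{X}\log\tfrac{2\theta_{X}}{\abs{\mathcal{Y}}}$ into $-2\theta_{X}\log(2\theta_{X})+2\theta_{X}\log\abs{\mathcal{Y}}$, applies Jensen and the monotonicity of $-t\log t$ on $[0,\tfrac{1}{e}]$, and uses $\mathbb{E}_{p_{X}}[\theta_{X}]\leq 2\Theta$ from part~(4) of \thref{totalproperties} to arrive at $-4\Theta\log(4\Theta)+5\Theta\log\abs{\mathcal{Y}}\leq -5\Theta\log\tfrac{4\Theta}{\abs{\mathcal{Y}}}$. The paper does \emph{not} perform your $\theta_{x}\gtrless\tfrac14$ split; it invokes \thref{entropydistance} on every conditional directly, so your instinct to case-split is in fact more careful than the paper on that point. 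In short: abandon the chain-rule approach and commit to the conditional expansion you outlined at the end.
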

	\begin{proof}
		We begin by using the definition of the conditional entropy as,  
		\begin{align}
			&\abs{H_{p}\condx{Y}{X}-H_{q}\condx{Y}{X}} =
			\abs{
				\sum_{x \in \mathcal{X}}p(x)H_{p}\condx{Y}{X=x} -
				\sum_{x \in \mathcal{X}}q(x)H_{q}\condx{Y}{X=x}
			} 
		\\ &=
		\abs{
			\sum_{x \in \mathcal{X}}p(x)H_{p}\condx{Y\!}{\!X=x} -
			q(x)H_{q}\condx{Y\!}{\!X=x} +
			p(x)H_{q}\condx{Y\!}{\!X=x} -
			p(x)H_{q}\condx{Y\!}{\!X=x}
		}
	\\ &\leq \label{eq:86}
	\abs{
		\sum_{x \in \mathcal{X}}p(x)
		\para*{
		H_{p}\condx{Y\!}{\!X=x} -
		H_{q}\condx{Y\!}{\!X=x}
	}
	} + 
	\abs{
		\sum_{x \in \mathcal{X}}
		\para{ p(x) - q(x) }
		H_{q}\condx{Y\!}{\!X=x}
	}
	\\ &\leq \label{eq:87}
	\abs{
		\sum_{x \in \mathcal{X}}p(x)
		\para*{
			H_{p}\condx{Y\!}{\!X=x} -
			H_{q}\condx{Y\!}{\!X=x}
		}
	} + 
	\Theta\log\abs{\mathcal{Y}}
	\\& \leq \label{eq:88}
	\sum_{x \in \mathcal{X}}p(x)
	\abs{
			H_{p}\condx{Y\!}{\!X=x} -
			H_{q}\condx{Y\!}{\!X=x}
	} + 
	\Theta\log\abs{\mathcal{Y}}
	\\& \leq \label{eq:89}
	\mathbbm{E}_{p_X}\left\{
	-2\Theta_x\log{\frac{2\Theta_x}{\abs{\mathcal{Y}}}}
	\right\}
	 + 
	\Theta\log\abs{\mathcal{Y}},
		\end{align}
	where $\Theta_x \coloneqq \norm{p_{Y|X=x}-q_{Y|X=x}}_{TV}$; \eqref{eq:86} and \eqref{eq:88} follow from the triangle inequality; \eqref{eq:87} follows from part (1) of \thref{totalproperties}; \eqref{eq:89} follows from \thref{entropydistance}. Note that the function $f(x)=-x\log(x)$ is concave, monotonically non-decreasing on the $[0,\frac{1}{e}]$, and $f(x)$ is non-negative on $[0,1]$. Therefore, we can write
	\begin{align}
		\abs{H_{p}\condx{Y}{X}-H_{q}\condx{Y}{X}} &\leq
		\mathbbm{E}_{p_X}\left\{
		-2\Theta_x\log{\frac{2\Theta_x}{\abs{\mathcal{Y}}}}
		\right\}
		+ 
		\Theta\log\abs{\mathcal{Y}}
		\\& =
		\mathbbm{E}_{p_X}\left\{
		-2\Theta_x\log{2\Theta_x}
		\right\}
		+ 
		\mathbbm{E}_{p_X}\left\{
		2\Theta_x\log{\abs{\mathcal{Y}}}
		\right\}
		+
		\Theta\log\abs{\mathcal{Y}}
		\\& \leq \label{eq:90}
		-2\mathbbm{E}_{p_X}\left\{\Theta_x\right\}
		\log{2\mathbbm{E}_{p_X}\left\{\Theta_x\right\}}
		+  
		2\mathbbm{E}_{p_X}\left\{\Theta_x\right\}
		\log{\abs{\mathcal{Y}}}
		+
		\Theta\log\abs{\mathcal{Y}}
		\\& \leq \label{eq:91}
		-4\Theta
		\log{4\Theta}
		+  
		5\Theta\log\abs{\mathcal{Y}}
		\\& \leq
		-5\Theta
		\log{\frac{4\Theta}{\abs{\mathcal{Y}}}},
	\end{align}
	where \eqref{eq:90} follows from the Jensen inequality; The inequality in \eqref{eq:91} is obtained since we have assumed $\Theta=\norm{p_{XY}-q_{XY}}_{TV}$, thereby by using part (4) of the \thref{totalproperties}, we get $\mathbbm{E}_{p_X}\left\lbrace \Theta_x\right\rbrace \leq 2\Theta$; now considering the assumption that $\Theta \leq \frac{1}{2e}$ and the fact that $-2x\log(2x)$ is non-negative and monotonically non-decreasing on $[0, \frac{1}{e}]$, by substituting $\mathbbm{E}_{p_X}\left\lbrace \Theta_x\right\rbrace$ with $2\Theta$, the result follows.
	\end{proof}

	\section{Proof of \thref{OSRBnew}}\label{OSRBnewProof}
		For convenience, let's define $\mathcal{T} \coloneqq [1:T]$ and $M_i \coloneqq 2^{nR_i}$ for $i \in \mathcal{T}$. Recall that the distributed random binning induces the following random pmf on the set $\mathcal{Y}^{n}_{\mathcal{T}}\times\mathcal{X}^{n}\times \prod_{i=1}^{T}[1:M_i]$,
		\begin{equation}
			P(y^{n}_{\mathcal{T}},x^{n},b_{\mathcal{T}})=p(y^{n}_{\mathcal{T}},x^{n})\prod_{i=1}^{T}\mathbbm{1}(\mathcal{B}_i(y^{n}_{i})=b_i). \label{eq:2}
		\end{equation}
		It can be seen that $B_1,B_2,\ldots,B_T$ are uniform and mutually independent of the correlated source $X^{n}$ in the mean, because
		\begin{align}
			\mathbbm{E}_{\mathcal{B}}P(x^{n},b_{\mathcal{T}})&=\mathbbm{E}_{\mathcal{B}}\left\{
			\sum_{y^{n}_{\mathcal{T}} \in \mathcal{Y}^{n}_{\mathcal{T}}}P(y^{n}_{\mathcal{T}},x^{n},b_{\mathcal{T}})
			\right\}              \\
			&=\sum_{y^{n}_{\mathcal{T}} \in \mathcal{Y}^{n}_{\mathcal{T}}}
			p(y^{n}_{\mathcal{T}},x^{n})\mathbbm{E}_{\mathcal{B}}\left\{
			\prod_{i=1}^{T}\mathbbm{1}(\mathcal{B}_i(y^{n}_{i})=b_i)
			\right\} \label{eq:1} \\
			&=\sum_{y^{n}_{\mathcal{T}} \in \mathcal{Y}^{n}_{\mathcal{T}}}
			p(y^{n}_{\mathcal{T}},x^{n})
			\prod_{i=1}^{T}\mathbbm{E}_{\mathcal{B}}\left\{\mathbbm{1}(\mathcal{B}_i(y^{n}_{i})=b_i)
			\right\} \label{eq:3} \\
			&=\sum_{y^{n}_{\mathcal{T}} \in \mathcal{Y}^{n}_{\mathcal{T}}}
			p(y^{n}_{\mathcal{T}},x^{n})
			\prod_{i=1}^{T}\frac{1}{M_i}\label{eq:local4} \\
			&=p(x^{n})
			\prod_{i=1}^{T}\frac{1}{M_i},
		\end{align}
		where \eqref{eq:1} results directly from \eqref{eq:2}, \eqref{eq:3} follows from the independence between each of the random mappings, and \eqref{eq:local4} follows because the random mappings are uniform. From now on, for any $\mathcal{S} \subseteq \mathcal{T}$ we will use the $p^{U}_{\mathcal{S}}=\prod_{i \in \mathcal{S}}\frac{1}{M_i}$ convention. Therefore we have  
		\begin{align}
			\mathbbm{E}_{\mathcal{B}}P(x^{n},b_{\mathcal{T}})=p(x^{n})p^{U}_{\mathcal{T}}. \label{eq:4}
		\end{align}
		We can use \eqref{eq:4} to rephrase the total variation distance between the induced random pmf and its expected value by writing
		\begin{align}
			\mathbbm{E}_{\mathcal{B}}\norm{P(x^{n},b_{\mathcal{T}}) - \mathbbm{E}_{\mathcal{B}}P(x^{n},b_{\mathcal{T}})}_{TV}
			&=
			\mathbbm{E}_{\mathcal{B}}\norm{P(x^{n},b_{\mathcal{T}}) - p(x^{n})p^{U}_{\mathcal{T}})}_{TV} \label{eq:5}                   \\
			&=
			\mathbbm{E}_{\mathcal{B}}\left\{
			\frac{1}{2}
			\sum_{x^{n},b_{\mathcal{T}}}\abs{P(x^{n},b_{\mathcal{T}}) - p(x^{n})p^{U}_{\mathcal{T}}}
			\right\} \label{eq:6}                                                                                           \\
			&=
			\frac{1}{2}
			\sum_{x^{n},b_{\mathcal{T}}}p(x^{n})p^{U}_{\mathcal{T}}\mathbbm{E}_{\mathcal{B}}
			\abs{\frac{P(x^{n},b_{\mathcal{T}})}{p(x^{n})p^{U}_{\mathcal{T}}}-1} \label{eq:local7},
		\end{align}
		where \eqref{eq:6} is due to the very  definition of the total variation distance.

		Now given $(x^{n},b_{\mathcal{T}}) \in \mathcal{X}^{n}\times \prod_{i=1}^{T}[1:M_i]$,  let us define,
		\begin{align}
			L_{B}(x^{n},b_{\mathcal{T}}) & \coloneqq
			\frac{P(x^{n},b_{\mathcal{T}})}{p(x^{n})p^{U}_{\mathcal{T}}} \label{eq:local10}  \\
			& =\frac{1}{p^{U}_{\mathcal{T}}}
			\sum_{y^{n}_{\mathcal{T}} \in \mathcal{Y}^{n}_{\mathcal{T}}}
			p\condx{y^{n}_{\mathcal{T}}}{x^{n}}\prod_{i=1}^{T}\mathbbm{1}(\mathcal{B}_i(y^{n}_{i})=b_i),
			\label{eq:local11}
		\end{align}
		where the definition is confined on the support set of $p(x^n)$. One can observe that $L_{B}(x^{n},b_{\mathcal{T}})$ depends on the random binnings' distribution and therefore it itself is a random variable. It follows from the definition that
		\begin{equation}
			\mathbbm{E}_{\mathcal{B}}\left\{L_{B}(x^{n},b_{\mathcal{T}})\right\}=1. \label{eq:9}
		\end{equation}
		Using \eqref{eq:9} and \eqref{eq:local7}, we can write
		\begin{align}
			\mathbbm{E}_{\mathcal{B}}\norm{P(x^{n},b_{\mathcal{T}}) - \mathbbm{E}_{\mathcal{B}}P(x^{n},b_{\mathcal{T}})}_{TV}
			&=
			\frac{1}{2}
			\sum_{x^{n},b_{\mathcal{T}}}p(x^{n})p^{U}_{\mathcal{T}}\mathbbm{E}_{\mathcal{B}}
			\abs{\frac{P(x^{n},b_{\mathcal{T}})}{p(x^{n})p^{U}_{\mathcal{T}}}-1}                   \\
			&=
			\frac{1}{2}
			\sum_{x^{n},b_{\mathcal{T}}}p(x^{n})p^{U}_{\mathcal{T}}\mathbbm{E}_{\mathcal{B}}\abs{L_{B}(x^{n},b_{\mathcal{T}})-1} \label{eq:7} \\
			&=
			\frac{1}{2}
			\sum_{x^{n},b_{\mathcal{T}}}p(x^{n})p^{U}_{\mathcal{T}}\mathbbm{E}_{\mathcal{B}}\abs{L_{B}(x^{n},b_{\mathcal{T}})-\mathbbm{E}_{\mathcal{B}}\left\{L_{B}(x^{n},b_{\mathcal{T}})\right\}} \label{eq:8},
		\end{align}
		where \eqref{eq:7} stems from \eqref{eq:local10}, and \eqref{eq:8} follows from \eqref{eq:9}. 
		
		Now we use type enumeration method to break down $L_{B}(x^{n},b_{\mathcal{T}})$ into simpler components with more interesting characteristics. Suppose $\pi_{\bar{X}}$ is the type of $x^{n} \in \mathcal{X}^{n}$ and let $\pi_{\bar{Y}_{\mathcal{T}}|\bar{X}}$ denote the conditional type of $y^{n}_{\mathcal{T}} \in \mathcal{Y}^{n}_{\mathcal{T}}$ given $x^{n}$, so that for the joint type $\pi_{\bar{Y}_{\mathcal{T}},\bar{X}}$ of the sequence $(y^{n}_{\mathcal{T}},x^{n})$ we have
		\begin{equation}
			\pi_{\bar{Y}_{\mathcal{T}},\bar{X}}(a_{\mathcal{T}},b)=\pi_{\bar{Y}_{\mathcal{T}}|\bar{X}}(a_{\mathcal{T}}|b)\pi_{\bar{X}}(b),
		\end{equation}
		for every $a_{\mathcal{T}} \in \mathcal{Y}_{\mathcal{T}}$ and $b \in \mathcal{X}$. Note that given $x^{n}$,
		one can partition the elements of $\mathcal{Y}^{n}_{\mathcal{T}}$ in \eqref{eq:local10} into possible conditional types and write,
		\begin{equation}
			L_{B}(x^{n},b_{\mathcal{T}}) = \frac{1}{p^{U}_{\mathcal{T}}}
			\sum_{\pi_{\bar{Y}_{\mathcal{T}}|\bar{X}} \in \mathcal{P}_{n}(\mathcal{Y}_{\mathcal{T}}|\pi_{\bar{X}})}
			N_{\pi_{\bar{Y}_{\mathcal{T}}|\bar{X}}}(x^n,b_{\mathcal{T}})
			l_{\pi_{\bar{Y}_{\mathcal{T}}|\bar{X}}}(x^n),
			\label{eq:local12}
		\end{equation}
		where,
		\begin{equation}
			N_{\pi_{\bar{Y}_{\mathcal{T}}|\bar{X}}}(x^n,b_{\mathcal{T}})\coloneqq\abs{\left\{\
				y^{n}_{\mathcal{T}} : \
				\mathcal{B}_i(y^{n}_{i})=b_i \ for \ i \in \mathcal{T} \ \wedge \
				y^{n}_{\mathcal{T}} \in \mathcal{T}^{n}_{\pi_{\bar{Y}_{\mathcal{T}}|\bar{X}}}(x^n)
				\right\}} \label{eq:11},
		\end{equation}
		is a random variable since it depends on random mappings, i.e., $\mathcal{B}$, and
		\begin{equation}
			l_{\pi_{\bar{Y}_{\mathcal{T}}|\bar{X}}}(x^n) \coloneqq p\condx{y^{n}_{\mathcal{T}}}{x^n},
		\end{equation}
		for some $y^{n}_{\mathcal{T}} \in \mathcal{T}^{n}_{\pi_{\bar{Y}_{\mathcal{T}}|\bar{X}}}(x^n)$. The particular choice of $y^{n}_{\mathcal{T}}$ is irrelevant as long as it provides the specified joint type. Note that $l_{\pi_{\bar{Y}_{\mathcal{T}}|\bar{X}}}(x^n)$ is only dependent on $x^{n}$ through its type. 
		
		Let us define
		\begin{equation}
			Z_{\pi_{\bar{Y}_{\mathcal{T}}|\bar{X}}}(x^n,b_{\mathcal{T}})\coloneqq\ \frac{1}{p^{U}_{\mathcal{T}}}N_{\pi_{\bar{Y}_{\mathcal{T}}|\bar{X}}}(x^n,b_{\mathcal{T}})l_{\pi_{\bar{Y}_{\mathcal{T}}|\bar{X}}}(x^n). \label{eq:81}
		\end{equation}
		From \eqref{eq:local12} and \eqref{eq:81}, we obtain,
		\begin{equation}
			L_{B}(x^{n},b_{\mathcal{T}}) =
			\sum_{\pi_{\bar{Y}_{\mathcal{T}}|\bar{X}}}Z_{\pi_{\bar{Y}_{\mathcal{T}}|\bar{X}}}(x^n,b_{\mathcal{T}}), \label{eq:35}
		\end{equation}
		and thus,
		\begin{align}
			\mathbbm{E}_{\mathcal{B}}\abs{L_{B}(x^{n},b_{\mathcal{T}})-\mathbbm{E}_{\mathcal{B}}\left\{L_{B}(x^{n},b_{\mathcal{T}})\right\}}                                                                                              &=
			\mathbbm{E}_{\mathcal{B}}\abs{\sum_{\pi_{\bar{Y}_{\mathcal{T}}|\bar{X}}}Z_{\pi_{\bar{Y}_{\mathcal{T}}|\bar{X}}}(x^n,b_{\mathcal{T}}) - \sum_{\pi_{\bar{Y}_{\mathcal{T}}|\bar{X}}}\mathbbm{E}_{\mathcal{B}}\left\{Z_{\pi_{\bar{Y}_{\mathcal{T}}|\bar{X}}}(x^n,b_{\mathcal{T}})\right\}} \\
			&\leq
			\sum_{\pi_{\bar{Y}_{\mathcal{T}}|\bar{X}}}
			\mathbbm{E}_{\mathcal{B}}\abs{
				Z_{\pi_{\bar{Y}_{\mathcal{T}}|\bar{X}}}(x^n,b_{\mathcal{T}}) -
				\mathbbm{E}_{\mathcal{B}}\left\{Z_{\pi_{\bar{Y}_{\mathcal{T}}|\bar{X}}}(x^n,b_{\mathcal{T}})\right\}
			} \label{eq:10}
		\end{align}
		where \eqref{eq:10} follows from the triangle inequality. Substituting \eqref{eq:10} into \eqref{eq:8}, we obtain the following bound for our intended distance:
		
		\begin{align}
			\begin{aligned}
			\mathbbm{E}_{\mathcal{B}}\norm{P(x^{n},b_{\mathcal{T}}) - \mathbbm{E}_{\mathcal{B}}P(x^{n},b_{\mathcal{T}})}_{TV}& \\
			\leq
			\frac{1}{2}
			\sum_{x^{n},b_{\mathcal{T}}}p(x^{n})p^{U}_{\mathcal{T}}
			\sum_{\pi_{\bar{Y}_{\mathcal{T}}|\bar{X}}}&
			\mathbbm{E}_{\mathcal{B}}\abs{
				Z_{\pi_{\bar{Y}_{\mathcal{T}}|\bar{X}}}(x^n,b_{\mathcal{T}}) -
				\mathbbm{E}_{\mathcal{B}}\left\{Z_{\pi_{\bar{Y}_{\mathcal{T}}|\bar{X}}}(x^n,b_{\mathcal{T}})\right\}
			}.
		\end{aligned}
		\end{align} 
		Now we would be able to shift our attention from concentration properties of the induced random pmf, namely $P(x^{n},b_{\mathcal{T}})$, to that of $Z_{\pi_{\bar{Y}_{\mathcal{T}}|\bar{X}}}(x^n,b_{\mathcal{T}})$. For this purpose, we are yet to show that one can find upper bounds for the expectation and variance of $N_{\pi_{\bar{Y}_{\mathcal{T}}|\bar{X}}}(x^n,b_{\mathcal{T}})$, independent from $b_{\mathcal{T}}$ and dependent on $x^{n}$ only through its type. Then by using these two upper bounds we can bound the deviations of $Z_{\pi_{\bar{Y}_{\mathcal{T}}|\bar{X}}}(x^n,b_{\mathcal{T}})$ from its mean in two different ways, and thus, $Z_{\pi_{\bar{Y}_{\mathcal{T}}|\bar{X}}}(x^n,b_{\mathcal{T}})$ deviation is less than their minimum. This claim, should it be true, might yield some intuition about why type enumeration method could be a good way to establish upper bounds on $\mathbbm{E}_{\mathcal{B}}\norm{P(x^{n},b_{\mathcal{T}}) - \mathbbm{E}_{\mathcal{B}}P(x^{n},b_{\mathcal{T}})}_{TV}$. We prove this claim and establish two distinct upper bounds on $\mathbbm{E}_{\mathcal{B}}\abs{
			Z_{\pi_{\bar{Y}_{\mathcal{T}}|\bar{X}}}(x^n,b_{\mathcal{T}}) -
			\mathbbm{E}_{\mathcal{B}}\left\{Z_{\pi_{\bar{Y}_{\mathcal{T}}|\bar{X}}}(x^n,b_{\mathcal{T}})\right\}
		}$ before proceeding further.
		Using our definition in \eqref{eq:11}, we have
		\begin{align}
			N_{\pi_{\bar{Y}_{\mathcal{T}}|\bar{X}}}(x^n,b_{\mathcal{T}})
			&=\sum_{y^{n}_{\mathcal{T}} \in \mathcal{Y}^{n}_{\mathcal{T}}}
			\mathbbm{1}\left(
			\mathcal{B}_i(y^{n}_{i})=b_i \ for \ i \in \mathcal{T} \ \wedge \
			y^{n}_{\mathcal{T}} \in \mathcal{T}^{n}_{\pi_{\bar{Y}_{\mathcal{T}}|\bar{X}}}(x^n)
			\right)                                        \\
			&=\sum_{y^{n}_{\mathcal{T}} \in \mathcal{Y}^{n}_{\mathcal{T}}}
			\mathbbm{1}\left(
			y^{n}_{\mathcal{T}} \in \mathcal{T}^{n}_{\pi_{\bar{Y}_{\mathcal{T}}|\bar{X}}}(x^n)
			\right)
			\prod_{i=1}^{T}\mathbbm{1}(\mathcal{B}_i(y^{n}_{i})=b_i). \label{eq:13}
		\end{align}
		Taking expectation results in,
		\begin{align}
			\mathbbm{E}_{\mathcal{B}}\left\{N_{\pi_{\bar{Y}_{\mathcal{T}}|\bar{X}}}(x^n,b_{\mathcal{T}})\right\}
			&=
			\sum_{y^{n}_{\mathcal{T}} \in \mathcal{Y}^{n}_{\mathcal{T}}}
			\mathbbm{E}_{\mathcal{B}}\left\{
			\mathbbm{1}\left(
			y^{n}_{\mathcal{T}} \in \mathcal{T}^{n}_{\pi_{\bar{Y}_{\mathcal{T}}|\bar{X}}}(x^n)
			\right)
			\prod_{i=1}^{T}\mathbbm{1}(\mathcal{B}_i(y^{n}_{i})=b_i)
			\right\} \label{eq:12}                                                                 \\
			&=
			\sum_{y^{n}_{\mathcal{T}} \in \mathcal{Y}^{n}_{\mathcal{T}}}
			\mathbbm{1}\left(
			y^{n}_{\mathcal{T}} \in \mathcal{T}^{n}_{\pi_{\bar{Y}_{\mathcal{T}}|\bar{X}}}(x^n)
			\right)
			\mathbbm{E}_{\mathcal{B}}\left\{
			\prod_{i=1}^{T}\mathbbm{1}(\mathcal{B}_i(y^{n}_{i})=b_i)
			\right\} \label{eq:14}                                                                 \\
			&=
			\sum_{y^{n}_{\mathcal{T}} \in \mathcal{Y}^{n}_{\mathcal{T}}}
			\mathbbm{1}\left(
			y^{n}_{\mathcal{T}} \in \mathcal{T}^{n}_{\pi_{\bar{Y}_{\mathcal{T}}|\bar{X}}}(x^n)
			\right)
			p^{U}_{\mathcal{T}}                                                                          \\
			&=p^{U}_{\mathcal{T}}\abs{\mathcal{T}^{n}_{\pi_{\bar{Y}_{\mathcal{T}}|\bar{X}}}(x^n)} \label{eq:19},
		\end{align}
		where \eqref{eq:12} follows from \eqref{eq:13} and \eqref{eq:14} follows from the fact that the joint type of $(x^{n},y^{n}_{\mathcal{T}})$ is independent from the random mappings. Now if one defines
		\[
		\Gamma_{\pi_{\bar{Y}_{\mathcal{T}}|\bar{X}}}(y^{n}_{\mathcal{T}},x^n,b_{\mathcal{T}}) \coloneqq
		\mathbbm{1}\left(y^{n}_{\mathcal{T}} \in \mathcal{T}^{n}_{\pi_{\bar{Y}_{\mathcal{T}}|\bar{X}}}(x^n)\right)
		\prod_{i=1}^{T}\mathbbm{1}(\mathcal{B}_i(Y^{n}_{i})=b_i),
		\]
		by using \eqref{eq:13} and the identity regarding the variance of sum of random variables, .i.e., $\Var(\sum_{i}X_{i})=\sum_{i,j}\Cov(X_i,X_j)$, we have
		\begin{align}
			\Var&\left(N_{\pi_{\bar{Y}_{\mathcal{T}}|\bar{X}}}(x^n,b_{\mathcal{T}}) \right)
			= \Var\left(
			\sum_{y^{n}_{\mathcal{T}} \in \mathcal{Y}^{n}_{\mathcal{T}}}
			\Gamma_{\pi_{\bar{Y}_{\mathcal{T}}|\bar{X}}}(y^{n}_{\mathcal{T}},x^n,b_{\mathcal{T}})
			\right) \\
			&= \sum_{y^{n}_{\mathcal{T}},\tilde{y}^{n}_{\mathcal{T}} \in \mathcal{Y}^{n}_{\mathcal{T}}}
			\Cov\left(
			\Gamma_{\pi_{\bar{Y}_{\mathcal{T}}|\bar{X}}}(y^{n}_{\mathcal{T}},x^n,b_{\mathcal{T}}),
			\Gamma_{\pi_{\bar{Y}_{\mathcal{T}}|\bar{X}}}(\tilde{y}^{n}_{\mathcal{T}},x^n,b_{\mathcal{T}})
			\right) \\
			&= \sum_{y^{n}_{\mathcal{T}},\tilde{y}^{n}_{\mathcal{T}}}
			\mathbbm{1}\left(y^{n}_{\mathcal{T}},\tilde{y}^{n}_{\mathcal{T}} \in \mathcal{T}^{n}_{\pi_{\bar{Y}_{\mathcal{T}}|\bar{X}}}(x^n)\right)
			\Cov\left(
			\prod_{i=1}^{T}\mathbbm{1}(\mathcal{B}_i(y^{n}_{i})=b_i),
			\prod_{i=1}^{T}\mathbbm{1}(\mathcal{B}_i(\tilde{y}^{n}_{i})=b_i)
			\right).  \label{eq:15}
		\end{align}
		The bin assignment for distinct realizations of $y^{n}_{i} \in \mathcal{Y}^{n}_{i}$ for $i \in \mathcal{T}$ are done independently from each other. Therefore, the covariance terms in \eqref{eq:15} depend only on the subset $\mathcal{S} \subseteq \mathcal{T}$ where we have $y^{n}_{i}=\tilde{y}^{n}_{i}$ for $i \in \mathcal{S}$. It is only natural to partition the set $(y^{n}_{\mathcal{T}},\tilde{y}^{n}_{\mathcal{T}}) \in \mathcal{Y}^{n}_{\mathcal{T}} \times \mathcal{Y}^{n}_{\mathcal{T}}$ into the sets with the same $\mathcal{S}$ where they match,  specified as
		
		\begin{equation}
			\mathcal{K}_{\mathcal{S}} \coloneqq \left\{
			(y^{n}_{\mathcal{T}},\tilde{y}^{n}_{\mathcal{T}}) \in \mathcal{Y}^{n}_{\mathcal{T}} \times \mathcal{Y}^{n}_{\mathcal{T}} \ :
			y^{n}_{\mathcal{S}}=\tilde{y}^{n}_{\mathcal{S}} \ \wedge\ y^{n}_{i} \neq \tilde{y}^{n}_{i}, \ \forall i \in \mathcal{S}^{c}
			\right\}.
		\end{equation}
		Note that for all tuples in $\mathcal{K}_{\emptyset}$, all random mappings are independent, thus the covariance terms are zero. In other words, 
\begin{align}
	\Cov&\left(
			\prod_{i=1}^{T}\mathbbm{1}(\mathcal{B}_i(y^{n}_{i})=b_i),
			\prod_{i=1}^{T}\mathbbm{1}(\mathcal{B}_i(\tilde{y}^{n}_{i})=b_i)
			\right)
			= 0
			\quad \text{for every} \quad 
			(y^{n}_{\mathcal{T}},\tilde{y}^{n}_{\mathcal{T}}) \in \mathcal{K}_{\emptyset}.
\end{align}
For arbitrary $\mathcal{K}_{\mathcal{S}} \neq \mathcal{K}_{\emptyset}$ we can bound the covariance term of $(y^{n}_{\mathcal{T}},\tilde{y}^{n}_{\mathcal{T}}) \in \mathcal{K}_{\mathcal{S}}$ as
		\begin{align}
			\Cov&\left(
			\prod_{i=1}^{T}\mathbbm{1}(\mathcal{B}_i(y^{n}_{i})=b_i),
			\prod_{i=1}^{T}\mathbbm{1}(\mathcal{B}_i(\tilde{y}^{n}_{i})=b_i)
			\right) \leq
			\E_{\mathcal{B}} \left\{
			\prod_{i=1}^{T}\mathbbm{1}(\mathcal{B}_i(y^{n}_{i})=b_i)\mathbbm{1}(\mathcal{B}_i(\tilde{y}^{n}_{i})=b_i)
			\right\}                                                 \label{eq:63}                       
			 \\&\leq
			\E_{\mathcal{B}} \left\{
			\prod_{i \in \mathcal{S}}\mathbbm{1}(\mathcal{B}_i(y^{n}_{i})=b_i)
			\prod_{i \in\mathcal{S}^{c}}\mathbbm{1}(\mathcal{B}_i(y^{n}_{i})=b_i)\mathbbm{1}(\mathcal{B}_i(\tilde{y}^{n}_{i})=b_i)
			\right\}                                               \label{eq:64}                          
			\\& =
			\E_{\mathcal{B}} \left\{
			\prod_{i \in\mathcal{S}}\mathbbm{1}(\mathcal{B}_i(y^{n}_{i})=b_i) \right\}
			\E_{\mathcal{B}}\left\{
			\prod_{i \in\mathcal{S}^{c}}\mathbbm{1}(\mathcal{B}_i(y^{n}_{i})=b_i)
			\prod_{i \in\mathcal{S}^{c}}\mathbbm{1}(\mathcal{B}_i(\tilde{y}^{n}_{i})=b_i)\right\}
			\label{eq:65} 
			\\& =
			\E_{\mathcal{B}} \left\{
			\prod_{i \in\mathcal{S}}\mathbbm{1}(\mathcal{B}_i(y^{n}_{i})=b_i) \right\}
			\E_{\mathcal{B}}\left\{
			\prod_{i \in\mathcal{S}^{c}}\mathbbm{1}(\mathcal{B}_i(y^{n}_{i})=b_i)\right\}
			\E_{\mathcal{B}}\left\{
			\prod_{i \in\mathcal{S}^{c}}\mathbbm{1}(\mathcal{B}_i(\tilde{y}^{n}_{i})=b_i)\right\} 
			\label{eq:66} 
			\\& =
			p^{U}_{\mathcal{S}}\left(p^{U}_{\mathcal{S}^{c}}\right)^{2} \label{eq:16},
		\end{align}
		where \eqref{eq:63} follows since both the random variables $\prod_{i=1}^{T}\mathbbm{1}(\mathcal{B}_i(y^{n}_{i})=b_i)$ and $\prod_{i=1}^{T}\mathbbm{1}(\mathcal{B}_i(\tilde{y}^{n}_{i})=b_i)$ are non-negative, prompting the use of $\Cov\left(X,Y\right)=\E\left\{XY\right\}-\E\{X\}E\{Y\} \leq \E\left\{XY\right\}$ inequality. Also \eqref{eq:64} is valid because we have assumed that $(y^{n}_{\mathcal{T}},\tilde{y}^{n}_{\mathcal{T}}) \in \mathcal{K}_{\mathcal{S}}$, meaning that $y^{n}_{i}=\tilde{y}^{n}_{i}$ for $i \in \mathcal{S}$, and therefore, $\prod_{i \in\mathcal{S}}\mathbbm{1}(\mathcal{B}_i(y^{n}_{i})=b_i)\mathbbm{1}(\mathcal{B}_i(\tilde{y}^{n}_{i})=b_i)=\prod_{i \in \mathcal{S}}\mathbbm{1}(\mathcal{B}_i(y^{n}_{i})=b_i)$. The equation \eqref{eq:65} follows from the fact that random mappings $\mathcal{B}_{i}(\cdot)$ for $i \in \mathcal{S}$ are independent from $\mathcal{B}_{i}(\cdot)$ for $i \in \mathcal{S}^{c}$. Also \eqref{eq:66} follows since $(y^{n}_{\mathcal{T}},\tilde{y}^{n}_{\mathcal{T}}) \in \mathcal{K}_{\mathcal{S}}$ implies that $y^{n}_{i} \neq \tilde{y}^{n}_{i}$ for $i \in \mathcal{S}^{c}$, and thus, the bin assignment $\mathcal{B}_{i}(y^{n}_{i})$ is independent from $\mathcal{B}_{i}(\tilde{y}^{n}_{i})$ for $i \in \mathcal{S}^{c}$. \\
		Subsequently, we can bound the variance by substituting \eqref{eq:16} in \eqref{eq:15} which gives
		\begin{align}
			\Var\left(N_{\pi_{\bar{Y}_{\mathcal{T}}|\bar{X}}}(x^n,b_{\mathcal{T}}) \right) & \\
			\leq
			\sum_{\emptyset \neq \mathcal{S} \subseteq \mathcal{T}}  &
			\sum_{y^{n}_{\mathcal{T}},\tilde{y}^{n}_{\mathcal{T}} \in \mathcal{K}_{\mathcal{S}}}
			\mathbbm{1}\left(y^{n}_{\mathcal{T}},\tilde{y}^{n}_{\mathcal{T}} \in \mathcal{T}^{n}_{\pi_{\bar{Y}_{\mathcal{T}}|\bar{X}}}(x^n)\right)
			p^{U}_{\mathcal{S}}\left(p^{U}_{\mathcal{S}^{c}}\right)^{2}. \label{eq:17}
		\end{align}

		Note that the bound in \eqref{eq:17} is independent of the $b_{\mathcal{T}}$ and depends on $x^{n}$ only through its type. 
		
		Now for every $\pi_{\bar{Y}_{\mathcal{T}}|\bar{X}} \in \mathcal{P}_{n}(\mathcal{Y}_{\mathcal{T}}|\pi_{\bar{X}})$ one can employ the triangle inequality in the form of $\mathbbm{E}\abs{X-\mathbbm{E}X} \leq 2\mathbbm{E}\abs{X}$ to obtain
		\begin{align}
			\mathbbm{E}_{\mathcal{B}}\abs{
				Z_{\pi_{\bar{Y}_{\mathcal{T}}|\bar{X}}}(x^n,b_{\mathcal{T}}) -
				\mathbbm{E}_{\mathcal{B}}\left\{Z_{\pi_{\bar{Y}_{\mathcal{T}}|\bar{X}}}(x^n,b_{\mathcal{T}})\right\}} \label{eq:21} 
			& \leq
			2\mathbbm{E}_{\mathcal{B}}\left\{Z_{\pi_{\bar{Y}_{\mathcal{T}}|\bar{X}}}(x^n,b_{\mathcal{T}})\right\}              
			 \\& =
			\frac{2}{p^{U}_{\mathcal{T}}}l_{\pi_{\bar{Y}_{\mathcal{T}}|\bar{X}}}(x^n)\mathbbm{E}_{\mathcal{B}}\left\{N_{\pi_{\bar{Y}_{\mathcal{T}}|\bar{X}}}(x^n,b_{\mathcal{T}})\right\}, \label{eq:20}
		\end{align}
		for the non-negative random variable $Z_{\pi_{\bar{Y}_{\mathcal{T}}|\bar{X}}}(x^n,b_{\mathcal{T}})$. Substituting \eqref{eq:19} in \eqref{eq:20} we obtain
		\begin{equation}
			\mathbbm{E}_{\mathcal{B}}\abs{
				Z_{\pi_{\bar{Y}_{\mathcal{T}}|\bar{X}}}(x^n,b_{\mathcal{T}}) -
				\mathbbm{E}_{\mathcal{B}}\left\{Z_{\pi_{\bar{Y}_{\mathcal{T}}|\bar{X}}}(x^n,b_{\mathcal{T}})\right\}}
			\leq
			2l_{\pi_{\bar{Y}_{\mathcal{T}}|\bar{X}}}(x^n)\abs{\mathcal{T}^{n}_{\pi_{\bar{Y}_{\mathcal{T}}|\bar{X}}}(x^n)} \label{eq:23}.
		\end{equation}
	
		By use of the Jensen's inequality in the form of $\mathbbm{E}\abs{X-\mathbbm{E}X}=\mathbbm{E}\sqrt{(X-\mathbbm{E}X)^{2}} \leq\sqrt{\mathbbm{E}(X-\mathbbm{E}X)^{2}}$, one can bound the term in \eqref{eq:21} in another way as below
		\begin{align}
			\mathbbm{E}_{\mathcal{B}}&\abs{
				Z_{\pi_{\bar{Y}_{\mathcal{T}}|\bar{X}}}(x^n,b_{\mathcal{T}}) -
				\mathbbm{E}_{\mathcal{B}}\left\{Z_{\pi_{\bar{Y}_{\mathcal{T}}|\bar{X}}}(x^n,b_{\mathcal{T}})\right\}}                                             
			 \leq
			\sqrt{\Var\left(Z_{\pi_{\bar{Y}_{\mathcal{T}}|\bar{X}}}(x^n,b_{\mathcal{T}})\right)}
			\\& =  
			\sqrt{
				\left(
				\frac{l_{\pi_{\bar{Y}_{\mathcal{T}}|\bar{X}}}(x^n)}{p^{U}_{\mathcal{T}}}
				\right)^2
				\Var\left(N_{\pi_{\bar{Y}_{\mathcal{T}}|\bar{X}}}(x^n,b_{\mathcal{T}}) \right)} \label{eq:40}
			\\& \leq
			l_{\pi_{\bar{Y}_{\mathcal{T}}|\bar{X}}}(x^n)
			\sqrt{\sum_{\emptyset \neq \mathcal{S} \subseteq \mathcal{T}}
				\frac{1}{\left(p^{U}_{\mathcal{S}}p^{U}_{\mathcal{S}^{c}}\right)^{2}}
				\sum_{y^{n}_{\mathcal{T}},\tilde{y}^{n}_{\mathcal{T}} \in \mathcal{K}_{\mathcal{S}}}
				\mathbbm{1}\left(y^{n}_{\mathcal{T}},\tilde{y}^{n}_{\mathcal{T}} \in \mathcal{T}^{n}_{\pi_{\bar{Y}_{\mathcal{T}}|\bar{X}}}(x^n)\right)
				p^{U}_{\mathcal{S}}\left(p^{U}_{\mathcal{S}^{c}}\right)^{2}}                                                                     \label{eq:41}    
			\\& =
			l_{\pi_{\bar{Y}_{\mathcal{T}}|\bar{X}}}(x^n)
			\sqrt{\sum_{\emptyset \neq \mathcal{S} \subseteq \mathcal{T}}
				\frac{1}{p^{U}_{\mathcal{S}}}
				\sum_{y^{n}_{\mathcal{T}},\tilde{y}^{n}_{\mathcal{T}} \in \mathcal{K}_{\mathcal{S}}}
				\mathbbm{1}\left(y^{n}_{\mathcal{T}},\tilde{y}^{n}_{\mathcal{T}} \in \mathcal{T}^{n}_{\pi_{\bar{Y}_{\mathcal{T}}|\bar{X}}}(x^n)\right)} \label{eq:22}
		\end{align}
		where \eqref{eq:40} follows from the definition in \eqref{eq:35}, and \eqref{eq:41} obtained by using the results in \eqref{eq:17}. One can write the following upper bound for every $\mathcal{S} \subseteq \mathcal{T}$ and $x^{n} \in \mathcal{T}^{n}_{\pi_{\bar{X}}}$
		\begingroup
		\allowdisplaybreaks
		\begin{align}
			\sum_{y^{n}_{\mathcal{T}},\tilde{y}^{n}_{\mathcal{T}} \in \mathcal{K}_{\mathcal{S}}}&
			\mathbbm{1}\left(y^{n}_{\mathcal{T}},\tilde{y}^{n}_{\mathcal{T}} \in \mathcal{T}^{n}_{\pi_{\bar{Y}_{\mathcal{T}}|\bar{X}}}(x^n)\right)
			\leq
			\sum_{y^{n}_{\mathcal{T}},\tilde{y}^{n}_{\mathcal{T}} \in \mathcal{Y}^{n}_{\mathcal{T}} \times \mathcal{Y}^{n}_{\mathcal{T}}}
			\mathbbm{1}\left(y^{n}_{\mathcal{S}}=\tilde{y}^{n}_{\mathcal{S}}\right)
			\mathbbm{1}\left(y^{n}_{\mathcal{T}},\tilde{y}^{n}_{\mathcal{T}} \in \mathcal{T}^{n}_{\pi_{\bar{Y}_{\mathcal{T}}|\bar{X}}}(x^n)\right) \label{eq:25}                                               
			\\& =
			\sum_{y^{n}_{\mathcal{T}},\tilde{y}^{n}_{\mathcal{T}}}
			\mathbbm{1}\left(y^{n}_{\mathcal{S}}=\tilde{y}^{n}_{\mathcal{S}}\right)
			\mathbbm{1}\left(y^{n}_{\mathcal{T}},\tilde{y}^{n}_{\mathcal{T}} \in \mathcal{T}^{n}_{\pi_{\bar{Y}_{\mathcal{T}}|\bar{X}}}(x^n)\right)
			\mathbbm{1}^{2}\left(x^{n} \in \mathcal{T}^{n}_{\pi_{\bar{X}}}\right) \label{eq:26}                 
			 \\& =
			\sum_{y^{n}_{\mathcal{T}},\tilde{y}^{n}_{\mathcal{T}}}
			\mathbbm{1}\left(y^{n}_{\mathcal{S}}=\tilde{y}^{n}_{\mathcal{S}}\right)
			\mathbbm{1}\left((y^{n}_{\mathcal{T}},x^n) \in \mathcal{T}^{n}_{\pi_{\bar{Y}_{\mathcal{T}},\bar{X}}}\right)
			\mathbbm{1}\left((\tilde{y}^{n}_{\mathcal{T}},x^n) \in \mathcal{T}^{n}_{\pi_{\bar{Y}_{\mathcal{T}},\bar{X}}}\right) \label{eq:27}                                                                  
			\\&
			\begin{aligned}
				=
				\sum_{y^{n}_{\mathcal{T}},\tilde{y}^{n}_{\mathcal{T}} \in \mathcal{T}^{n}_{\pi_{\bar{Y}_{\mathcal{T}}}} \times \mathcal{T}^{n}_{\pi_{\bar{Y}_{\mathcal{T}}}}}
				&
				\mathbbm{1}\left(y^{n}_{\mathcal{S}}=\tilde{y}^{n}_{\mathcal{S}}\right)
				\mathbbm{1}\left((y^{n}_{\mathcal{S}},x^n) \in \mathcal{T}^{n}_{\pi_{\bar{Y}_{\mathcal{S}},\bar{X}}}\right)
				\mathbbm{1}\left(y^{n}_{\mathcal{S}^{c}} \in \mathcal{T}^{n}_{\pi_{\bar{Y}_{\mathcal{S}^{c}}}|\pi_{\bar{Y}_{\mathcal{S}},\bar{X}}}(y^{n}_{\mathcal{S}},x^{n})\right) 
				\\& \times
				\mathbbm{1}\left((\tilde{y}^{n}_{\mathcal{S}},x^n) \in \mathcal{T}^{n}_{\pi_{\bar{Y}_{\mathcal{S}},\bar{X}}}\right)
				\mathbbm{1}\left(\tilde{y}^{n}_{\mathcal{S}^{c}} \in \mathcal{T}^{n}_{\pi_{\bar{Y}_{\mathcal{S}^{c}}}|\pi_{\bar{Y}_{\mathcal{S}},\bar{X}}}(\tilde{y}^{n}_{\mathcal{S}},x^{n})\right)
			\end{aligned} \label{eq:28}                                                                       \\&
			\begin{aligned}
				&=
				\sum_{y^{n}_{\mathcal{S}},\tilde{y}^{n}_{\mathcal{S}} \in \mathcal{T}^{n}_{\pi_{\bar{Y}_{\mathcal{S}}}} \times \mathcal{T}^{n}_{\pi_{\bar{Y}_{\mathcal{S}}}}}
				\mathbbm{1}\left(y^{n}_{\mathcal{S}}=\tilde{y}^{n}_{\mathcal{S}}\right)
				\mathbbm{1}\left((y^{n}_{\mathcal{S}},x^n) \in \mathcal{T}^{n}_{\pi_{\bar{Y}_{\mathcal{S}},\bar{X}}}\right)
				\mathbbm{1}\left((\tilde{y}^{n}_{\mathcal{S}},x^n) \in \mathcal{T}^{n}_{\pi_{\bar{Y}_{\mathcal{S}},\bar{X}}}\right)
				\\& \times
				\sum_{y^{n}_{\mathcal{S}^{c}},\tilde{y}^{n}_{\mathcal{S}^{c}}\in \mathcal{T}^{n}_{\pi_{\bar{Y}_{\mathcal{S}^{c}}}} \times \mathcal{T}^{n}_{\pi_{\bar{Y}_{\mathcal{S}^{c}}}}}
				\mathbbm{1}\left(y^{n}_{\mathcal{S}^{c}} \in \mathcal{T}^{n}_{\pi_{\bar{Y}_{\mathcal{S}^{c}}}|\pi_{\bar{Y}_{\mathcal{S}},\bar{X}}}(y^{n}_{\mathcal{S}},x^{n})\right)
				\mathbbm{1}\left(\tilde{y}^{n}_{\mathcal{S}^{c}} \in \mathcal{T}^{n}_{\pi_{\bar{Y}_{\mathcal{S}^{c}}}|\pi_{\bar{Y}_{\mathcal{S}},\bar{X}}}(\tilde{y}^{n}_{\mathcal{S}},x^{n})\right)
			\end{aligned}
			\\&
			\begin{aligned}
			= &\sum_{y^{n}_{\mathcal{S}} \in \mathcal{T}^{n}_{\pi_{\bar{Y}_{\mathcal{S}}}}} 
			\mathbbm{1}\left((y^{n}_{\mathcal{S}},x^n) \in \mathcal{T}^{n}_{\pi_{\bar{Y}_{\mathcal{S}},\bar{X}}}\right)
			\\& \times
			\sum_{y^{n}_{\mathcal{S}^{c}} \in \mathcal{T}^{n}_{\pi_{\bar{Y}_{\mathcal{S}^{c}}}}}
			\mathbbm{1}\left(y^{n}_{\mathcal{S}^{c}}\in \mathcal{T}^{n}_{\pi_{\bar{Y}_{\mathcal{S}^{c}}}|\pi_{\bar{Y}_{\mathcal{S}},\bar{X}}}(y^{n}_{\mathcal{S}},x^{n})\right)
			\sum_{\tilde{y}^{n}_{\mathcal{S}^{c}} \in \mathcal{T}^{n}_{\pi_{\bar{Y}_{\mathcal{S}^{c}}}}}
			\mathbbm{1}\left(\tilde{y}^{n}_{\mathcal{S}^{c}}\in \mathcal{T}^{n}_{\pi_{\bar{Y}_{\mathcal{S}^{c}}}|\pi_{\bar{Y}_{\mathcal{S}},\bar{X}}}(y^{n}_{\mathcal{S}},x^{n})\right)
			\end{aligned}
			\\& =
			\frac{\abs{\mathcal{T}^{n}_{\pi_{\bar{Y}_{\mathcal{S}},\bar{X}}}}}{\abs{\mathcal{T}^{n}_{\pi_{\bar{X}}}}} \times
			\left(\frac{\abs{\mathcal{T}^{n}_{\pi_{\bar{Y}_{\mathcal{T}},\bar{X}}}}}{\abs{\mathcal{T}^{n}_{\pi_{\bar{Y}_{\mathcal{S}},\bar{X}}}}}\right)^{2}
			\label{eq:29}
			\\& =
			\frac{\abs{\mathcal{T}^{n}_{\pi_{\bar{Y}_{\mathcal{T}},\bar{X}}}}^2}{\abs{\mathcal{T}^{n}_{\pi_{\bar{X}}}}\abs{\mathcal{T}^{n}_{\pi_{\bar{Y}_{\mathcal{S}},\bar{X}}}}}. 
		\end{align}
	\endgroup
		The inequality in \eqref{eq:25} follows from relaxing the constraints in $\mathcal{K}_{\mathcal{S}}$ by waiving the $y^{n}_{i} \neq \tilde{y}^{n}_{i}$ requirement for $i \in \mathcal{S}^{c}$. \eqref{eq:26} follows from the assumption that $x^{n} \in \mathcal{T}^{n}_{\pi_{\bar{X}}}$.  \eqref{eq:27} is the result of $\mathbbm{1}\left((x^n,y^n) \in \mathcal{T}^{n}_{\pi_{\bar{X},\bar{Y}}}\right)=\mathbbm{1}\left(x^n \in \mathcal{T}^{n}_{\pi_{\bar{X}}}\right)\mathbbm{1}\left(y^n \in \mathcal{T}^{n}_{\pi_{\bar{Y}|\bar{X}}}(x^n)\right)$ identity. \eqref{eq:28} is also another application of this identity and the fact that $\mathcal{Y}_{\mathcal{T}}$- and $\mathcal{X}$-marginals of $\pi_{\bar{X},\bar{Y}_{\mathcal{T}}}$ are fixed to be $\pi_{\bar{Y}_{\mathcal{T}}}$ and $\pi_{\bar{X}}$. \eqref{eq:29} follows from \cite[Lemma 15]{yagli19}.
		
		Now that we have two distinct upper bounds for the deviations of $Z_{\pi_{\bar{Y}_{[1:T]}|\bar{X}}}(x^n,b_{\mathcal{T}})$, this term would be less than their minimum. In other words, by combining the results from \eqref{eq:23} and \eqref{eq:22} we attain,
		\begin{align}
			&\mathbbm{E}_{\mathcal{B}}\abs{
				Z_{\pi_{\bar{Y}_{[1:T]}|\bar{X}}}(x^n,b_{\mathcal{T}}) -
				\mathbbm{E}_{\mathcal{B}}\left\{Z_{\pi_{\bar{Y}_{\mathcal{T}}|\bar{X}}}(x^n,b_{\mathcal{T}})\right\}} 
			\\& \leq
			2l_{\pi_{\bar{Y}_{\mathcal{T}}|\bar{X}}}(x^n)\abs{\mathcal{T}^{n}_{\pi_{\bar{Y}_{\mathcal{T}}|\bar{X}}}(x^n)}
			\min\left\{
			1, \
			\frac{1}{2}
			\sqrt{\sum_{\emptyset \neq \mathcal{S}}
				\frac{1}{p^{U}_{\mathcal{S}}} \times
				\frac{\sum_{y^{n}_{\mathcal{T}},\tilde{y}^{n}_{\mathcal{T}} \in \mathcal{K}_{\mathcal{S}}}
					\mathbbm{1}\left(y^{n}_{\mathcal{T}},\tilde{y}^{n}_{\mathcal{T}} \in \mathcal{T}^{n}_{\pi_{\bar{Y}_{\mathcal{T}}|\bar{X}}}(x^n)\right)}
				{\abs{\mathcal{T}^{n}_{\pi_{\bar{Y}_{\mathcal{T}}|\bar{X}}}(x^n)}^{2}}}
			\right\}                                                                                            
			\\& \leq
			2 \! \times \! 2^{-n\left(
				\infdiv{\pi_{\bar{Y}_{\mathcal{T}}|\bar{X}}}{p_{Y_{\mathcal{T}}|X}|\pi_{\bar{X}}}
				\right)}
			\min\left\{
			1, \
			\frac{1}{2}
			\sqrt{\sum_{\emptyset \neq \mathcal{S}}
				\frac{1}{p^{U}_{\mathcal{S}}} \times
				\frac{\sum_{y^{n}_{\mathcal{T}},\tilde{y}^{n}_{\mathcal{T}} \in \mathcal{K}_{\mathcal{S}}}
					\mathbbm{1}\left(y^{n}_{\mathcal{T}},\tilde{y}^{n}_{\mathcal{T}} \in \mathcal{T}^{n}_{\pi_{\bar{Y}_{\mathcal{T}}|\bar{X}}}(x^n)\right)}
				{\abs{\mathcal{T}^{n}_{\pi_{\bar{Y}_{\mathcal{T}}|\bar{X}}}(x^n)}^{2}}}
			\right\}                                                                                            
			\\& \leq
			2 \times 2^{-n\left(
				\infdiv{\pi_{\bar{Y}_{\mathcal{T}}|\bar{X}}}{p_{Y_{\mathcal{T}}|X}|\pi_{\bar{X}}}
				\right)}
			\min\left\{
			1, \
			\frac{1}{2}
			\sqrt{\sum_{\emptyset \neq \mathcal{S}}
				\frac{1}{p^{U}_{\mathcal{S}}} \times
				\frac{1}{\abs{\mathcal{T}^{n}_{\pi_{\bar{Y}_{\mathcal{S}}|\bar{X}}}(x^n)}}}
			\right\} \label{eq:30}                                                                              
			\\& \leq
			2 \times 2^{-n\left(
				\infdiv{\pi_{\bar{Y}_{\mathcal{T}}|\bar{X}}}{p_{Y_{\mathcal{T}}|X}|\pi_{\bar{X}}}
				\right)}
			\min\left\{
			1, \
			\frac{1}{2}
			\sqrt{\sum_{\emptyset \neq \mathcal{S}}
				2^{n\sum_{i \in \mathcal{S}}R_{i}} \times
				2^{-n\left(\cond{\bar{Y}_{\mathcal{S}}}{\bar{X}} - \abs{\mathcal{X}}\abs{\mathcal{Y}_{\mathcal{S}}}
					\frac{\log(n+1)}{n}\right)
			}}
			\right\}, \label{eq:31}
		\end{align}
		where \eqref{eq:30} follows from \eqref{eq:29} and \eqref{eq:31} follows from \cite[Lemma 2.5]{csiszar11} and the definition of $p^{U}_{\mathcal{S}}$. One can bound the minimum term as,
		\begin{align}
			\min&\left\{
			1, \
			\frac{1}{2}
			\sqrt{\sum_{\emptyset \neq \mathcal{S}}
				2^{n\sum_{i \in \mathcal{S}}R_{i}} \times
				2^{-n\left(\cond{\bar{Y}_{\mathcal{S}}}{\bar{X}} - \abs{\mathcal{X}}\abs{\mathcal{Y}_{\mathcal{S}}}
					\frac{\log(n+1)}{n}\right)
			}}
			\right\} 
			\\& \leq 
			\min\left\{
			1, \
			\sqrt{
				2^{T}
				\max_{\emptyset \neq \mathcal{S}}
				2^{n\sum_{i \in \mathcal{S}}R_{i}} \times
				2^{-n\left(\cond{\bar{Y}_{\mathcal{S}}}{\bar{X}} - \abs{\mathcal{X}}\abs{\mathcal{Y}_{\mathcal{S}}}
					\frac{\log(n+1)}{n}\right)
			}}
			\right\} 
			\\& =
			\min\left\{
			1, \
			2^{-\frac{n}{2}\min_{\emptyset \neq \mathcal{S}}\left(\cond{\bar{Y}_{\mathcal{S}}}{\bar{X}} -
				\sum_{i \in \mathcal{S}}R_{i} -
				\abs{\mathcal{X}}\abs{\mathcal{Y}_{\mathcal{S}}} 
				\frac{\log(n+1)}{n}-
				\frac{T}{n}\right)
			}
			\right\} 
			\\& =
			2^{-\frac{n}{2}\left[\min_{\emptyset \neq \mathcal{S}}\left(\cond{\bar{Y}_{\mathcal{S}}}{\bar{X}} -
				\sum_{i \in \mathcal{S}}R_{i} -
				\abs{\mathcal{X}}\abs{\mathcal{Y}_{\mathcal{S}}} 
				\frac{\log(n+1)}{n}-
				\frac{T}{n}\right)\right]^{+}} 
			\\& =
			2^{-\frac{n}{2}\left[\min_{\emptyset \neq \mathcal{S}}\left(\cond{\bar{Y}_{\mathcal{S}}}{\bar{X}} -
				\sum_{i \in \mathcal{S}}R_{i} -
				\delta^{\mathcal{S}}_{n}
				\right)\right]^{+}},
			\label{eq:32}
		\end{align}
		where $\delta^{\mathcal{S}}_{n} \coloneqq \abs{\mathcal{X}}\abs{\mathcal{Y}_{\mathcal{S}}} 
		\frac{\log(n+1)}{n}+
		\frac{T}{n}$ converges to zero as $n \rightarrow \infty$.\\
		\begingroup
		\allowdisplaybreaks
		By combination of \eqref{eq:8}, \eqref{eq:10}, \eqref{eq:31} and \eqref{eq:32} we conclude that 
		\begin{align}
				\mathbbm{E}_{\mathcal{B}}&\norm{P(x^{n},b_{\mathcal{T}}) - \mathbbm{E}_{\mathcal{B}}P(x^{n},b_{\mathcal{T}})}_{TV} 
				\\& \leq
				\sum_{x^{n},b_{\mathcal{T}}}p(x^{n})p^{U}_{\mathcal{T}}
				\sum_{\pi_{\bar{Y}_{\mathcal{T}}|\bar{X}}}
				2^{-n\left(
					\infdiv{\pi_{\bar{Y}_{\mathcal{T}}|\bar{X}}}{p_{Y_{\mathcal{T}}|X}|\pi_{\bar{X}}}
					\right)}
				\times
				2^{-\frac{n}{2}\left[\min_{\emptyset \neq \mathcal{S}}\left(\cond{\bar{Y}_{\mathcal{S}}}{\bar{X}} -
					\sum_{i \in \mathcal{S}}R_{i} -
					\delta_{n}
					\right)\right]^{+}}  \label{eq:67} 
		\\& =
			\sum_{x^{n}}p(x^{n})
			\sum_{\pi_{\bar{Y}_{\mathcal{T}}|\bar{X}}}
			2^{-n\left(
				\infdiv{\pi_{\bar{Y}_{\mathcal{T}}|\bar{X}}}{p_{Y_{\mathcal{T}}|X}|\pi_{\bar{X}}}
				\right)}
			2^{-\frac{n}{2}\left[\min_{\emptyset \neq \mathcal{S}}\left(\cond{\bar{Y}_{\mathcal{S}}}{\bar{X}} -
				\sum_{i \in \mathcal{S}}R_{i} -
				\delta_{n}
				\right)\right]^{+}} \label{eq:33}
			\\&
			\begin{aligned}
				\leq
				\sum_{\pi_{\bar{X}}}2^{-n\infdiv{\pi_{\bar{X}}}{p_{X}}}
				\sum_{\pi_{\bar{Y}_{\mathcal{T}}|\bar{X}}}
				2^{-n\left(
					\infdiv{\pi_{\bar{Y}_{\mathcal{T}}|\bar{X}}}{p_{Y_{\mathcal{T}}|X}|\pi_{\bar{X}}}
					\right)}
				\times
				2^{-\frac{n}{2}\left[\min_{\emptyset \neq \mathcal{S}}\left(\cond{\bar{Y}_{\mathcal{S}}}{\bar{X}} -
					\sum_{i \in \mathcal{S}}R_{i} -
					\delta_{n}
					\right)\right]^{+}} \label{eq:34}
			\end{aligned} 
		\\& =
			\sum_{\pi_{\bar{Y}_{\mathcal{T}},\bar{X}}}
			2^{-n\left(
				\infdiv{\pi_{\bar{Y}_{\mathcal{T}},\bar{X}}}{p_{Y_{\mathcal{T}},X}}
				\right)}
			\times
			2^{-\frac{n}{2}\left[\min_{\emptyset \neq \mathcal{S}}\left(\cond{\bar{Y}_{\mathcal{S}}}{\bar{X}} -
				\sum_{i \in \mathcal{S}}R_{i} -
				\delta_{n}
				\right)\right]^{+}}
			 \\& \leq
			\left(
			n\!+\!1
			\right)^{\abs{\mathcal{X}}\abs{\mathcal{Y}_{\mathcal{T}}}}
			\max_{\pi_{\bar{Y}_{\mathcal{T}},\bar{X}} \in \mathcal{P}_{n}(\mathcal{Y}_{\mathcal{T}}\times\mathcal{X})}\left\{
			2^{-n\left(
				\infdiv{\pi_{\bar{Y}_{\mathcal{T}},\bar{X}}}{p_{Y_{\mathcal{T}},X}}
				-\frac{1}{2}\left[\min_{\emptyset \neq \mathcal{S}}\left(\cond{\bar{Y}_{\mathcal{S}}}{\bar{X}} -
				\sum_{i \in \mathcal{S}}R_{i} -
				\delta_{n}
				\right)\right]^{+}
				\right)} 
			\right\}  \label{eq:82}
			\\& \leq
			\left(
			n \!+\! 1
			\right)^{\abs{\mathcal{X}}\abs{\mathcal{Y}_{\mathcal{T}}}}
			\max_{\pi_{\bar{Y}_{\mathcal{T}},\bar{X}} \in \mathcal{P}(\mathcal{Y}_{\mathcal{T}}\times\mathcal{X})}\left\{
			2^{-n\left(
				\infdiv{\pi_{\bar{Y}_{\mathcal{T}},\bar{X}}}{p_{Y_{\mathcal{T}},X}}
				-\frac{1}{2}\left[\min_{\emptyset \neq \mathcal{S}}\left(\cond{\bar{Y}_{\mathcal{S}}}{\bar{X}} -
				\sum_{i \in \mathcal{S}}R_{i} -
				\delta_{n}
				\right)\right]^{+}
				\right)}
			\right\} 
			\\& =
			2^{-n \times
				\min_{\pi_{\bar{Y}_{\mathcal{T}},\bar{X}} \in \mathcal{P}(\mathcal{Y}_{\mathcal{T}}\times\mathcal{X})}\left\{
				\infdiv{\pi_{\bar{Y}_{\mathcal{T}},\bar{X}}}{p_{Y_{\mathcal{T}},X}}
				-\frac{1}{2}\left[\min_{\emptyset \neq \mathcal{S}}\left(\cond{\bar{Y}_{\mathcal{S}}}{\bar{X}} -
				\sum_{i \in \mathcal{S}}R_{i} -
				\delta_{n}
				\right)\right]^{+}
				- \epsilon_{n}
				\right\}}, \label{eq:83}
		\end{align}
		where $\epsilon_{n} \coloneqq \abs{\mathcal{X}}\abs{\mathcal{Y}_{\mathcal{T}}} 
		\frac{\log(n+1)}{n}$ converges to zero as $n \rightarrow \infty$. \eqref{eq:33} follows because none of the terms in \eqref{eq:67} depends on the specific realization of $b_{\mathcal{T}}$ since all of them are upper bounds that we have obtained in previous parts of the proof. Also \eqref{eq:34} is achieved by partitioning the set of $x^{n} \in \mathcal{X}^{n}$ and using \cite[Lemma 2.6]{csiszar11}.
		\endgroup
		
		\section{Proof of \thref{OSRBnewconditioned}}\label{OSRBnewconditionedProof}
		The proof of \thref{OSRBnewconditioned} is almost identical to the proof of \thref{OSRBnew}, since the steps taken in \eqref{eq:2}-\eqref{eq:33} do not concern themselves with the particular characteristics of $p(x^{n})$, as long as the conditional distribution $p(y^{n}_{[1:T]}|x^{n})$ remains the same. This is indeed the case for \thref{OSRBnewconditioned}, provided that the sources in the problem form a Markov chain, i.e., $Z^{n} \leftrightarrow X^{n} \leftrightarrow Y^{n}_{[1:T]}$, and therefore,
		\[
		p(y^{n}_{[1:T]},x^{n},z^{n})=p(z^{n})p(x^{n}|z^{n})p(y^{n}_{[1:T]}|x^{n}).
		\]
		Knowing that the steps in \eqref{eq:2}-\eqref{eq:33} remain the same, we can proceed by reminding that,
		\begin{align}
			\begin{aligned}
				\mathbbm{E}_{\mathcal{B}}&\norm{P(x^{n},b_{\mathcal{T}}) - \mathbbm{E}_{\mathcal{B}}P(x^{n},b_{\mathcal{T}})}_{TV} 
				\\& \leq
				\frac{1}{2}
				\sum_{x^{n},b_{\mathcal{T}}}p(x^{n})p^{U}_{\mathcal{T}}
				\sum_{\pi_{\bar{Y}_{\mathcal{T}}|\bar{X}} \in \mathcal{P}_{n}(\mathcal{Y}_{\mathcal{T}}|\pi_{\bar{X}})}
				\mathbbm{E}_{\mathcal{B}}\abs{
					Z_{\pi_{\bar{Y}_{\mathcal{T}}|\bar{X}}}(x^n,b_{\mathcal{T}}) 
					\!-\!
					\mathbbm{E}_{\mathcal{B}}\left\{Z_{\pi_{\bar{Y}_{\mathcal{T}}|\bar{X}}}(x^n,b_{\mathcal{T}})\right\}
				},
			\end{aligned}
		\end{align} 
		where $Z_{\pi_{\bar{Y}_{\mathcal{T}}|\bar{X}}}(x^n,b_{\mathcal{T}})$ is defined in \eqref{eq:81}. Using the concentration properties we have acquired for $Z_{\pi_{\bar{Y}_{\mathcal{T}}|\bar{X}}}(x^n,b_{\mathcal{T}})$ in \eqref{eq:31}, we can write,
		\begin{align}
				\mathbbm{E}_{\mathcal{B}}&\norm{P(x^{n},b_{\mathcal{T}}) - \mathbbm{E}_{\mathcal{B}}P(x^{n},b_{\mathcal{T}})}_{TV} 
				\\& \leq
				\sum_{x^{n},b_{\mathcal{T}}}p(x^{n})p^{U}_{\mathcal{T}}
				\sum_{\pi_{\bar{Y}_{\mathcal{T}}|\bar{X}}}
				2^{-n\left(
					\infdiv{\pi_{\bar{Y}_{\mathcal{T}}|\bar{X}}}{p_{Y_{\mathcal{T}}|X}|\pi_{\bar{X}}}
					\right)}
				\times
				2^{-\frac{n}{2}\left[\min_{\emptyset \neq \mathcal{S}}\left(\cond{\bar{Y}_{\mathcal{S}}}{\bar{X}} -
					\sum_{i \in \mathcal{S}}R_{i} -
					\delta_{n}
					\right)\right]^{+}}
			\\& =
			\sum_{x^{n}}p(x^{n})
			\sum_{\pi_{\bar{Y}_{\mathcal{T}}|\bar{X}}}
			2^{-n\left(
				\infdiv{\pi_{\bar{Y}_{\mathcal{T}}|\bar{X}}}{p_{Y_{\mathcal{T}}|X}|\pi_{\bar{X}}}
				\right)}
			2^{-\frac{n}{2}\left[\min_{\emptyset \neq \mathcal{S}}\left(\cond{\bar{Y}_{\mathcal{S}}}{\bar{X}} -
				\sum_{i \in \mathcal{S}}R_{i} -
				\delta_{n}
				\right)\right]^{+}}
			\\& =
			\sum_{x^{n} \in \mathcal{X}^{n}}\sum_{z^{n} \in \mathcal{T}^{n}_{p_{\bar{Z}}}}p(x^{n},z^{n})
			\sum_{\pi_{\bar{Y}_{\mathcal{T}}|\bar{X}}}
			2^{-n\left(
				\infdiv{\pi_{\bar{Y}_{\mathcal{T}}|\bar{X}}}{p_{Y_{\mathcal{T}}|X}|\pi_{\bar{X}}}
				\right)}
			2^{-\frac{n}{2}\left[\min_{\emptyset \neq \mathcal{S}}\left(\cond{\bar{Y}_{\mathcal{S}}}{\bar{X}} -
				\sum_{i \in \mathcal{S}}R_{i} -
				\delta_{n}
				\right)\right]^{+}}
			\\& =
			\sum_{z^{n} \in \mathcal{T}^{n}_{p_{\bar{Z}}}}p(z^{n})
			\sum_{x^{n}}p(x^{n}|z^{n})
			\sum_{\pi_{\bar{Y}_{\mathcal{T}}|\bar{X}}}
			2^{-n\left(
				\infdiv{\pi_{\bar{Y}_{\mathcal{T}}|\bar{X}}}{p_{Y_{\mathcal{T}}|X}|\pi_{\bar{X}}}
				\right)}
			2^{-\frac{n}{2}\left[\min_{\emptyset \neq \mathcal{S}}\left(\cond{\bar{Y}_{\mathcal{S}}}{\bar{X}} -
				\sum_{i \in \mathcal{S}}R_{i} -
				\delta_{n}
				\right)\right]^{+}}
			\\
			&\begin{aligned}
			\leq
			\sum_{z^{n} \in \mathcal{T}^{n}_{p_{\bar{Z}}}}p(z^{n})&
			\sum_{\pi_{\bar{X}|\bar{Z} \in \mathcal{P}_{n}(\mathcal{X}|p_{\bar{Z}})}}
			2^{-n\left(
				\infdiv{\pi_{\bar{X}|\bar{Z}}}{p_{X|Z}|p_{\bar{Z}}}
				\right)} 
			\\& \times
			\sum_{\pi_{\bar{Y}_{\mathcal{T}}|\bar{X}}}
			2^{-n\left(
				\infdiv{\pi_{\bar{Y}_{\mathcal{T}}|\bar{X}}}{p_{Y_{\mathcal{T}}|X}|\pi_{\bar{X}}}
				\right)}
			2^{-\frac{n}{2}\left[\min_{\emptyset \neq \mathcal{S}}\left(\cond{\bar{Y}_{\mathcal{S}}}{\bar{X}} -
				\sum_{i \in \mathcal{S}}R_{i} -
				\delta_{n}
				\right)\right]^{+}}
			\end{aligned}
		\\& =
		\sum_{z^{n} \in \mathcal{T}^{n}_{p_{\bar{Z}}}}p(z^{n})
		\sum_{\pi_{\bar{Y}_{\mathcal{T}},\bar{X}|\bar{Z}}}
		2^{-n\left(
			\infdiv{\pi_{\bar{Y}_{\mathcal{T}},\bar{X}|\bar{Z}}}{p_{Y_{\mathcal{T}},X|Z}|p_{\bar{Z}}}
			\right)}
		2^{-\frac{n}{2}\left[\min_{\emptyset \neq \mathcal{S}}\left(\cond{\bar{Y}_{\mathcal{S}}}{\bar{X}} -
			\sum_{i \in \mathcal{S}}R_{i} -
			\delta_{n}
			\right)\right]^{+}}
		\\& =
		\sum_{\pi_{\bar{Y}_{\mathcal{T}},\bar{X}|\bar{Z}}}
		2^{-n\left(
			\infdiv{\pi_{\bar{Y}_{\mathcal{T}},\bar{X}|\bar{Z}}}{p_{Y_{\mathcal{T}},X|Z}|p_{\bar{Z}}}
			\right)}
		2^{-\frac{n}{2}\left[\min_{\emptyset \neq \mathcal{S}}\left(\cond{\bar{Y}_{\mathcal{S}}}{\bar{X}} -
			\sum_{i \in \mathcal{S}}R_{i} -
			\delta_{n}
			\right)\right]^{+}},
		\end{align}
		where $\epsilon_{n} \coloneqq \abs{\mathcal{X}}\abs{\mathcal{Y}_{\mathcal{T}}} 
		\frac{\log(n+1)}{n}$ goes to zero as $n \rightarrow \infty$. Now by using the same reasoning as \eqref{eq:82}-\eqref{eq:83}, the proof will be concluded.

	\endgroup
	\end{appendices}

	\bibliographystyle{IEEEtranN}
	\clearpage
	\bibliography{References}

\end{document}